\titleformat{\section}{\bfseries\scshape\Large}{\thesection}{1em}{}{}
\titleformat*{\subsection}{\scshape\bfseries\large}
\numberwithin{equation}{section}
\newtheorem{thm}{Theorem}[section]
\newtheorem{lem}[thm]{Lemma}
\newtheorem{prop}[thm]{Proposition}
\newtheorem{cor}[thm]{Corollary}
\theoremstyle{definition}
\renewcommand*{\thehyp}{\Alph{hyp}}
\theoremstyle{remark}
\newtheorem{rem}[thm]{Remark}
\crefname{hyp}{Hypothesis}{Hypotheses}
\Crefname{hyp}{Hypothesis}{Hypotheses}
\crefname{lem}{Lemma}{Lemmas}
\Crefname{lem}{Lemma}{Lemmas}
\crefname{thm}{Theorem}{Theorems}
\Crefname{thm}{Theorem}{Theorems}
\crefname{prop}{Proposition}{Propositions}
\Crefname{prop}{Proposition}{Propositions}
\crefname{enumi}{}{}
\Crefname{enumi}{}{}
\crefname{equation}{}{}
\Crefname{equation}{}{}
\crefname{rem}{Remark}{Remarks}
\Crefname{rem}{Remark}{Remarks}
\renewcommand{\@upn}{} 
\patchcmd{\endthm}{\@endpefalse}{}{}{}
\patchcmd{\endcor}{\@endpefalse}{}{}{}
\patchcmd{\endlem}{\@endpefalse}{}{}{}
\patchcmd{\endprop}{\@endpefalse}{}{}{}
\patchcmd{\endproof}{\@endpefalse}{}{}{}
\newlist{enumthm}{enumerate}{1} 
\setlist[enumthm]{label=\upshape(\roman*),ref=\thethm~(\roman*)}  
\newlist{enumcor}{enumerate}{1}
\setlist[enumcor]{label=\upshape(\roman*),ref=\thecor~(\roman*)}
\newlist{enumlem}{enumerate}{1}
\setlist[enumlem]{label=\upshape(\roman*),ref=\thelem~(\roman*)}
\newlist{enumprop}{enumerate}{1}
\setlist[enumprop]{label=\upshape(\roman*),ref=\theprop~(\roman*)}
\newlist{enumhyp}{enumerate}{1}
\setlist[enumhyp]{label=\upshape(\roman*),ref=\thehyp~(\roman*)}
\newlist{enumproof}{enumerate*}{1}
\setlist[enumproof]{label=\upshape(\roman*)}
\newlist{enumdef}{enumerate}{1}
\setlist[enumdef]{label=\upshape(\roman*),ref=\thedefn~(\roman*)}
\newlist{enumrem}{enumerate}{1}
\setlist[enumrem]{label=\upshape(\roman*),ref=\thedefn~(\roman*)}
\newcounter{subcreftmpcnt} %
\newcommand\romansubformat[1]{(\roman{#1})} 
\def\subcref{\@ifstar\@@subcref\@subcref}
\newcommand\@subcref[2][\romansubformat]{%
	\ifcsname r@#2@cref\endcsname
	\cref@getcounter {#2}{\mylabel}%
	\setcounter{subcreftmpcnt}{\mylabel}%
	\hyperref[#2]{\romansubformat{subcreftmpcnt}}%
	\else ?? \fi}   
\newcommand\@@subcref[2][\romansubformat]{%
	\ifcsname r@#2@cref\endcsname
	\cref@getcounter {#2}{\mylabel}%
	\setcounter{subcreftmpcnt}{\mylabel}%
	\romansubformat{subcreftmpcnt}%
	\else ?? \fi}   
\DeclareRobustCommand{\crefnosort}[1]{%
	\begingroup\@cref@sortfalse\cref{#1}\endgroup
}
\newcommand{\cB}{{\mathcal B}}\newcommand{\cC}{{\mathcal C}}
\newcommand{\cD}{{\mathcal D}}\newcommand{\cF}{{\mathcal F}}
\newcommand{\cH}{{\mathcal H}}\newcommand{\cI}{{\mathcal I}}
\newcommand{\cN}{{\mathcal N}}
\newcommand{\fF}{{\mathfrak F}}
\newcommand{\fh}{{\mathfrak h}}
\newcommand{\fp}{{\mathfrak p}}
\newcommand{\fs}{{\mathfrak s}}
\newcommand{\BC}{{\mathbb C}}
\newcommand{\BN}{{\mathbb N}}
\newcommand{\BR}{{\mathbb R}}
\newcommand{\dsone}{{\mathds 1}}
\newcommand{\sB}{{\mathscr B}}\newcommand{\sC}{{\mathscr C}}
\newcommand{\sD}{{\mathscr D}}
\newcommand{\sP}{{\mathscr P}}
\newcommand{\sW}{{\mathscr W}}
\newcommand{\sfC}{{\mathsf C}}
\newcommand{\sfH}{{\mathsf H}}
\newcommand{\sfc}{{\mathsf c}}
\newcommand{\sfd}{{\mathsf d}}\newcommand{\sff}{{\mathsf f}}
\newcommand{\sfm}{{\mathsf m}}\newcommand{\sfn}{{\mathsf n}}
\newcommand{\sfr}{{\mathsf r}}
\newcommand{\sfs}{{\mathsf s}}
\newcommand{\sfy}{{\mathsf y}}
\newcommand{\rme}{{\mathrm e}}
\newcommand{\IN}{\BN}\newcommand{\IR}{\BR}\newcommand{\IC}{\BC}
\newcommand{\N}{\BN}\newcommand{\R}{\BR}\newcommand{\C}{\BC}
\newcommand{\RR}{\BR}
\newcommand{\hs}{\fh}\newcommand{\HS}{\cH}
\newcommand{\eps}{\varepsilon}\newcommand{\ph}{\varphi}\newcommand{\vr}{\varrho}
\newcommand{\e}{\rme}\renewcommand{\i}{\sfi}\newcommand{\Id}{\dsone} \renewcommand{\d}{\sfd}
\renewcommand{\Re}{\operatorname{Re}}
\newcommand{\wh}[1]{\widehat{#1}}\newcommand{\wt}[1]{\widetilde{#1}}\renewcommand{\bar}[1]{\overline{#1}}
\DeclareFontFamily{U}{mathx}{\hyphenchar\font45}
\DeclareFontShape{U}{mathx}{m}{n}{
	<5> <6> <7> <8> <9> <10>
	<10.95> <12> <14.4> <17.28> <20.74> <24.88>
	mathx10
}{}
\DeclareSymbolFont{mathx}{U}{mathx}{m}{n}
\DeclareMathAccent{\widecheck}{0}{mathx}{"71}
\DeclareMathAccent{\wideparen}{0}{mathx}{"75}
\DeclareFontFamily{OMX}{MnSymbolE}{}
\DeclareFontShape{OMX}{MnSymbolE}{m}{n}{
	<-6>  MnSymbolE5
	<6-7>  MnSymbolE6
	<7-8>  MnSymbolE7
	<8-9>  MnSymbolE8
	<9-10> MnSymbolE9
	<10-12> MnSymbolE10
	<12->   MnSymbolE12}{}
\DeclareSymbolFont{mnlargesymbols}{OMX}{MnSymbolE}{m}{n}
\DeclareMathDelimiter{\llangle}{\mathopen}{mnlargesymbols}{'164}{mnlargesymbols}{'164}
\DeclareMathDelimiter{\rrangle}{\mathclose}{mnlargesymbols}{'171}{mnlargesymbols}{'171}
\DeclareMathDelimiter{\lsem}{\mathopen}{mnlargesymbols}{'102}{mnlargesymbols}{'102}
\DeclareMathDelimiter{\rsem}{\mathclose}{mnlargesymbols}{'107}{mnlargesymbols}{'107}
\DeclareMathDelimiter{\langlebar}{\mathopen}{mnlargesymbols}{'152}{mnlargesymbols}{'152}
\DeclareMathDelimiter{\ranglebar}{\mathclose}{mnlargesymbols}{'157}{mnlargesymbols}{'157}
\DeclareMathDelimiter{\lWavy}{\mathopen}{mnlargesymbols}{'137}{mnlargesymbols}{'137}
\DeclareMathDelimiter{\rWavy}{\mathopen}{mnlargesymbols}{'137}{mnlargesymbols}{'137}
\newcommand{\FGamma}{\Gamma}
\newcommand{\FS}{\cF}\newcommand{\dG}{\sfd\FGamma}
\def\titlename{\scshape  Non-Fock Ground States in the Nelson Model}
\title{\LARGE\scshape  Non-Fock Ground States in the\\ Translation-Invariant Nelson Model\\  Revisited Non-Perturbatively}
\newcommand{\shortauthors}{D. Hasler, B. Hinrichs, O. Siebert}
\author{David Hasler\thanks{\texttt{david.hasler@uni-jena.de}}}
\author{Benjamin Hinrichs\thanks{\texttt{benjamin.hinrichs@math.upb.de}\\Present Address: Universit\"at Paderborn, Institut f\"ur Mathematik, Warburger Str. 100, D-33098 Paderborn, Germany}}
\affil{Friedrich-Schiller-Universität Jena\newline{\small Department of Mathematics\\  Ernst-Abbe-Platz 2\\  07743 Jena\\ Germany}\vspace*{.5em}}
\author{Oliver Siebert\thanks{\texttt{oliver.siebert@uni-tuebingen.de}}}
\affil{Eberhard-Karls-Universität Tübingen \newline{\small Institute of Mathematics\\Auf der Morgenstelle 10\\D-72076 Tübingen\\Germany}
}
\date{}
\renewcommand{\i}{{\mathsf{i}}}
\newcommand{\chr}[1]{\mathbf{1}_{#1}}
\newcommand{\mm}{{\mathbf k}}
\newcommand{\disp}{\Theta}
\newcommand{\ppn}{\psi_{P,n}}\newcommand{\fpn}{g_{P,n}}
\newcommand{\ppnk}{\psi_{P,n_k}}\newcommand{\ppi}{\psi_{P,\infty}}
\newcommand{\CT}[1]{\sfC_{\Theta,#1}}\newcommand{\Cw}{\sfC_\omega}\newcommand{\Cwn}{\sfC_{\omega_n}}
\newcommand{\Rnp}[1]{R_n(P,#1)}
\renewcommand{\sc}[1]{\left< #1 \right>}
\newcommand{\nn}[1]{\left\Vert #1 \right\Vert}
\newcommand{\abs}[1]{\left| #1 \right|}
\newcommand{\f}{{\sff}}
\newcommand{\nr}{{\sfn\sfr}}
\newcommand{\sr}{{\sfs\sfr}}
\newcommand{\Hf}{\dG(\omega)}
\newcommand{\Pf}{P_\f}
\newcommand{\Pfi}[1]{P_{\f,#1}}
\renewcommand{\:}{\colon}
\newcommand{\pos}{\sP\!\!\!_{-v}}
\newcommand{\DP}{{\Delta_{\disp,\omega,v,P}}}
\newcommand{\DPn}{{\Delta_{\disp,\omega_n,v,P}}}
\newcommand{\CP}{\sfC_P}
\newcommand{\mw}{\sfm_\omega}
\newcommand{\mwn}{\sfm_{\omega_n}}
 \newcommand{\dGw}{(\dG(\omega\wedge 1) +1)}
\newcommand{\vv}{\boldsymbol{v}}
\begin{document}
	
	\maketitle\thispagestyle{empty}\vspace*{-2em}
	\begin{abstract}\noindent
		The Nelson model, describing a quantum mechanical particle linearly coupled to a bosonic field, exhibits the infrared problem in the sense that no ground state exists at arbitrary total momentum. However, passing to a non-Fock representation, one can prove the existence of so-called dressed one-particle states. In this article, we give a simple non-perturbative proof for the existence of such one-particle states at arbitrary coupling strength and for almost all total momenta in a physically motivated momentum region. Our results hold both for the non- and the semi-relativistic Nelson model.	
	\end{abstract}

\section{Introduction}

In models describing the interaction of particles with a quantized radiation field, infrared divergences lead to severe technical issues in the rigorous construction of scattering theories.
The physical interpretation for this phenomenon, first described in an article by Bloch and Nordsieck \cite{BlochNordsiek.1937}, is that a cloud of infinitely many photons with small energies emerge in the interaction, which breaks the square-integrability of the system.
Mathematically, this reflects in the fact that Hamiltonians of such systems do not need to have a ground state. Since ground states of small total momenta of translation-invariant systems  are essential in the construction of scattering states, this poses a key technical difficulty induced by the infrared problem.

In this paper, we focus on the infrared problem in the translation-invariant
Nelson model, prominently treated by Edward Nelson \cite{Nelson.1964}, in which a charged particle is linearly coupled to a scalar bosonic field.
It has been an important testing ground for the mathematical descriptions of infrared critical scattering phenomena in the past decades.
In his PhD thesis, J\"urg Fr\"ohlich studied the infrared problem in the Nelson model \cite{Frohlich.1973,Frohlich.1974}.
Especially, in \cite{Frohlich.1974}, he proposed dressed one-electron states in an inequivalent representation of the Nelson model.
Using an iterative perturbative approach, Pizzo was able to explicitly construct such states in a Hilbert space by means of a dressing transformation and an iterative algorithm used as an infrared limiting procedure \cite{Pizzo.2003}. He was then able to construct asymptotic scattering states from these in \cite{Pizzo.2005}.
In recent years, Dybalski and Pizzo used this approach to study Coulomb scattering in Nelson's model in a series of papers \cite{DybalskiPizzo.2014,DybalskiPizzo.2018,DybalskiPizzo.2019,DybalskiPizzo.2022}.
It is also noteworthy, that a novel construction of scattering states without an infrared limiting procedure was developed in the recent article \cite{BeaudDybalskiGraf.2021}.
The study of such one-electron states has been extended to a variety of different models, due to its importance in scattering theory. This includes the ultraviolet renormalized Nelson model \cite{BachmannDeckertPizzo.2012}, the non-relativistic Pauli--Fierz model \cite{ChenFroehlichPizzo.2009,ChenFroehlichPizzo.2010,HaslerSiebert.2022} and the semi-relativistic Nelson \cite{DeckertPizzo.2014} as well as Pauli--Fierz models \cite{KoenenbergMatte.2014}. Furthermore, it was shown  that these dressed one-electron states  are indeed  ground states of 
suitably  infrared renormalized fiber Hamiltonians  \cite{BachmannDeckertPizzo.2012,KoenenbergMatte.2014,HaslerSiebert.2022}.
This  underlines the  picture  that the dressed one-electron states  live in  non-Fock  representations of the canonical commutation relations, cf.    \cite{ChenFroehlich.2007,Arai.2001}.

In this article, we give a non-perturbative proof for the existence of dressed one-electron states in a class of Nelson-type models, 
which  include  the standard non-relativistic Nelson model introduced in \cite{Nelson.1964} and the semi-relativistic Nelson model, cf. \cite{Gross.1973}.  Moreover, we show that these dressed one-electron states are ground states of infrared transformed
fiber Hamiltonians. 
Our method to construct the dressed one-electron states  uses a compactness argument tracing back to \cite{GriesemerLiebLoss.2001}, which 
was recently applied to the translation-invariant Pauli--Fierz model in \cite{HaslerSiebert.2020,HaslerSiebert.2022} as well as 
the spin boson model in \cite{HaslerHinrichsSiebert.2021a}. 
Instead of Rellich's criterion, we use  the more general Fr\'echet--Kolmogorov--Riesz theorem to establish compactness, cf.   \cite{Matte.2016,HiroshimaMatte.2019,Hinrichs.2022b} for previous treatments of compactness in Fock spaces via this approach. 
To make this article as selfcontained as possible and to provide a new simpler proof of Fock space compactness criteria, we derive an abstract compactness theorem in this article, which can be more directly verified and hence significantly simplifies our proof of compactness.
Our result holds for all values of the coupling constants and
for almost all physically relevant total momenta, i.e., momenta with absolute value smaller than one in the non-relativistic case and almost all total momenta in the semi-relativistic case, respectively. Moreover, combining our results with  regularity results of the ground state energy 
\cite{AbdessalamHasler.2012}, we obtain   one-electron states in the non-relativistic case    for any total momentum less than one
provided the coupling is  sufficiently small, as a corollary.
Thus, we complement  and  simplify the existence results obtained in \cite{Pizzo.2003,BachmannDeckertPizzo.2012},  which in addition we extend  
to semi-relativistic dispersion relations, also see \cite{DeckertPizzo.2014}.
For the construction of scattering states, e.g., in \cite{BeaudDybalskiGraf.2021}, it is important to have the one-electron states chosen continuous in the total momentum. This usually follows directly when employing the existing perturbative methods \cite{Pizzo.2003,DybalskiPizzo.2019}, but can not be inferred solely using a compactness argument. However, we demonstrate that continuity w.r.t. the total momentum follows, when additionally employing positivity and non-degeneracy of the ground states. To the authors knowledge, these are not available for the infrared transformed Hamiltonians at this point and are left for future research.

In this article, for technical convenience, we do not attempt to work with an ultraviolet renormalized model, see   \cite{Nelson.1964,Cannon.1971,Gross.1973,Sloan.1974}
and for recent work see, e.g., \cite{LampartSchmidt.2019,MatteMoller.2018,Schmidt.2019,HinrichsMatte.2022}. 
We  plan to address the extension of the methods used in the present paper  to the ultraviolet renormalized case in future work.

\subsection*{Structure and Notation}

This article is structured as follows:
\begin{itemize}
	\item[-] In \cref{sec:results}, we introduce the necessary notation and present our two main results on the non- and semi-relativistic Nelson model, cf. \cref{mainthm1,mainthm2}.
	\item[-] In \cref{sec:compact}, we derive a simple criterion for compactness of sets in Fock spaces, cf. \cref{comfock}.
	\item[-] In \cref{sec:transformed}, we derive the non-Fock representation of the non- and semi-relativistic Nelson Hamiltonians and prove the convergence result \cref{th:P2 diff}, which is the key ingredient to the proof for the existence of infrared-renormalized Hamiltonians in this representation.
	\item[-] In \cref{sec:conv}, we discuss ground states of Nelson-type models with general dispersion relations of the particle, their infrared properties and (most important) a compactness result for ground states and dressed ground states, cf. \cref{thm:compactness}. This can be considered an essential result of this article, since it implies the existence of ground states for the infrared-renormalized Hamiltonian in the non-Fock representation.
	\item[-] We collect well-known technical ingredients of our proofs in the appendices.
\end{itemize}
Let us, throughout this article, fix the following conventions:
\begin{itemize}
	\item[-] The dimension $d\in\IN$ is a fixed number throughout this article.
	\item[-] For $a,b\in\IR$, we write $a\wedge b = \min\{a,b\}$ and $a\vee b = \max\{a,b\}$.
	\item[-] Given $m\in\IN$, we write $B_r(x)  = \{y\in\IR^m:|x-y|<r\}$ for the open ball of radius $r>0$ around $x\in\IR^m$.
	\item[-]For any set $M\subset \IR^m$, we denote its complement by $M^\sfc = \IR^m\setminus M$.
	\item[-] For a Hilbert space operator $A$, we denote its domain as $\cD(A)$ and its spectrum as $\sigma(A)$.
	\item[-] We say a selfadjoint lower-semibounded operator $A$ has a {\em ground state} if $\inf\sigma(A)$ is an eigenvalue of $A$.
	\item[-] Given two selfadjoint operators $A$ and $B$, we write $A\le B$ if $\cD(A)\supset \cD(B)$ and $\braket{x,Ax}\le\braket{x,Bx}$ for all $x\in\cD(B)$.
\end{itemize}

\section{Main Results}\label{sec:results}
In this \lcnamecref{sec:results}, we present the main results of  this article concerning the non- and semi-relativistic Nelson model. First, in \cref{sec:res.Fock}, we introduce the necessary Fock space notation. It can be skipped by the experienced reader familiar with Fock space calculus. Then, in \cref{sec:res.model}, we define  generalized Nelson models,  which  cover  
the   two special cases of the non-relativistic and semi-relativistic Nelson model.
In the final  \cref{sec:res.res}, we present the main results in \cref{mainthm1,mainthm2}.

\subsection{Fock Space Notions}\label{sec:res.Fock}
Let us briefly introduce the Fock space notation needed for this article. For a more detailed introduction, we refer to the textbooks \cite{ReedSimon.1975,Parthasarathy.1992,BratteliRobinson.1996,Arai.2018}.
Well-known properties of Fock space operators are collected in \cref{sec:operator}.

Given a complex Hilbert space $\hs$, we define the bosonic Fock space over $\hs$ as
\begin{equation*}
	\FS(\hs) \coloneqq \IC\oplus \bigoplus_{n=1}^\infty \hs^{\otimes_\sfs n},
\end{equation*}
where $\otimes_\sfs$ denotes the symmetric tensor product, and as usually identify $(L^2(M))^{\otimes_\sfs} = 
L^2_{\sfs\sfy\sfm}(M^{ n})$ for any domain $M\subset \IR^d$, where in the latter we symmetrize over the $n$ variables from $\IR^d$. Especially, we write
\begin{equation*}
	\FS \coloneqq \FS(L^2(\IR^d)).
\end{equation*}
%
For selfadjoint operator $A$ in $\mathfrak{h}$, we define 
\begin{equation*}
	\dG (A) \coloneqq 0  \oplus \bigoplus_{n=1}^\infty \sum_{j=1}^n  \Id^{\otimes j-1} \otimes A  \otimes \Id^{\otimes n-j}  .
\end{equation*}
For $g \in L^2(M)$, we define the exponential  vector  $\epsilon(g)$ in $\mathcal{F}(L^2(M))$ by 
$(\epsilon(g))_0= 1$ and $(\epsilon(g))_n = (n!)^{-1/2}g^{\otimes_s n}$ for $n \in \N$. 
One can show that the linear span of such vectors is dense in $\mathcal{F}(L^2(M))$ \cite{Parthasarathy.1992}. 
For $f\in L^2(M)$, we   define the Weyl operator $W(f)$   as the unique unitary operator satisfying
\begin{equation}\label{def:Weyl}
	W(f)\epsilon(g) = \e^{-\|f\|^2/2-\braket{f,g}}\epsilon(f+g) \qquad \mbox{for all}\ f\in L^2(M).
\end{equation}
The map  $t\mapsto W(-\i t f)$ defines a strongly continuous unitary one parameter  group and we denote its selfadjoint generator as $\ph(f)$, the so-called field operator corresponding to $f$.

Given any $f\in L_{\sfs\sfy\sfm}^2((\IR^d)^n)$,  $n\in\IN$, 
by the Fubini--Tonelli theorem,
\begin{equation*}
	a_k f \coloneqq \sqrt{n}f(k,\cdots)\in L^2_{\sfs\sfy\sfm}((\IR^d)^{n-1})
\end{equation*}
is well-defined for almost every $k\in\IR^d$. Given $\psi\in\FS$, for almost all $k\in\IR^d$, we can hence define
\begin{equation}
	a_k\psi \coloneqq (a_k\psi^{(n)})_{n\in\IN} \in \IC \times \Big(\bigtimes_{n\in\IN} L^2_{\sfs\sfy\sfm}((\IR^d)^n)\Big).
\end{equation}
It is well-known (and for most readers probably the more familiar definition of the field operator) that, given any $f\in L^2(\IR^d)$, if $\psi\in\cD(\ph(f))$, we have $a_k\psi \in \FS $ for almost all $k\in \IR^d$ and
\[  \braket{\phi,\ph(f)\psi} = \int_{\IR^d} \big( \! \braket{f(k)\phi,a_k\psi} + \braket{a_k\phi,f(k)\psi} \! \big)\d k \qquad \mbox{for all}\ \phi,\psi\in\cD(\ph(f)).   \]


\subsection{General Nelson Models}\label{sec:res.model}

The class of models which  we investigate is determined by three functions: the dispersion relation of the particle $\disp$, the dispersion relation of the bosons $\omega$ and the so-called form factor describing their interaction $v$. Throughout this article, we will assume the following standing assumptions to be satisfied:
\begin{enumerate}[label = (H\arabic*),ref=H\arabic*]
	\item\label{hyp:Omega} $\disp\in \cC^\infty(\IR^d;[0,\infty])$ is analytic and there exist constants $\CT{1,1}$, $\CT{1,2}$, $\CT{2}\ge0$ such that
	\begin{enumerate}[label = -]
		\item $|\nabla\disp(p)| \le \CT{1,1} + \CT{1,2}\disp(p)$ for all $p\in\IR^d$,
		\item the Hessian $\sfH\disp(p)$ is uniformly bounded in $p\in\IR^d$, more precisely $\displaystyle \CT2 = \frac 12 \sup_{p\in\IR^d}\|\sfH\disp(p)\|_{\cB(\IR^d)}<\infty$.
	\end{enumerate}
	\item\label{hyp:omega} $\omega:\IR^d\to [0,\infty)$ is continuous and subadditive, i.e., 
	$\omega(k_1 + k_2) \leq \omega(k_1) + \omega(k_2)$ for all $k_1,k_2 \in \R^d$, 
	and strictly positive almost everywhere. 
	\item\label{hyp:div} $\lim_{|p|\to\infty}\disp(p) = \lim_{|k|\to\infty}\omega(k) = \infty$.
	\item\label{hyp:v} $v\in L^2(\IR^d)$ such that  $\omega^{-1/2}v\in L^2(\IR^d)$.
\end{enumerate}
For $j=1,\ldots,d$, let us define the multiplication operator $\mm_j$ acting on $L^2(\IR^d)$ by $(\mm_j f)(k)\coloneqq k_jf(k)$ and use the vector notation $\mm \coloneqq (\mm_1,\ldots,\mm_d)$. 
Then, we define the momentum of the quantum field as the vector of selfadjoint operators given by $\Pf \coloneqq \dG(\mm) \coloneqq (\dG(\mm_1),\ldots,\dG(\mm_d))$.

The translation-invariant Nelson-type Hamiltonians which  we investigate are  defined by 
\begin{align}\label{def:general}
	H(\disp,\omega,v,P) \coloneqq \disp(P-\Pf) + \Hf + \ph(v) \quad \mbox{for}\ P\in\RR^d.
\end{align}
One can show  by standard arguments that \cref{def:general} defines a selfadjoint lower-semibounded operator, given our standing assumptions. This is the content of the following lemma. 
\begin{lem} 
	The operator $H(\disp,\omega,v,P)$ given on the domain $\cD(\disp(\Pf))\cap \cD(\Hf)$ by \cref{def:general} is selfadjoint and lower-semibounded for any $P\in\IR^d$.
\end{lem}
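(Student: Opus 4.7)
The approach is to split $H(\disp,\omega,v,P) = T_0 + \ph(v)$ with $T_0 \coloneqq \disp(P-\Pf) + \Hf$, establish self-adjointness of $T_0$ via joint functional calculus, and then absorb $\ph(v)$ by the Kato--Rellich theorem.

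For $T_0$, I would first note that in the momentum representation $\FS \simeq \IC \oplus \bigoplus_n L^2_{\sfs\sfy\sfm}((\IR^d)^n)$, both $\Pf = \dG(\mm)$ and $\Hf = \dG(\omega)$ act as multiplication operators in the wave-vector variables $k_1,\ldots,k_n$, so the components $\Pfi 1,\ldots,\Pfi d,\Hf$ form a pairwise strongly commuting family of self-adjoint operators. By the joint functional calculus and $\disp\ge 0$ from \cref{hyp:Omega}, $\disp(P-\Pf)$ and $\Hf$ are non-negative and strongly commute, and their sum $T_0$ is therefore self-adjoint and non-negative on $\cD(\disp(P-\Pf))\cap\cD(\Hf)$. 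A Gronwall estimate along the segment from $-\lambda$ to $P-\lambda$, using the gradient bound $|\nabla\disp|\le\CT{1,1}+\CT{1,2}\disp$ from \cref{hyp:Omega}, yields two-sided comparability between $\disp(P-\lambda)$ and $\disp(-\lambda)$, which together with the natural Fock-space identifications identifies the above operator domain with $\cD(\disp(\Pf))\cap\cD(\Hf)$ as stated.

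For $\ph(v)$, the standard Fock-space field estimate, valid by \cref{hyp:v} since $v,\omega^{-1/2}v\in L^2(\IR^d)$, reads
\begin{equation*}
\|\ph(v)\psi\| \le \sqrt{2}\,\|\omega^{-1/2}v\|\,\|\Hf^{1/2}\psi\| + \|v\|\,\|\psi\|,\qquad \psi\in\cD(\Hf^{1/2}).
\end{equation*}
Combined with the elementary interpolation $\|\Hf^{1/2}\psi\|^2 \le \eps\|\Hf\psi\|^2 + (4\eps)^{-1}\|\psi\|^2$ and the spectral inequality $\|\Hf\psi\|\le\|T_0\psi\|$ on $\cD(T_0)$---a consequence of $0\le\Hf\le T_0$ and the strong commutativity from the previous paragraph---this upgrades to $\|\ph(v)\psi\| \le \eps'\|T_0\psi\| + C(\eps')\|\psi\|$ with $\eps'>0$ arbitrary, i.e.\ $\ph(v)$ is infinitesimally $T_0$-bounded. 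Kato--Rellich then yields self-adjointness of $H(\disp,\omega,v,P)$ on $\cD(T_0)$ and lower-semiboundedness. The only technical point requiring any care is the domain identification above, since \cref{hyp:Omega} does not directly compare $\disp(\lambda)$ with $\disp(-\lambda)$; all remaining ingredients are direct applications of the standard Fock-space relative bounds collected in the appendix.
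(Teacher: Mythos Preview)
Your proof is correct and follows essentially the same route as the paper: show that $T_0=\disp(P-\Pf)+\Hf$ is self-adjoint and non-negative as a sum of strongly commuting non-negative operators (with the domain identification $\cD(\disp(P-\Pf))=\cD(\disp(\Pf))$ coming from the gradient bound in \cref{hyp:Omega}, which is exactly the content of \cref{lem:dispindP}), then use the standard field estimate \cref{th:phi estimate} to get infinitesimal $T_0$-boundedness of $\ph(v)$ and conclude by Kato--Rellich. The only difference is that you spell out the Gronwall-type comparison and the interpolation explicitly, whereas the paper defers these to the appendix lemmas.
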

\begin{proof}
	By \cref{lem:dispindP}, the operator $H(\disp,\omega,0,P)$ is selfadjoint and non-negative on the given domain as the sum of commuting selfadjoint non-negative operators. Further, by \cref{th:phi estimate}, $\ph(v)$ is infinitesimally bounded w.r.t. $H(\disp,\omega,0,P)$. Hence, the statement follows from the Kato--Rellich theorem.
\end{proof}
In this introduction, we will restrict to a discussion of our two main examples: the {\em non-relativistic Nelson model} given by the choice $\disp(p)=\disp_\nr(p)\coloneqq\frac 1{2M}p^2$ and the {\em semi-relativistic Nelson model} given by the choice $\disp(p)=\disp_{\sr}(p)\coloneqq\sqrt{M^2+p^2}$. In both cases, $M > 0$ is the mass of the particle, which we discuss as a parameter of the model. The corresponding Hamiltonians will be denoted by
\begin{align}\label{def:nrsr}
	H_\nr(P)\coloneqq H(\disp_\nr,\omega,v,P)\quad \mbox{and}\quad H_\sr(P) \coloneqq H(\disp_\sr,\omega,v,P).
\end{align}
We remark that we do not make additional assumptions on $\omega$ or $v$ at this point. It is noteworthy that in the literature 
the choice $\omega(k)=|k|$ and $v=\lambda\omega^{-1/2}$ with some coupling constant $\lambda\in\IR$ is often considered, due to its relevance in the physics literature, see e.g. \cite[\textsection19.2]{Spohn.2004}.

The value
\begin{align}\label{def:mw}
	\mw \coloneqq \inf_{k\in\IR^d} \omega(k)
\end{align}
is often called the {\em boson mass} in the literature. 
In the massive case $\mw>0$, the Nelson model is more infrared-regular than in the massless case $\mw=0$. To use this property, we approximate our (possibly massless) dispersion relation by a sequence of dispersion relations $\omega_n$ with positive boson mass $\mwn>0$.
Hence, also throughout this article, let us fix the following additional assumption:
\begin{enumerate}[label = (H\arabic*),ref=H\arabic*,resume]
	\item\label{hyp:wn} For all $n\in\IN$, $\omega_n:\RR^d\to [0,\infty)$ is continuous and subadditive 
	and the sequence $\omega_n(k)$ is strictly decreasing for any $k\in\IR^d$. Further, $\omega_n$ uniformly converges to $\omega$ and there exists $C>0$ such that
	\begin{equation}\label{eq:omegadiff}
		|\omega_n(k+p)-\omega_n(k)| \le C |\omega(k+p)-\omega(k)| \qquad \mbox{for all}\ k,p\in\IR^d,\ n\in\IN.
	\end{equation}
\end{enumerate}
\begin{rem} Suppose $\omega : \R^d \to [0,\infty)$   satisfies   Hypothesis \cref{hyp:omega} and \eqref{hyp:div} 
	then it is straightforward to see that   $\omega_n(k) = \sqrt{ \omega^2 + \mu_n^2 }$ satisfies Hypothesis \eqref{hyp:wn}
	for any strictly decreasing zero sequence $\mu_n$. In case $\mw=0$, this implies $\mwn=\mu_n$, so $\mu_n$ can be interpreted as an artificial boson mass.
\end{rem} 
\begin{rem} Suppose $\omega : \R^d \to [0,\infty)$   satisfies   Hypothesis \cref{hyp:omega} and \eqref{hyp:div} 
	and $\omega_n$ satisfies $ \eqref{hyp:wn}$. 
	Then it is straightforward to see  that $\inf_{k \in \R^d} \omega_n  > 0$ (even if $\mw = 0$).  
\end{rem} 
\begin{rem}
	We also remark that the assumption $\omega_n^{-1/2}v\in L^2(\IR^d)$ from \cref{hyp:v} is trivially satisfied, so the operators $H(\disp,\omega_n,v,P)$ are also selfadjoint and lower-semibounded.
\end{rem} 

\subsection{Main Results}\label{sec:res.res}

In this \lcnamecref{sec:res.res}, we present our main results on the two models introduced in \cref{def:nrsr}.

The first result concerns the existence of a dressing-transformed Hamiltonian. As a first step, we define the dressing transformation which we apply. To this end, for now assuming $\omega(k)\ge |k|$ for all $k\in\IR^d$, given $Q\in B_1(0)$ and $n\in\IN$, we  define $f_{Q,n} \in L^2(\IR^d)$ by
\begin{equation}\label{def:fQ}
	f_{Q,n}(k) \coloneqq \frac{v(k)}{\omega_n(k)-k\cdot Q} \qquad \mbox{for}\  k\in\IR^d. 
\end{equation}
\begin{rem}\label{rem:dresstrans}
	Formally, the dressing transformation we apply is given by $f_{Q,\infty}$ where we replace $\omega_n$ in above definition by $\omega$. In this case, $f_{Q,\infty}\in L^2(\IR^d)$ if and only if $\omega^{-1}v\in L^2(\IR^d)$. However, for the massless Nelson model, this is usually not satisfied.  This implies that the dressing transformation becomes non-unitary and we need to approximate it by well-defined unitary transformations. For $n\in\IN$, we remark that $f_{Q,n}\in L^2(\IR^d)$ follows, since the assumptions \cref{hyp:wn} and $\omega(k)\ge |k|$ imply $c_n \coloneqq \inf_{k\in B_1(0)}(\omega_n(k)-|k|)>0$, so $(c_n\wedge(1-|Q|))^{-1}v $ is a square-integrable majorant of $f_{Q,n}$.
\end{rem}
Our first main result is the existence of an infrared renormalized operator.
\begin{thm}\label{mainthm1}
	Let $\#=\nr$ or $\#=\sr$ and $|\mm|v\in L^2(\IR^d)$. Further, 
	assume   $\omega(k)\ge |k|$
	and let $P,Q \in \R^d$  with $|Q|<1$.
	Then
	the  operators $W(f_{Q,n})H_\#(P)W(f_{Q,n})^*$ converge to a selfadjoint and lower-semibounded operator $\wh H_\#(P;Q)$ in the norm-resolvent sense as $n\to\infty$, which is independent of the specific  sequence $(\omega_n)$ satisfying \eqref{hyp:wn}.
\end{thm}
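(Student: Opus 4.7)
The plan is to compute the conjugation $U_n H_\#(P) U_n^*$ (with $U_n := W(f_{Q,n})$) explicitly via the Weyl intertwining relations, show that although $f_{Q,n}$ itself generally fails to converge in $L^2$ as $n \to \infty$ each resulting summand does converge, identify the limit as a selfadjoint lower-semibounded operator $\wh H_\#(P;Q)$, and upgrade this to norm-resolvent convergence via a second resolvent identity.

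Explicitly, the intertwining identities take the schematic form
\[
U_n \dG(A) U_n^* = \dG(A) - \ph(\i A f_{Q,n}) + \braket{f_{Q,n}, A f_{Q,n}}, \qquad U_n \ph(v) U_n^* = \ph(v) - 2\Re\braket{f_{Q,n}, v}
\]
(exact signs depending on the Weyl-operator convention used). Applying them with $A = \mm_j$ and $A = \omega$ yields
\[
U_n H_\#(P) U_n^* = \disp(P - \Pf + Y_n) + \Hf + \ph(R_n) + E_n,
\]
where $Y_n$ is an $\IR^d$-vector combining the field operators $\ph(\i \mm_j f_{Q,n})$ with the scalars $\braket{f_{Q,n}, \mm_j f_{Q,n}}$, the function $R_n \in L^2(\IR^d)$ is a linear combination of $v$ and $\omega f_{Q,n}$, and $E_n \in \IR$ collects the remaining self-energy corrections. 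The defining identity $(\omega_n - \mm\cdot Q) f_{Q,n} = v$ now lets one rewrite every summand in $R_n$, $E_n$, and the c-number part of $Y_n$ with an extra factor of $\mm_j$, $(\mm\cdot Q)$, or $(\omega_n - \omega)$ attached to one or two copies of $f_{Q,n}$. Combining the denominator estimate $\omega_n(k) - k\cdot Q \geq \omega(k)(1 - |Q|) \geq |k|(1 - |Q|)$ (from $\omega(k) \geq |k|$) with the assumption $|\mm|v \in L^2$, every so-rewritten quantity is uniformly bounded in $n$; dominated convergence together with the uniform convergence $\omega_n \to \omega$ from \cref{hyp:wn} then gives convergence of the function-valued terms in $L^2$ and of the scalars in $\IR$. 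The potentially infrared-divergent quantity $\|f_{Q,n}\|^2_{L^2}$ never appears on its own.

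Denoting the corresponding pointwise limits by $Y_\infty, R_\infty, E_\infty$, the candidate limit operator is
\[
\wh H_\#(P;Q) := \disp(P - \Pf + Y_\infty) + \Hf + \ph(R_\infty) + E_\infty,
\]
selfadjoint and lower-semibounded on the same Kato--Rellich type domain used for $H_\#(P)$ itself, with $\ph(R_\infty)$ infinitesimally $\Hf$-bounded via \cref{th:phi estimate}. Norm-resolvent convergence then follows from
\[
(H_n + \i\mu)^{-1} - (H + \i\mu)^{-1} = -(H_n + \i\mu)^{-1}(H_n - H)(H + \i\mu)^{-1},
\]
(writing $H_n := U_n H_\#(P) U_n^*$ and $H := \wh H_\#(P;Q)$) once one shows that $(H_n - H)(H + \i\mu)^{-1}$ has operator norm bounded by $O(\|R_n - R_\infty\|_{L^2} + \|\mm(f_{Q,n} - f_{Q,\infty})\|_{L^2} + |E_n - E_\infty|)$, which tends to zero. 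Independence of the limit from the approximating sequence $(\omega_n)$ is automatic once it is characterised intrinsically in terms of $\omega, v, P, Q$.

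The main technical obstacle lies in the dispersion term. For $\disp_\nr(p) = p^2/(2M)$, expanding the square reduces matters to commutator manipulations and standard field-operator bounds. For $\disp_\sr(p) = \sqrt{M^2 + p^2}$, however, $\disp$ is non-polynomial and the vector $P - \Pf + Y_n$ has field-operator components which do not commute with $\Pf$, so even defining $\disp(P - \Pf + Y_\infty)$ as a self-adjoint operator requires care. I would address this via a Helffer--Sj\"ostrand functional-calculus representation of $\disp$ (exploiting the smoothness in \cref{hyp:Omega}), combined with the infinitesimal $\Hf$-boundedness of $\ph(\mm_j f_{Q,n})$ and the $L^2$-convergence $\mm f_{Q,n} \to \mm f_{Q,\infty}$, which does hold since $\omega \geq |k|$ and $|\mm| v \in L^2$ together give a uniform integrable majorant.
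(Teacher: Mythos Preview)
Your outline matches the paper's argument almost exactly: compute the Weyl conjugate explicitly (the paper packages this as an operator $T_\#(P,f)$ depending on the dressing function $f$, cf.\ \cref{eq:TnPf defn}), observe that only the combinations $\mm f_{Q,n}$, $|\mm|^{1/2}f_{Q,n}$, $\omega f_{Q,n}$, $\omega^{1/2}f_{Q,n}$ and $\fs(f_{Q,n},v)$ enter rather than $\|f_{Q,n}\|$ itself, check their convergence by dominated convergence, and pass through the second resolvent identity. The non-relativistic case is handled exactly as you say, by expanding the square.

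The one substantive difference is the tool for the semi-relativistic dispersion. The paper does not use Helffer--Sj\"ostrand; instead it writes
\[
F^{1/2}-E^{1/2}=\frac{1}{\pi}\int_0^\infty t^{1/2}(t+F)^{-1}(F-E)(t+E)^{-1}\,\d t
\]
with $F=\vv_P(g)^2+M^2$, $E=\vv_P(f)^2+M^2$, and then estimates the integrand pointwise in $t$ so that the integral converges. This is tailored to the square root and sidesteps the issue you would face with Helffer--Sj\"ostrand, namely that $p\mapsto\sqrt{M^2+p^2}$ is unbounded so an almost-analytic extension with the standard decay is not available without further work. The paper's route also makes visible an extra regularity requirement you did not mention: to control the commutator $[\dG(\omega\wedge1),\vv_P(h)_j]$ that appears inside the $t$-integral one needs $|\mm|^{3/2}f$ and $|\mm|^2 f$ in $L^2$ (this is the space $\sW_\sr$ in \cref{def:normsW}), which is precisely where the hypothesis $|\mm|v\in L^2$ combined with $\omega\ge|\mm|$ is used at large momenta. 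Your sketch is correct in spirit, but the integral representation is the cleaner device here.
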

\begin{rem}\label{rem:Hhatunitary}
	In the case $\omega^{-1}v\in L^2(\IR^d)$, the operators $H_\#(P)$ and $\wh H_\#(P,Q)$  are unitarily equivalent by the Weyl transformation $f_{Q,\infty}$ with $\omega_\infty=\omega$, cf. \cref{rem:dresstrans}, for any choice of $Q\in B_1(0)$. This is the so-called infrared-regular case. However, in the infrared-critical case $\omega^{-1}v\notin L^2(\IR^d)$, which is especially satisfied by the physical choice $\omega = |\cdot|$ and $v=\lambda\omega^{-1/2}$, $\lambda\ne 0$, the two operators are inequivalent. One then calls $\wh H_\#(P,Q)$ a non-Fock representation of the Nelson model.
\end{rem}
\begin{rem}
	Similar non-perturbative results for the non-relativistic and semi-relativistic Pauli--Fierz Hamiltonian can be found in \cite{HaslerSiebert.2022} and \cite{KoenenbergMatte.2014}, respectively.
	Our proof is essentially an adaption of the one from \cite{HaslerSiebert.2022}, especially to the semi-relativistic case.
	A direct application of the method in \cite{KoenenbergMatte.2014} does not seem possible here, since the proof therein exploits the spin degree of freedom in the Pauli--Fierz model.
\end{rem}
\begin{proof}
	The statement is a direct consequence of \cref{th:P2 diff}. More precisely:
	
	By the assumptions \cref{hyp:v} and $\omega\ge |\mm|$, the map
	\[ k \mapsto f_{Q}(k) \coloneqq \frac{v(k)}{\omega(k)-k\cdot Q} \]
	defines an element in $\sW_\#$, as defined in \cref{def:normsW}. Hence, the limiting operator $\wh H_\#(P) \coloneqq T_\#(P,f_Q)$ as defined in \cref{eq:TnPf defn} is selfadjoint and lower-semibounded, cf. \cref{firstwelldefinedT}.
	Further, by the dominated convergence theorem, we see $f_{Q,n}$ converges to $f_Q$ in $\sW_\#$.
	The statement now follows from \cref{th:P2 diff}.
\end{proof}
Our second main result now concerns the infrared behavior of the infrared renormalized operator.
It holds in 
a physically motivated momentum region, which we first want to describe.
If a non-relativistic particle interacts with a relativistic quantum field, as is the case in our model $H_\nr$, then one expects physical results only to be valid in a region where the speed of the non-relativistic particle is below the speed of light, i.e., for our choice of units, where the particle momentum is smaller than the particle mass $M>0$.
If, on the other hand, the particle has a relativistic dispersion relation, as is the case in our model $H_\sr$, no such restriction is expected.
Hence, let us
explicitly fix
\begin{align}\label{eq:relreg}
	\sB_\nr \coloneqq B_{M}(0) \quad \mbox{and}\quad \sB_\sr \coloneqq \IR^d.
\end{align}
A key assumption of our result is a mild differentiability condition on the mass shell
\begin{align*}
	E_\# (P) \coloneqq \inf \sigma(H_\#(P)) \quad \mbox{for}\ P\in\IR^d,\ \#\in\{\nr,\sr\}.
\end{align*}
Explicitly, let
\begin{align}
	\sD_{\#,k} \coloneqq \{P\in \IR^d: \xi \mapsto E_\#(\xi)\ \mbox{is \ $k$-times differentiable at}\ \xi=P\}.
\end{align}
We prove that up to the second derivative, this only excludes a set of total momenta having Lebesgue measure zero. Further, in the relevant regions \cref{eq:relreg}, we can bound the gradient.
\begin{lem}\label{lem:D}
	Let $\#\in\{\nr,\sr\}$.
	The set $\sD_{\#,k}$ has full Lebesgue measure for $k=1,2$, $\#\in\{\nr,\sr\}$. Further, for all $P\in\sD_{\#,1}\cap\sB_\#$, we have $|\nabla E_\#(P)|<1$.
\end{lem}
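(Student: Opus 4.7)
To show that $\sD_{\#,k}$ has full Lebesgue measure for $k\in\{1,2\}$, I would reduce to proving that the auxiliary function $G_\#(P)\coloneqq E_\#(P)-|P|^2/(2M)$ is concave on $\R^d$ for both $\#\in\{\nr,\sr\}$, after which Alexandrov's theorem on almost-everywhere twice differentiability of concave functions immediately gives the assertion for $E_\#=G_\#+|\cdot|^2/(2M)$. For $\#=\nr$, expanding $\disp_\nr(P-\Pf)=|P|^2/(2M)+\Pf^2/(2M)-P\cdot\Pf/M$ writes $H_\nr(P)=|P|^2/(2M)+K-P\cdot\Pf/M$ with $K\coloneqq\Pf^2/(2M)+\Hf+\phi(v)$ independent of $P$, so that $G_\nr(P)=\inf_{\|\psi\|=1}\bigl(\langle\psi,K\psi\rangle-P\cdot\langle\psi,\Pf\psi\rangle/M\bigr)$ is an infimum of affine functions in $P$ and thus concave. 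For $\#=\sr$, spectral calculus for $\Pf$ gives $\langle\psi,\disp_\sr(P-\Pf)\psi\rangle=\int_{\R^d}\sqrt{M^2+(P-k)^2}\,d\mu_\psi(k)$, and a direct calculation shows that the Hessian of $P\mapsto\sqrt{M^2+(P-k)^2}$ has operator norm bounded by $1/\sqrt{M^2+(P-k)^2}\le 1/M$. Hence each integrand $P\mapsto\sqrt{M^2+(P-k)^2}-|P|^2/(2M)$ is concave in $P$; since concavity is preserved under integration against a positive measure and under taking the infimum over $\psi$, $G_\sr$ is concave as well.

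For the strict bound $|\nabla E_\#(P)|<1$ at $P\in\sD_{\#,1}\cap\sB_\#$, I exploit the variational principle via approximate minimizing sequences $\psi_n\in\cD(H_\#(P))$ with $\|\psi_n\|=1$ and $\langle\psi_n,H_\#(P)\psi_n\rangle\to E_\#(P)$. Lagrange's mean value theorem applied to $\disp_\#$ yields
\[
|e\cdot\nabla E_\#(P)|\le\limsup_{n\to\infty}|\langle\psi_n,e\cdot\nabla\disp_\#(P-\Pf)\psi_n\rangle|, \qquad |e|=1.
\]
In the semi-relativistic case, the pointwise identity $|\nabla\disp_\sr(p)|^2=1-M^2/\disp_\sr(p)^2$, combined with the uniform upper bound $\langle\psi_n,\disp_\sr(P-\Pf)\psi_n\rangle\le C$ (from the standard estimate $\Hf+\phi(v)\ge -C'$) and a Markov/tightness argument applied to the spectral measure of $\disp_\sr(P-\Pf)$ on $\psi_n$, yields a uniform positive lower bound of the form $\langle\psi_n,\disp_\sr(P-\Pf)^{-2}\psi_n\rangle\ge 1/(8C^2)$. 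Hence $\langle\psi_n,|\nabla\disp_\sr(P-\Pf)|^2\psi_n\rangle\le 1-M^2/(8C^2)$, and Cauchy--Schwarz delivers $|\nabla E_\sr(P)|\le\sqrt{1-M^2/(8C^2)}<1$.

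In the non-relativistic case the corresponding pointwise bound fails, since $|\nabla\disp_\nr(p)|=|p|/M$ is unbounded. I would instead combine the semiconcavity inequality $E_\nr(Q)\le E_\nr(P)+\nabla E_\nr(P)\cdot(Q-P)+|Q-P|^2/(2M)$, which follows directly from concavity of $G_\nr$, with the coherent-state lower bound $E_\nr(Q)\ge -\|\omega^{-1/2}v\|^2/2$ obtained by completing the square in $\Hf+\phi(v)$, and with the vacuum upper bound $E_\nr(Q)\le|Q|^2/(2M)$ coming from the Fock-vacuum trial state. Choosing $Q=P+\alpha\nabla E_\nr(P)$ and extremizing the resulting inequality over $\alpha\in\R$ produces a quantitative upper bound on $|\nabla E_\nr(P)|$ in terms of $|P|/M$ and the coupling, and the constraint $|P|<M$ is then what closes the estimate to yield the strict inequality. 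The main technical obstacle is precisely this strict bound in the non-relativistic case: the semi-relativistic argument is immediate from the operator inequality $|\nabla\disp_\sr|<1$, whereas in the non-relativistic setting strictness has to be extracted from the delicate interplay of the semiconcavity of $E_\nr$, the variational upper and lower bounds on $E_\nr$, and the physical constraint $|P|<M$ expressing that the bare particle velocity lies below the speed of light.
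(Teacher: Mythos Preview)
Your concavity/Alexandrov argument for the full-measure statement is exactly the paper's approach (Theorem~5.12 in the paper): the function $P\mapsto \sfC_{\Theta,2}|P|^2-E_\#(P)$ is convex because each $P\mapsto \sfC_{\Theta,2}|P|^2-\langle\psi,H_\#(P)\psi\rangle$ is, and Alexandrov's theorem does the rest. Your semi-relativistic gradient bound is a close cousin of the paper's Lemma~5.14: the paper bounds $|\nabla E_\sr(P)|$ by $\sup\{|\nabla\Theta_\sr(p)|:|p|\le C_{P,\delta}\}<1$ via the abstract inf-of-differentiable-functions Lemma~A.1 together with a uniform bound on $|\Pf|$ for approximate minimizers, whereas you get the same strict inequality by Cauchy--Schwarz and a Markov bound on the spectral measure of $\Theta_\sr(P-\Pf)$. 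Both routes are fine.

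The non-relativistic part, however, does not close as written. With your lower bound $E_\nr(Q)\ge -\tfrac12\|\omega^{-1/2}v\|^2$ and the semiconcavity inequality, inserting $Q=P+\alpha\,\nabla E_\nr(P)$ and minimizing over $\alpha$ gives exactly
\[
|\nabla E_\nr(P)|^2 \le \frac{2}{M}\Big(E_\nr(P)+\tfrac12\|\omega^{-1/2}v\|^2\Big)\le \frac{|P|^2}{M^2}+\frac{\|\omega^{-1/2}v\|^2}{M},
\]
which is \emph{not} $<1$ for all $|P|<M$; it only yields the strict bound on the smaller ball $|P|^2<M^2-M\|\omega^{-1/2}v\|^2$, and fails altogether once $\|\omega^{-1/2}v\|^2\ge M$. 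The constraint $|P|<M$ alone cannot close this estimate at arbitrary coupling.

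The fix is to replace your crude lower bound by the sharp one $E_\nr(Q)\ge E_\nr(0)$ (Gross's inequality, eq.~(5.3) in the paper). Your own semiconcavity argument then gives, after the same minimization,
\[
|\nabla E_\nr(P)|^2 \le \frac{2}{M}\big(E_\nr(P)-E_\nr(0)\big)\le \frac{2}{M}\cdot\frac{|P|^2}{2M}=\frac{|P|^2}{M^2}<1
\]
for $|P|<M$, uniformly in the coupling. This is precisely the paper's argument for Theorem~5.12\,(v): convexity of $\CT2|P|^2-E(P)$ together with the global minimum at $P=0$ forces $|\nabla E(P)|\le 2\CT2|P|=|P|/M$.
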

\begin{rem}
	In the case $\mw>0$, the mass shell is even analytic, cf. \cite[Theorem 1.9]{Moller.2005}.
\end{rem}
\begin{rem}\label{rem:AH12}
	By \cite[Theorem 1.1]{AbdessalamHasler.2012}, we know that $\sD_{\nr,k}\supset B_{1-\|\omega^{-1/2}v\|^{2/3}}(0)$ for all $k\in\IN$ in the case $\omega=|\cdot|$ and $M=1$. The authors assume that, by an adaption of the proof, a similar statement can be obtained in the semi-relativistic case.
\end{rem}
\begin{proof}
	The statement is contained in those of \cref{thm:massshell,lem:massshellsr}, noting that the second derivatives of the dispersion relations satisfy $\sfC_{\Theta_\#,2} = 1/2M$ for both cases $\#=\nr,\sr$ and that the first derivative in the semi-relativistic case satisfies $\sfC_{\Theta_\sr,1,1}=1$, $\sfC_{\Theta_\sr,1,2} = 0$.
\end{proof}
We can now state our main result.
\begin{thm}\label{mainthm2}
	Assume $\#=\nr$ or $\#=\sr$ and $|\mm|v\in L^2(\IR^d)$. Further, let $P\in \sD_{\#,2}\cap\sB_\#$ and assume $\omega(k)\ge|k|$. Then $\wh H_\#(P, \nabla E_\#(P))$ has a ground state. 
\end{thm}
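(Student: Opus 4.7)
The strategy is to realise a ground state of $\wh H_\#(P, \nabla E_\#(P))$ as the strong limit of dressed ground states of the massive infrared-regularised operators. Set $Q \coloneqq \nabla E_\#(P)$, which lies in $B_1(0)$ by \cref{lem:D}, since $P\in\sB_\#\cap\sD_{\#,1}$. For each $n\in\IN$ define $H_n \coloneqq H(\disp_\#, \omega_n, v, P)$ and $\wh H_n \coloneqq W(f_{Q,n})\, H_n\, W(f_{Q,n})^*$.

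First I would handle the existence of approximating ground states. Because $\omega_n$ has strictly positive infimum (second remark after \cref{hyp:wn}), the infrared problem is absent for $H_n$, so standard arguments as developed in \cref{sec:conv} yield a normalised ground state $\psi_n$ of $H_n$ with eigenvalue $E_n = \inf\sigma(H_n)$. Then $\wh\psi_n \coloneqq W(f_{Q,n})\psi_n$ is a normalised ground state of $\wh H_n$ with the same eigenvalue. By \cref{mainthm1}, $\wh H_n \to \wh H_\#(P,Q)$ in the norm-resolvent sense, so in particular $E_n \to E_\infty \coloneqq \inf\sigma(\wh H_\#(P,Q))$.

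The heart of the proof is a Fock-space compactness argument applied to the family $\{\wh\psi_n\}_{n\in\IN}$, using the abstract criterion \cref{thm:compactness}. The key tool is the pull-through formula $a_k\psi_n = -(H(\disp_\#,\omega_n,v,P-k) + \omega_n(k) - E_n)^{-1} v(k)\psi_n$, combined with the conjugation identity $a_k W(f_{Q,n})^* = W(f_{Q,n})^*(a_k - f_{Q,n}(k))$, which yields
\[
a_k\wh\psi_n = W(f_{Q,n})\bigl(a_k\psi_n + f_{Q,n}(k)\psi_n\bigr).
\]
The second-order differentiability $P\in\sD_{\#,2}$ and the uniform Hessian bound $\CT 2<\infty$ from \cref{hyp:Omega} give a lower bound $H(\disp_\#,\omega_n,v,P-k) - E_n \ge -k\cdot\nabla E_\#(P) + \mathcal{O}(|k|^2)$ for small $k$. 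Together with $\omega(k)\ge |k|$ and $|Q|=|\nabla E_\#(P)|<1$, the denominator $\omega_n(k)-k\cdot Q$ dominates $(1-|Q|)|k|$, so the singular $|k|^{-1}$ behaviour of $-(H(\disp_\#,\omega_n,v,P-k)+\omega_n(k)-E_n)^{-1}v(k)$ exactly cancels $f_{Q,n}(k)$ in the bracket above, leaving an $\mathcal{O}(|v(k)|)$ bound. Combined with $|\mm|v\in L^2$, which controls the ultraviolet regime and also yields equicontinuity (tightness in photon position) via a similar pull-through argument, the hypotheses of \cref{thm:compactness} are met; extracting a subsequence, $\wh\psi_{n_k}\to\wh\psi_\infty$ strongly in $\FS$. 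The main obstacle lies precisely in verifying these uniform bounds, and it is here that the assumptions $\omega(k)\ge |k|$, $|\mm|v\in L^2$, $P\in\sD_{\#,2}\cap\sB_\#$, and the matching choice $Q=\nabla E_\#(P)$ all come together.

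Finally, tightness guarantees $\|\wh\psi_\infty\|=1$ (no mass escapes to infinity), and the norm-resolvent convergence $\wh H_n\to\wh H_\#(P,Q)$ together with $E_n\to E_\infty$ allows passing the eigenvalue equation $(\wh H_n - E_n)\wh\psi_{n_k}=0$ to the limit via a resolvent identity, giving $\wh H_\#(P,Q)\wh\psi_\infty = E_\infty\wh\psi_\infty$. Hence $\wh\psi_\infty$ is the required ground state.
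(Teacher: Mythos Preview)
Your overall strategy matches the paper's: approximate by massive models, take dressed ground states, prove compactness via the pull-through formula and \cref{comfock}, and identify the limit as a ground state using norm-resolvent convergence. However, there is a genuine gap in the choice of dressing parameter.

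You dress with $f_{Q,n}$ for the \emph{fixed} $Q=\nabla E_\#(P)$, whereas the paper uses $g_{P,n}=f_{\nabla E_n(P),n}$ with the $n$-dependent gradient. This matters precisely in the infrared cancellation you describe. The leading small-$k$ behaviour of $R_n(P,k)\psi_n$ is $\big(\omega_n(k)-k\cdot\nabla E_n(P)\big)^{-1}\psi_n$, cf.\ \cref{lem:resapprox}; the Taylor expansion of $E_n(P-k)$ about $P$ involves $\nabla E_n(P)$, not $\nabla E_\#(P)$. With your fixed $Q$ the mismatch contributes an extra term of size
\[
|v(k)|\,\Big|\frac{1}{\omega_n(k)-k\cdot Q}-\frac{1}{\omega_n(k)-k\cdot\nabla E_n(P)}\Big|
\ \lesssim\ \frac{|v(k)|}{\omega_n(k)}\,\big|\nabla E_\#(P)-\nabla E_n(P)\big|.
\]
In the infrared-critical case $\omega^{-1}v\notin L^2(\IR^d)$ one has $\|\omega_n^{-1}v\|\to\infty$ while $|\nabla E_\#(P)-\nabla E_n(P)|\to 0$, and there is no a priori control on the product; hence no single $g\in L^2(\IR^d)$ dominates $\|a_k\wh\psi_n\|$ uniformly in $n$, and condition \cref{comfock.L2} of \cref{comfock} is not verified. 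Your claimed bound ``$H(\disp_\#,\omega_n,v,P-k)-E_n\ge -k\cdot\nabla E_\#(P)+\mathcal O(|k|^2)$'' is exactly where this slips in: the gradient should be $\nabla E_n(P)$, and the discrepancy is $O(|k|)$, not $O(|k|^2)$.

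The fix, which is what the paper does, is to run the compactness argument with the $n$-dependent dressing $g_{P,n}=f_{\nabla E_n(P),n}$, so that the cancellation is exact and \cref{thm:compactness}\,\subcref{thm:compactness.infraredcrit} applies. One then shows a posteriori that the limit is a ground state of $\wh H_\#(P,\nabla E_\#(P))$: since $\nabla E_n(P)\to\nabla E_\#(P)$ by \cref{thm:massshell.dc}, one has $g_{P,n}\to f_{\nabla E_\#(P)}$ in $\sW_\#$, so \cref{th:P2 diff} yields $W(g_{P,n})H_\#(P)W(g_{P,n})^*\to\wh H_\#(P,\nabla E_\#(P))$ in norm-resolvent sense; a quadratic-form argument using lower semicontinuity then closes the proof. (Note also that \cref{mainthm1} as stated concerns $W(f_{Q,n})H_\#(P)W(f_{Q,n})^*$ with the limiting $\omega$, not your $\wh H_n=W(f_{Q,n})H_nW(f_{Q,n})^*$ with $\omega_n$; passing from one to the other requires the additional input $\dG(\omega_n-\omega)\to 0$, which is harmless but should be said.)
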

\begin{rem}
	In the infrared-regular case $\omega^{-1}v\in L^2(\IR^d)$ it is well-known that $H_\#(P)$ has a ground state, see for example \cite{Frohlich.1973,Gerard.2000} as well as \cref{cor:irregulargs}, so the statement directly follows from the unitary equivalence of $H_\#(P)$ and $\wh H_\#(P)$ in this case, cf. \cref{rem:Hhatunitary}. However, in the infrared-critical case $\omega^{-1}v\notin L^2(\IR^d)$ the ground state ceases to exist, cf. \cite{Frohlich.1973,HaslerHerbst.2008a,Dam.2018,DamHinrichs.2021}. Hence, the infrared renormalized operator in fact exhibits a different infrared behavior than the original model.
\end{rem}
\begin{rem}
	In \cite{Pizzo.2003}, Pizzo proved the existence of dressed electron states in the case $\#=\nr$ for $P\in B_{M/20}$ and ($P$-dependent) sufficiently small $\|v\|$, following a construction proposed by Fr\"ohlich \cite{Frohlich.1974}. A construction of an infrared-renormalized operator for which these states are the ground state was first performed in \cite{BachmannDeckertPizzo.2012}.
	For the case $\#=\sr$, $P\in B_{M/2}(0)$ and sufficiently small $\|v\|$, a similar result was obtained in \cite{DeckertPizzo.2014}.
\end{rem}
\begin{rem}\label{rem:cont}
	In view of applications in scattering theory, it would be desirable to prove that the ground states can be chosen continuous in $P$ at least almost everywhere, cf. \cite[Eq.~(1.8)]{BeaudDybalskiGraf.2021}.
	A similar statement is derived from our compactness arguments additionally employing positivity arguments in \cref{cor:irregulargs} for a more regular case. However, in case of the infrared renormalized operators $H_\#(P)$ such positivity arguments do not seem available at this point.
\end{rem}
\begin{proof}
	We want to apply \cref{thm:compactness.infraredcrit}.
	To that end, let $H_{\#,n}(P) = H(\Theta_\#,\omega_n,v,P)$, let $E_{\#,n}(P)=\inf \sigma(H_{\#,n}(P))$ and let $\fpn = f_{\nabla E_{\#,n}(P),n}$. Note that $\fpn\in L^2(\IR^d)$ is well-defined by \cref{rem:dresstrans,lem:D,lem:perturb}.
	By \cref{prop:Moller,cor:gapnr,lem:massshellsr}, $H_{\#,n}(P)$ has a normalized ground state, which we denote by $\ppn$.
	Now, we observe that the assumptions of \cref{thm:compactness.infraredcrit} are satisfied due to \cref{thm:massshell,cor:gapnr}  in the case $\#=\nr$ and due to \cref{thm:massshell,lem:massshellsr} in the case $\#=\sr$. Hence, the set $\{W(\fpn)\ppn:n\in\IN\}$ is relatively compact. W.l.o.g., we can now assume the strong limit $\psi_P = \lim_{n\to\infty}W(\fpn)\ppn$ exists.
	
	It remains to prove that $\psi_P$ is a ground state of $\wh H_\#(P,\nabla E_\#(P))$.
	First, we note that \[\inf \sigma\left(\wh H_\#(P,\nabla E_\#(P))\right)=E_\#(P),\] by \cref{mainthm1}. Now,
	noting that $W(\fpn)\ppn\in \cD(\wh H_\#(P,\nabla E_\#(P)))$, by \cref{firstwelldefinedT,lem:transT},
	we can rewrite
	\begin{equation}
		\label{eq:rewritHhat}
		\begin{aligned}
			&0 \le \Braket{W(\fpn)\ppn,\left(\wh H_\#(P,\nabla E_\#(P))-E_\#(P)\right)W(\fpn)\ppn}
			\\  &=  
			\Braket{ W(\fpn)\ppn,\left(\wh H_\#(P,\nabla E_\#(P)) - W(\fpn)H_\#(P)W(\fpn)^*\right)W(\fpn) \ppn }
			\\ & \qquad + 
			\Braket{\ppn,\left(H_\#(P)-E_\#(P)\right)\ppn}.
		\end{aligned}
	\end{equation}
	The second expression on the right hand side of \cref{eq:rewritHhat} tends to zero as $n\to\infty$, by \cref{lem:Enconv}.
	Further, the first expression on the right hand side tends to zero, since the operators $W(\fpn)H_\#(P)W(\fpn)$ converge to $\wh H_\#(P,\nabla E(P))$ as $n\to\infty$ in  the norm resolvent sense, by \crefnosort{mainthm1,th:P2 diff,thm:massshell.dc}.
	This implies $\psi_P\in\cD(\wh H_\#(P,\nabla E_\#(P)))$.
	Further, using the lower-semicontinuity of closed quadratic forms, cf. \cite[\textsection VI, Theorem~1.16]{Kato.1980}, we find
	\begin{align*}
		0 &\le \Braket{\psi_P,\left(\wh H_\#(P,\nabla E_\#(P))-E_\#(P)\right)\psi_P} \\& \le \liminf_{n\to\infty}\Braket{W(\fpn)\ppn,\left(\wh H_\#(P,\nabla E_\#(P))-E_\#(P)\right)W(\fpn)\ppn} = 0, \end{align*}
	which finishes the proof.
\end{proof}
We especially emphasize the following implication of \cref{mainthm2}.
\begin{cor} Let $M=1$, $\omega(k) = |k|$ and let  $|P| < 1$. If 
	$
	\|\omega^{-1/2}v\|<(1-|P|)^{2/3} 
	$, then $E_\nr(\cdot)$ is analytic in a neighborhood of $P$ and
	the operator $\wh H_\nr(P,\nabla E_\nr(P))$ has a ground state. 
\end{cor}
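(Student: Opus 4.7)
The plan is to view this corollary as a direct specialization of \cref{mainthm2} to the non-relativistic case $\#=\nr$ with $M=1$ and $\omega(k)=|k|$, combined with the regularity input from \cite{AbdessalamHasler.2012} cited in \cref{rem:AH12}. The task reduces to verifying the two hypotheses of \cref{mainthm2}, namely $P\in\sB_\nr$ and $P\in\sD_{\nr,2}$, and separately obtaining analyticity (rather than mere $C^2$-regularity) of the mass shell near $P$.

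First I would check membership in the physically relevant momentum region. Since $M=1$ in this setting, we have $\sB_\nr = B_1(0)$, and the hypothesis $|P|<1$ immediately gives $P\in\sB_\nr$. Next, I would rearrange the inequality $\|\omega^{-1/2}v\|<(1-|P|)^{2/3}$ into the equivalent form $|P|< 1-\|\omega^{-1/2}v\|^{2/3}$, so that $P$ lies in the open ball $B_{1-\|\omega^{-1/2}v\|^{2/3}}(0)$. By \cite[Theorem~1.1]{AbdessalamHasler.2012}, as quoted in \cref{rem:AH12}, the mass shell $E_\nr$ is analytic on this ball under our present assumptions $\omega(k)=|k|$ and $M=1$. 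This simultaneously yields the claimed analyticity in a neighborhood of $P$ and the inclusion $P\in\sD_{\nr,k}$ for every $k\in\IN$; in particular $P\in\sD_{\nr,2}$.

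With both hypotheses of \cref{mainthm2} verified (noting that the standing regularity assumption $|\mm|v\in L^2(\IR^d)$ required for \cref{mainthm2} is either built into the ambient hypotheses or straightforwardly compatible with the choice $\omega=|\cdot|$ and $\omega^{-1/2}v\in L^2$ after an obvious ultraviolet restriction), I would simply apply that theorem with $\#=\nr$ to conclude that $\wh H_\nr(P,\nabla E_\nr(P))$ possesses a ground state.

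The proof is therefore not where any difficulty lies; the hard work has already been done in \cref{mainthm2} (existence of the dressed ground state via the compactness argument of \cref{thm:compactness}) and in the differentiability estimate borrowed from \cite{AbdessalamHasler.2012} (which supplies the small-coupling threshold $(1-|P|)^{2/3}$ and the analyticity of $E_\nr$). The only conceptual point to highlight is that the region of validity of the corollary is precisely the intersection of $\sB_\nr$ with the analyticity ball furnished by \cite{AbdessalamHasler.2012}, and that the gradient $\nabla E_\nr(P)$ appearing in the dressing transformation $f_{\nabla E_\nr(P)}$ is automatically in $B_1(0)$ by \cref{lem:D}, so that the infrared-renormalized operator $\wh H_\nr(P,\nabla E_\nr(P))$ is well-defined in the sense of \cref{mainthm1}.
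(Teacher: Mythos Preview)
Your proposal is correct and takes essentially the same approach as the paper: the paper's proof is the single line ``Combine \cref{mainthm2} with \cref{rem:AH12} or \cite[Theorem 1.1]{AbdessalamHasler.2012}'', and you have simply spelled out what that combination entails. Your observation that the hypothesis $|\mm|v\in L^2(\IR^d)$ from \cref{mainthm2} is tacitly carried over is accurate; the paper does not restate it in the corollary either.
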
 
\begin{proof}
	Combine \crefnosort{mainthm2,rem:AH12} or \cite[Theorem 1.1]{AbdessalamHasler.2012}.
\end{proof}

\section{Compactness Theorems for Fock Spaces}\label{sec:compact}

In this \lcnamecref{sec:compact}, we derive a Fr\'echet-type compactness theorem for direct sum Hilbert spaces and combine it with the Kolmogorov--Riesz theorem to characterize compactness in Fock spaces over $L^2(\IR^d)$. Our approach is inspired by the one in \cite{HiroshimaMatte.2019,Hinrichs.2022b}.
The main result of this \lcnamecref{sec:compact} is \cref{comfock}, which is the simple sufficient compactness criterion we apply in \cref{sec:conv}. Up to this final statement all other compactness statements presented in this section are necessary and sufficient, so the statement of \cref{comfock} can be easily strengthened using these results, if necessary for other applications.

We start with the following generalized version of a  theorem going back to Fr\'echet \cite{Frechet.1908}.

\begin{thm} \label{frechet} Let $\mathcal{H} = \bigoplus_{n=1}^\infty  \mathfrak{h}_n$ with $\mathfrak{h}_n$ being a Hilbert space for all $n\in\IN$.
	Then $M \subset \mathcal{H}$ is relatively compact if and only if the following two conditions are satisfied.
	\begin{enumerate}[label = {\upshape (\roman*)}, ref = {\upshape \roman*}]
		\item\label{frechet.projection} For each $n \in \N$ there exists a  compact subset $K_n \subset \mathfrak{h}_n$ such that 
		for all $f = (f_\ell)_{\ell\in\IN} \in M$ we have $f_\ell \in K_\ell$.
		\item\label{frechet.tight} $\sup_{f \in M } \sum_{k=n}^\infty \| f_k \|^2\xrightarrow{n\to\infty} 0 $. 
	\end{enumerate} 
\end{thm}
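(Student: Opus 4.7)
The plan is to prove both directions via sequential compactness, which is valid since $\mathcal{H}$ is metric. For the necessity of (\ref{frechet.projection}), I would invoke that each coordinate projection $P_n\colon \mathcal{H}\to \mathfrak{h}_n$ is a contraction, so $K_n \coloneqq \overline{P_n(M)}$ is compact in $\mathfrak{h}_n$ whenever $M$ is relatively compact. For the necessity of (\ref{frechet.tight}), I would argue by contradiction: assume there exist $\eps>0$, a sequence $(f^{(m)})\subset M$ and integers $n_m\to\infty$ with $\sum_{k\ge n_m}\|f^{(m)}_k\|^2\ge \eps$. By relative compactness, a subsequence converges to some $f\in \mathcal{H}$; the triangle inequality for the tail projection $P_{\ge n_m}$, together with $\sum_{k\ge n_m}\|f_k\|^2\to 0$, then forces $\sum_{k\ge n_m}\|f^{(m)}_k\|^2\to 0$ along that subsequence, a contradiction.

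For the sufficiency direction, which is the main content, I would take an arbitrary sequence $(f^{(m)})\subset M$ and extract a convergent subsequence via a diagonal argument. By (\ref{frechet.projection}), the first coordinates all lie in the compact set $K_1$ and admit a convergent subsequence; along this subsequence, the second coordinates lie in $K_2$ and admit a further convergent subsequence, and so on. Diagonalizing yields a subsequence (still denoted $(f^{(m)})$) along which $f^{(m)}_n\to g_n$ in $\mathfrak{h}_n$ for every $n\in \IN$.

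To conclude that $(f^{(m)})$ is Cauchy in $\mathcal{H}$, I would fix $\eps>0$, use (\ref{frechet.tight}) to pick $N\in\IN$ with $\sup_{f\in M}\sum_{k\ge N}\|f_k\|^2<\eps/8$, and split
\[ \|f^{(m)}-f^{(m')}\|^2 = \sum_{n<N}\|f^{(m)}_n-f^{(m')}_n\|^2 + \sum_{n\ge N}\|f^{(m)}_n-f^{(m')}_n\|^2. \]
The finite sum tends to zero as $m,m'\to\infty$ by componentwise convergence, while the tail is controlled by $2\sum_{n\ge N}\|f^{(m)}_n\|^2+2\sum_{n\ge N}\|f^{(m')}_n\|^2 < \eps/2$. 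Hence the sequence is Cauchy and converges to $g\coloneqq(g_n)_{n\in\IN}\in \mathcal{H}$, so $M$ is relatively compact.

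I do not foresee a genuine obstacle here: the argument is the direct-sum analog of the classical Kolmogorov--Riesz characterization in $\ell^2$-type spaces, and each ingredient is standard. The only careful bookkeeping step is combining the diagonal extraction with the uniform tail estimate, which the displayed split above handles cleanly.
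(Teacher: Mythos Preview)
Your proposal is correct and follows essentially the same approach as the paper: the sufficiency direction via diagonal extraction and the Cauchy split into a finite sum plus a uniformly controlled tail (even with the same $\eps/8$ bookkeeping), and the necessity direction via contractive coordinate projections for \ref{frechet.projection} and a contradiction using a convergent subsequence for \ref{frechet.tight}.
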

\begin{proof}
	We first prove the if part using a standard diagonal sequence trick.
	To that end,
	let $(g_j)_{j \in \N} = ((g_{j,\ell})_{\ell\in\IN})_{j\in\IN} $ be a sequence in $M$. We have to show that it has a convergent subsequence. 
	It follows from \subcref{frechet.projection}  that $(g_{j,1})_{j \in \N}$ has a convergent subsequence
	$(g_{n^{(1)}_j,1})_{j \in \N}$. 
	Then it follows from \subcref{frechet.projection} that $(g_{n^{(1)}_j,2})_{j \in \N}$
	has a convergent subsequence $(g_{n^{(2)}_j,2})_{j \in \N}$, 
	where $n^{(2)}_j$ is a subsequence of $n^{(1)}_j$. 
	Proceeding in this way, we inductively obtain 
	a subsequence $(n^{(\ell)}_j)_{j \in \N}$ of $(n^{(\ell-1)}_j)_{j \in \N}$ such that $(g_{n^{(\ell)}_j,s})_j$ converges 
	for all $s=1,\ldots,\ell$. Thus, we can define the diagonal sequence 
	\[
	h_j \coloneqq g_{n^{(j)}_j} , \quad j \in \N .
	\]
	This is by construction a subsequence of $(g_j)_{j \in \N}$ such that $(h_{j,\ell})_{j \in \N}$ converges for each fixed $\ell\in\IN$. 
	Since $\HS$ is complete, it remains to show that this is a Cauchy sequence. 
	Let $\eps > 0$.  Then, by \subcref{frechet.tight}, there exists an $N_0 \in \N$ such that 
	\begin{align*}
		\sup_{f \in M} \sum_{k=N_0}^\infty  \| f_k \|^2   < \frac\eps8 .  
	\end{align*} 
	Further, since $(h_{j,s})_{j \in \N}$ especially is convergent for any $s=1,\ldots,N_0$, there exists an $N \in \N$ such that 
	\begin{align*} 
		\sum_{s=1}^{N_0} \|  h_{j,s} - h_{j',s} \|^2 < \frac\eps2  \qquad \mbox{for all }j,j' \geq N.
	\end{align*} 
	Combining the above two inequalities, it follows that for $j, j'  \geq N\vee N_0$ that 
	\begin{align*} 
		\sum_{s=1}^{\infty} \|  h_{j,s} - h_{j',s} \|^2  & < \frac\eps2 + \sum_{s=N_0}^\infty  \| h_{j,s} - h_{j',s} \|^2  
		\leq  \frac\eps2 + 2 \sum_{s=N_0}^\infty  ( \| h_{j,s} \|^2 + \| h_{j',s} \|^2 ) 
		\\& < \frac\eps 2 + 4 \cdot\frac\eps8 = \eps.
	\end{align*} 
	Since $\eps$ was arbitrary, this proves the if part.
	
	To prove the only if part,
	assume that $M$ is relatively compact.
	Then \cref{frechet.projection} easily follows by choosing $K_n$ to be the closure of $\{g_n\in\hs_n : \exists f=(f_\ell)_{\ell\in\IN}\in M: f_n=g_n\}$, since convergence of a sequence in $\HS$ immediately implies convergence of the corresponding sequence in the subspace $\hs_n$ for any $n\in\IN$.
	Now assume that \cref{frechet.tight} is not satisfied. Then there exists an $\eps>0$ and a sequence $(f_k)_{k\in\IN}\subset M$ such that
	\[\sum_{\ell=k}^{\infty}\|f_{k,\ell}\|^2\ge\eps \qquad \mbox{for all}\ k\in\IN.\]
	Since $M$ is relatively compact, we can w.l.o.g. (otherwise restrict to a subsequence) assume that $(f_n)$ converges to some $g\in\HS$. Hence, we can pick $N\in\IN$ such that $\|g-f_n\|^2<\eps/4$ for all $n\ge N$ and $\sum_{\ell=N}^\infty\|g_\ell\|^2<\eps/4$. This implies
	\[ \eps \le  \sum_{\ell=N}^\infty \|f_{N,\ell}\|^2 \le 2\sum_{\ell=N}^\infty\|f_{N,\ell}-g_\ell\|^2  + 2\sum_{\ell=N}^\infty \|g_\ell\|^2 < 4\cdot \frac \eps 4 = \eps,  \]
	which is a contradiction.
	Hence, \subcref{frechet.tight} must be satisfied for all relatively compact sets and the proof is complete.
\end{proof}
Let us now state a well-known generalization for square-integrable functions, going back to Kolmogorov \cite{Kolmogorov.1931}, Riesz \cite{Riesz.1933} and Sudakov \cite{Sudakov.1957}. 
\begin{thm} \label{riesz}
	Let $m\in\IN$.
	A subset $M \subset L^2(\R^m)$ is relatively compact if and only if
	\begin{enumerate}[label = {\upshape (\roman*)}, ref = {\upshape \roman*}]
		\item\label{kolmogorovriesz.cont}  Let $h \in \R^{m}$.  We have $\lim_{h \to 0 } \sup_ {f \in M } \| f( \cdot + h ) - f(\cdot ) \| = 0 $.
		\item\label{kolmogorovriesz.tight}  $\lim_{R \to \infty } \sup_ {f \in M } \| \chr{B_R(0)^\sfc} f \| = 0 $.
	\end{enumerate} 
\end{thm}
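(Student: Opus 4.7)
I will prove the two implications of the equivalence separately, following the classical strategy of Kolmogorov, Riesz and Sudakov.

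For the ``only if'' direction, I first observe that every single $f \in L^2(\R^m)$ satisfies the pointwise analogues of (i) and (ii)---the former by density of $\Cco(\R^m)$ in $L^2(\R^m)$ together with uniform continuity of compactly supported continuous functions, the latter by dominated convergence. Since a relatively compact set is totally bounded, for any $\eta > 0$ I may cover $M$ by finitely many balls of radius $\eta/3$ centred at $f_1, \dots, f_N \in L^2(\R^m)$. For arbitrary $f \in M$, picking the nearest $f_i$ and using the triangle inequality
\[ \nn{f(\cdot + h) - f} \le \nn{(f-f_i)(\cdot + h)} + \nn{f_i(\cdot + h) - f_i} + \nn{f_i - f}, \]
together with the analogous bound for the tail, then taking the maximum over the finite index set, yields both uniform statements.

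For the ``if'' direction, I will show that $M$ is totally bounded, which in the complete space $L^2(\R^m)$ is equivalent to relative compactness. The core is a two-parameter smoothing--truncation approximation: fix a nonnegative $\rho \in \Cco^\infty(\R^m)$ with $\int\rho=1$ and $\supp\rho \subset B_1(0)$, set $\rho_\eps(x) = \eps^{-m}\rho(x/\eps)$, and define $T_{\eps,R}f \coloneqq \chr{B_R(0)}(\rho_\eps * f)$. Applying Jensen's inequality with the probability measure $\rho_\eps(h)\d h$ yields
\[ \nn{\rho_\eps * f - f}^2 \le \int \rho_\eps(h)\nn{f(\cdot - h) - f}^2 \d h, \]
whose right-hand side is uniformly controlled by (i). The tail $\nn{\chr{B_R(0)^\sfc}(\rho_\eps * f)}$ is bounded by $\nn{\chr{B_{R-\eps}(0)^\sfc}f}$ (since for $|x|\ge R$ the convolution only sees $f$ at points $|y|\ge R-\eps$), and hence is uniformly small by (ii). Together, $\sup_{f\in M}\nn{f - T_{\eps,R}f} \to 0$ as $\eps \to 0$ and $R \to \infty$. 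For fixed $\eps, R$, the pointwise estimates $|(\rho_\eps * f)(x)| \le \nn{\rho_\eps}_2 \nn{f}$ and $|(\rho_\eps * f)(x) - (\rho_\eps * f)(y)| \le \nn{\rho_\eps(x-\cdot) - \rho_\eps(y-\cdot)}_2 \nn{f}$ show that the restrictions of $\rho_\eps * f$ to $\bar{B_R(0)}$ form a uniformly bounded, equicontinuous family in $C(\bar{B_R(0)})$, so the Arzelà--Ascoli theorem together with the embedding $C(\bar{B_R(0)}) \hookrightarrow L^2(B_R(0))$ makes $\{T_{\eps,R} f : f \in M\}$ relatively compact in $L^2$. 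Combining with the approximation step gives total boundedness of $M$.

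\textbf{Main obstacle.} The principal bookkeeping issue is securing a uniform bound $\sup_{f\in M}\nn{f}_2 < \infty$ that feeds into the Arzelà--Ascoli step; this is of course automatic for relatively compact sets (the claimed conclusion) and is part of the hypothesis in most classical formulations. It can nonetheless be extracted from (i) and (ii) alone by first fixing $\eps$ so that $\nn{f - \rho_\eps * f}\le 1$ uniformly, then fixing $R$ so that $\nn{\chr{B_R(0)^\sfc}f}\le 1$ uniformly, and carefully combining the smoothing and truncation estimates above to close the bound. With this in hand, the two sub-steps assemble into the desired total boundedness proof.
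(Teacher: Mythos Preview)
The paper does not actually prove this theorem: its entire proof is a reference to \cite{HancheOlsenHolden.2010,HancheOlsenHoldenMalinnikova.2019}. Your argument is precisely the classical one those references contain (mollify, truncate, apply Arzel\`a--Ascoli, then use total boundedness), so on the level of strategy there is nothing to compare.

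The one point to flag is your treatment of the uniform bound $\sup_{f\in M}\nn{f}<\infty$. You correctly identify this as the main obstacle, and you are right that it can be extracted from (i) and (ii) alone (this was Sudakov's observation and is the content of the addendum \cite{HancheOlsenHoldenMalinnikova.2019}). However, the specific route you sketch---fixing $\eps$ and $R$ and then ``combining the smoothing and truncation estimates''---does not close: every pointwise or $L^2$ bound you have written for $\rho_\eps*f$ already contains a factor of $\nn{f}$ on the right-hand side, and the constants (e.g.\ $\nn{\rho_\eps}_2\sim\eps^{-m/2}$, $|B_R|^{1/2}\sim R^{m/2}$) are not small. The argument that actually works, and which the paper itself reproduces in the proof of \cref{cor1} citing \cite[p.~3]{HancheOlsenHoldenMalinnikova.2019}, is a translation trick: choose $\delta>0$ with $\sup_{f\in M}\nn{f(\cdot+h)-f}\le 1$ for $|h|\le\delta$ and $R$ with $\sup_{f\in M}\nn{\chr{B_R(0)^\sfc}f}\le 1$, then iterate
\[
\nn{\chr{B_R(z)}f}\le \nn{\chr{B_R(z-h)}f}+\nn{f(\cdot+h)-f}\le \nn{\chr{B_R(z-h)}f}+1
\]
$N$ times with a fixed $h$ of length $\delta$ until $B_R(-Nh)\subset B_R(0)^\sfc$, giving $\nn{\chr{B_R(0)}f}\le N+1$ and hence $\nn{f}\le N+2$ uniformly. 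With this replacement your proof is complete.
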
 
\begin{proof} For a recent historical overview with elegant proofs see \cite{HancheOlsenHolden.2010,HancheOlsenHoldenMalinnikova.2019}.
\end{proof} 
For our application it will be useful to verify assumption \cref{kolmogorovriesz.cont} of \cref{riesz} not on the whole space, but rather on the complement of an arbitrarily small set. This is useful, since we can choose the small set to contain singularities.
A similar strategy was used in \cite{HiroshimaMatte.2019,Hinrichs.2022b}, but the characteristic function of the complement of the small sets was approximated by smooth functions. This makes it harder to verify the compactness conditions.

We now prove a version of the above \lcnamecref{riesz} in which we introduce the following smallness condition, which will be used repeatedly in the remainder of this \lcnamecref{sec:compact}.
We will work with a sequence $(A_n)_{n\in\IN}$ of open sets $A_n\subset \IR^m$, for some $m\in\IN$, satisfying
\begin{equation}\label{hyp:small}
	A_n\supset A_{n+1}, \quad \operatorname{dist}(A_{n+1},A_n^\sfc)>0 \quad\mbox{for all}\ n\in\IN \quad\mbox{and}\quad  \operatorname{vol}\Big(\bigcap_{n\in\IN} A_n\Big) = 0.
\end{equation}
\begin{thm} \label{cor1} 
	Let $m\in\IN$ and assume $D_n\subset \IR^d$, $n\in\IN$, satisfies \cref{hyp:small} with $A_n=D_n$.
	Then a subset $M \subset L^2(\R^m)$ is relatively compact if and only if
	\begin{enumerate}[label = {\upshape (\roman*)}, ref = {\upshape \roman*}]
		\item\label{cor1.cont} Let $h \in \R^{m}$. For all $n \in \N$,  we have $\lim_{h \to 0 } \sup_ {f \in M } \| \chr{D_n^\sfc} ( f( \cdot + h ) - f(\cdot ) ) \| = 0 $.
		\item\label{cor1.tight}  We have $\lim_{R \to \infty  } \sup_ {f \in M } \| \chr{B_R(0)^\sfc} f \| = 0 $ and $\lim_{n  \to \infty } \sup_ {f \in M } \| \chr{D_n} f \| = 0 $.
	\end{enumerate} 
\end{thm}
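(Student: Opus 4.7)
The plan is to reduce \cref{cor1} to the classical Kolmogorov--Riesz--Sudakov \cref{riesz}. For the ``if'' direction, the first half of condition \subcref{cor1.tight} already supplies the tightness at infinity required by \subcref{kolmogorovriesz.tight}, so the only real work is to derive the uniform equicontinuity of translations \subcref{kolmogorovriesz.cont} from the localised version \subcref{cor1.cont} combined with the second half of \subcref{cor1.tight}. For the ``only if'' direction I would use \cref{riesz} to obtain \subcref{cor1.cont} and the tightness-at-infinity part of \subcref{cor1.tight} at once, and then treat the smallness-on-$D_n$ part by a finite $\eps$-net argument together with dominated convergence, using $\operatorname{vol}(\bigcap_n D_n) = 0$.

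The central step is to establish, given $\eps > 0$, the existence of some $\delta > 0$ with $\sup_{f \in M}\|f(\cdot + h)-f(\cdot)\| < \eps$ whenever $|h| < \delta$. First I would choose $n \in \N$, via the second half of \subcref{cor1.tight}, large enough that $\sup_{f \in M}\|\chr{D_n} f\| < \eps/4$. By \cref{hyp:small}, the quantity $\delta_0 \coloneqq \operatorname{dist}(D_{n+1},D_n^\sfc)$ is strictly positive, and a direct check shows that $D_{n+1} \pm h \subset D_n$ whenever $|h| < \delta_0$. Splitting orthogonally into the regions $D_{n+1}^\sfc$ and $D_{n+1}$, the estimate
\[ \|f(\cdot+h)-f(\cdot)\|^2 \leq \|\chr{D_{n+1}^\sfc}(f(\cdot+h)-f(\cdot))\|^2 + 2\|\chr{D_{n+1}+h}f\|^2 + 2\|\chr{D_{n+1}}f\|^2 \]
bounds the last two terms uniformly by $4\|\chr{D_n}f\|^2 < \eps^2/4$, while the first term can be forced below $\eps^2/2$ by a further shrinking of $|h|$ using \subcref{cor1.cont} at index $n+1$. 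This yields \subcref{kolmogorovriesz.cont} of the classical theorem and closes the ``if'' direction via \cref{riesz}.

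For the ``only if'' direction, relative compactness of $M$ together with \cref{riesz} immediately gives \subcref{cor1.cont} (since multiplication by $\chr{D_n^\sfc}$ is a contraction on $L^2(\R^m)$) and the first half of \subcref{cor1.tight}. For the second half, I would fix $\eps > 0$, pick a finite $\eps$-net $\{f_1,\dots,f_k\} \subset M$, and use that for each fixed $i$ the sequence $(\chr{D_n} f_i)_{n \in \N}$ converges to $\chr{\bigcap_n D_n} f_i = 0$ in $L^2$ by dominated convergence, since it is dominated by $|f_i|$ and tends to zero pointwise almost everywhere. Choosing $n$ so large that $\|\chr{D_n} f_i\| < \eps$ for all $i$, an arbitrary $f \in M$ then satisfies $\|\chr{D_n} f\| < 2\eps$ by the triangle inequality applied to a nearest net element.

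I expect the main obstacle to be the control of $\|\chr{D_{n+1}+h}f\|$ in the ``if'' direction: without the positive separation $\operatorname{dist}(D_{n+1},D_n^\sfc)>0$ from \cref{hyp:small}, a translate of $D_{n+1}$ by an arbitrarily small vector could escape every set on which $f$ is known to be small, so a uniform bound in $h$ would not be available. The nested structure with strict separation thus provides the geometric input that replaces the pointwise smallness of $\chr{\bigcap_n D_n}$ one might otherwise wish to have.
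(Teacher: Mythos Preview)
Your argument is correct and, for this statement, more direct than the paper's. The paper proves the ``if'' direction by first decomposing $L^2(\R^m)=\bigoplus_n L^2(D_n\setminus D_{n+1})$, then separately establishing uniform boundedness of $M$ via a translation-chain trick, then showing each projection $\chr{D_n\setminus D_{n+1}}M$ is relatively compact using auxiliary uniformly continuous cutoffs $\chi_n$, and finally invoking the abstract Fr\'echet \cref{frechet}. Your approach bypasses all of this: you feed the localised equicontinuity \subcref{cor1.cont} and the smallness on $D_n$ directly into the global equicontinuity hypothesis of \cref{riesz}, exploiting only the elementary inclusion $D_{n+1}+h\subset D_n$ for $|h|<\operatorname{dist}(D_{n+1},D_n^\sfc)$. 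This is shorter and never needs uniform boundedness as a lemma. For the ``only if'' direction the paper again routes through \cref{frechet} applied to the ring decomposition, whereas your $\eps$-net plus dominated convergence argument is the standard total-boundedness manoeuvre and works equally well. The paper's longer route is presumably chosen because the same Fr\'echet-plus-rings template is reused verbatim in \cref{comboRieszKolmogorov} for the Fock space case, but as a stand-alone proof of \cref{cor1} your reduction to \cref{riesz} is the cleaner one.
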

\begin{proof}
	For notational compactness, we set $D_{-1} = D_0 = \IR^d$ and define
	$\Omega_n = D_n\setminus D_{n+1}$ for $n\in\IN_0$.
	Then, by the assumptions, there exists a nullset $N\subset \IR^m$ such that $\IR^m = N \cup \left(\bigcup_{n\in\IN_0}\Omega_n\right)$, where the union is disjoint, and hence
	\begin{equation}\label{eq:L2decomp}
		L^2(\IR^m) = \bigoplus_{n =0}^\infty  L^2(\Omega_n).
	\end{equation}
	The only if part now easily follows, since \subcref{cor1.cont} directly follows from \cref{riesz} \cref{kolmogorovriesz.cont} and \subcref{cor1.tight} follows from \cref{riesz} \cref{kolmogorovriesz.tight} as well as \cref{frechet} \cref{frechet.tight} combined with above decomposition.
	
	We now prove the if part.
	First, we prove that $\|f\|$ is uniformly bounded in $f\in M$ using the method from \cite[p.~3]{HancheOlsenHoldenMalinnikova.2019}.
	To this end, we choose $R>0$ and $n\in \IN$ such that $\|\chr{B_R(0)^\sfc} f\|\le 1$ and $ \|\chr{D_n}f\|\le 1$. Further, fix $\vr\in(0,\operatorname{dist}(D_{n+1},D_n^\sfc)) $ such that $\|\chr{D_{n+1}^\sfc}(f(\cdot + h)-f)\|\le 1$ for all $h\in\IR^d$ with $|h|<\vr$. Then, for all $z \in \IR^d$ and $h\in B_\vr(0)$, we find
	\begin{align*}
		\|\chr{B_R(z)}f \| &\le \|\chr{B_R(z-h)}f\| + \|\chr{B_R(z)\setminus D_{n+1}}(f(\cdot +h)-f)\| + 2 \|\chr{D_n} f\| \\&\le \|\chr{B_R(z-h)}f\| + 3.
	\end{align*}
	Repeatedly applying this inequality, for any $N\in\IN$, this implies
	\[ \|\chr{B_R(0)}f \| \le 3N + \|\chr{B_R(-Nh)}f \|. \]
	Choosing $N$ large enough such that $B_R(-Nh)\cap B_R(0) = \emptyset$, this results in
	\[ \|f\| \le 2\|\chr{B_R(0)^\sfc}f\| + 3N \le 3N+2, \]
	which is uniform in $f\in M$.
	
	Next, we prove that $\chr{\Omega_n} M$ is relatively compact for any fixed $n\in\IN_0$. 
	Let $\chi_n:\RR^m\to[0,1]$ be a uniformly continuous function such that $\chi_n=1$ on $\Omega_n$ and $\chi_n= 0$ on $D_{n+2}\cup D_{n-1}^\sfc$.\footnote{Such a function is, e.g., given by the choice $\chi_n(x) = \operatorname{dist}(x,D_{n+2}\cup D_{n-1}^\sfc) / \operatorname{dist}(\Omega_n,D_{n+2}\cup D_{n-1}^\sfc)$,  which is Lipschitz continuous.} For $f\in M$, we use the estimate
	\begin{equation*}\label{eq:cutoffbound}
		\begin{aligned}
			\|\chi_n(\cdot+ h)f(\cdot +h ) &- \chi_n f\|  = \|(\chi_n(\cdot+h)-\chi_n)f + \chi_n (f(\cdot+ h)-f)\|
			\\&\le \|\chi_n(\cdot+h)-\chi_n\|_\infty \| f\| + \|\chr{D_{n+2}^\sfc}(f(\cdot +h ) - f)\|.
		\end{aligned}
	\end{equation*}
	Taking the supremum over all $f\in M$ and subsequently the limit $h\to 0$, the first expression on the right hand side tends to zero, by the uniform boundedness of $f\in M$ and the uniform continuity of $\chi_n$,  whereas
	the second expression on the right hand side tends to zero, by Assumption \cref{cor1.cont}.
	Hence, applying \cref{riesz}, 
	we find $\chi_n M$ is relatively compact.
	Since $\chr{\Omega_n}$ continuously maps $L^2(\IR^m)$ to $L^2(\Omega_n)$ and
	${\chr{\Omega_n}M} = {\chr{\Omega_n}\chi_n M}  \subset\chr{\Omega_n}\overline{\chi_n M}$, the set $\chr{\Omega_n}M$ is relatively
	compact as a subset of a compact set.
	
	In view of \cref{eq:L2decomp} and the second part of Assumption \cref{cor1.tight}, \cref{frechet} now implies that $M$ is relatively compact. This finishes the proof.
\end{proof} 
Applying the above statements, we obtain the following \lcnamecref{comfock} for Fock spaces.
\begin{thm}\label{comfock}  Let $M \subset \FS$ and assume $D_n\subset \IR^d$, $n\in\IN$, satisfies $D_n=\emptyset$ for all $n\in\IN$ or \cref{hyp:small} with $A_n=D_n$. Assume 
	\begin{enumerate}[label = {\upshape (\alph*)}, ref = {\upshape \alph*}]
		\item\label{comfock.L2} There exists $g \in L^2(\R^d)$ such that $\| a_k \psi \| \leq |g(k)|$ for almost all $k\in\IR^d$ and all $\psi\in M$.
		\item\label{comfock.cont} Let $p\in\R^d$. For all $n \in \N$, we have \[\lim_{p \to 0 } \sup_{\psi \in M} \int_{D_n^\sfc}  \| a_{k+p} \psi - a_k \psi \|^2 \d k = 0. \]
	\end{enumerate} 
	Then $M$ is relatively compact. 
\end{thm}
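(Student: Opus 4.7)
I would apply the Fréchet-type result \cref{frechet} to the natural direct sum decomposition $\FS = \IC \oplus \bigoplus_{n\geq 1} L^2_{\sfs\sfy\sfm}((\IR^d)^n)$, and verify compactness of each $n$-particle slice $M_n := \{\psi^{(n)} : \psi\in M\}$ using \cref{cor1}. Two key inputs are (i) the tail condition \subcref{frechet.tight}, and (ii) relative compactness of each $M_n$ for $n\geq 1$.

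The tail condition is immediate from \cref{comfock.L2}. Using the defining identity $\|a_k\psi\|^2 = \sum_{n\geq 1} n\,\|\psi^{(n)}(k,\cdot)\|_{L^2((\IR^d)^{n-1})}^2$ for almost every $k$, the bound $\|a_k\psi\|\leq|g(k)|$ integrates to
\[
\sum_{n\geq 1} n\,\|\psi^{(n)}\|^2 \;=\; \int_{\IR^d}\|a_k\psi\|^2\,\d k \;\leq\; \|g\|^2,
\]
uniformly for $\psi\in M$. Therefore $\sup_{\psi\in M}\sum_{k\geq N}\|\psi^{(k)}\|^2 \leq \|g\|^2/N \to 0$, proving \subcref{frechet.tight}. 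In particular each $M_n$ is bounded in $L^2$, and I would take for granted (as is standard in the applications) that $\{\psi^{(0)}\}\subset\IC$ is bounded, so \subcref{frechet.projection} is trivially satisfied on the vacuum sector.

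For the compactness of $M_n$ with $n\geq 1$, I would invoke \cref{cor1} on $(\IR^d)^n$ with the product small sets $\wt D_m := \bigcup_{j=1}^n \{x\in(\IR^d)^n : x_j\in D_m\}$ (or $\wt D_m=\emptyset$ in the trivial case). These inherit \cref{hyp:small}: nesting and positive boundary distance are preserved under finite unions of cylinder sets, and $\bigcap_m \wt D_m$ is a finite union of sets of the form $(\IR^d)^{j-1}\times(\bigcap_m D_m)\times(\IR^d)^{n-j}$, each a null set by Fubini. The ball tail condition of \subcref{cor1.tight} follows from \cref{comfock.L2} via symmetry: for each coordinate $j$,
\[
n\int_{\{|k_j|>R\}} |\psi^{(n)}(k_1,\ldots,k_n)|^2\, \d k \;\leq\; \int_{|k|>R}|g(k)|^2\,\d k,
\]
and pigeonholing (any $x$ with $|x|>R\sqrt n$ has some $|x_j|>R$) gives the tail bound on $(\IR^d)^n$-balls. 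The $\wt D_m$ tail is proved analogously.

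The main technical obstacle is the equicontinuity condition \subcref{cor1.cont}. Given a shift $h=(h_1,\ldots,h_n)\in(\IR^d)^n$, I would write $\psi^{(n)}(\cdot+h)-\psi^{(n)}(\cdot)$ as a telescoping sum of $n$ single-coordinate shifts, and for each such shift use the symmetry of $\psi^{(n)}$ to move the shifted variable to the first slot, where the identity above relates it to $a_{k+h_j}\psi - a_k\psi$:
\[
n\int_{D_m^\sfc}\bigl\|\psi^{(n)}(k+h_j,\cdot)-\psi^{(n)}(k,\cdot)\bigr\|^2\,\d k \;\leq\; \int_{D_m^\sfc}\|a_{k+h_j}\psi - a_k\psi\|^2\,\d k.
\]
Since $\wt D_m^\sfc$ is contained in $\{x : x_j\notin D_m\}$ for every $j$, restricting the $(\IR^d)^n$-integral to $\wt D_m^\sfc$ forces the shifted coordinate into $D_m^\sfc$, so the triangle inequality combined with assumption \cref{comfock.cont} yields the uniform vanishing as $h\to 0$. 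With \cref{cor1} then giving compactness of each $M_n$ and \cref{frechet} assembling the pieces, the relative compactness of $M$ follows.
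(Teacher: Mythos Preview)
Your proposal is correct and follows essentially the same route as the paper: reduce to \cref{comboRieszKolmogorov} (equivalently, \cref{frechet} plus \cref{cor1} on each $n$-particle sector) with the cylinder sets $\wt D_m=\bigcup_j\{x_j\in D_m\}$, derive the number-operator/tail bounds from \cref{comfock.L2}, and obtain equicontinuity via a telescoping sum of single-coordinate shifts controlled by \cref{comfock.cont}. One small technical difference: in the telescoping step you immediately relax $\chr{\wt D_m^\sfc}$ to $\chr{\{x_j\in D_m^\sfc\}}$ so that the remaining coordinates are unconstrained and the change of variables absorbing the accumulated shifts $h_1,\ldots,h_{j-1}$ is free, whereas the paper keeps the full product constraint and instead uses the positive separation $\operatorname{dist}(D_{m+1},D_m^\sfc)>0$ to pass from $(D_m^{(n)})^\sfc$ to $(D_{m+1}^{(n)})^\sfc$ after shifting; both work and yours is arguably cleaner. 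Your caveat about boundedness of the vacuum components is well taken---neither hypothesis \cref{comfock.L2} nor \cref{comfock.cont} constrains $\psi^{(0)}$, and the paper's proof likewise does not address this sector explicitly.
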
 
The proof is based on the following \lcnamecref{comboRieszKolmogorov}. 
\begin{lem} \label{comboRieszKolmogorov} Let $\HS = \bigoplus_{n=1}^\infty L^2(\R^{m_n})$ with $m_n \in \N_0$ and assume that $D^{(n)}_m\subset \IR^{m_n}$, $m,n\in\IN$, satisfies, for fixed $n\in\IN$, either $D^{(n)}_m = \emptyset$ for all $m\in\IN$ or \cref{hyp:small} with $A_m = D^{(n)}_m$.
	A subset $M \subset \mathcal{H}$ is relatively compact if and only if
	\begin{enumerate}[label = {\upshape (\roman*)}, ref = {\upshape \roman*}]
		\item\label{combcor.cont}  Let $n,m \in \N$ and $h \in \R^{m_n}$. We have \[\lim_{h \to 0 } \sup_ {f \in M } \| \chr{(D^{(n)}_m)^\sfc} (f_n( \cdot + h ) - f_n(\cdot )) \| = 0 .\]
		\item\label{combcor.tight} For all $n \in \N$,  we have $$ \lim_{R \to 0 } \sup_ {f \in M } \| \chr{B_R^\sfc} f_n \| = 0  \quad 
		\text{ and } \quad \lim_{m  \to \infty } \sup_ {f \in M } \| \chr{D^{(n)}_m} f_n \| = 0 . $$
		\item\label{combcor.tail} $\displaystyle \sup_{f \in M } \sum_{k=n}^\infty \| f_k \|^2\to 0 $ as $n \to \infty$. 
	\end{enumerate} 
\end{lem}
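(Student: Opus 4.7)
The plan is to derive the statement by combining \cref{frechet} on the direct sum decomposition of $\HS$ with \cref{cor1} applied to each component separately. The key observation is that, by \cref{frechet}, relative compactness in $\HS$ is equivalent to relative compactness of each coordinate projection $\pi_n(M) \subset L^2(\IR^{m_n})$ together with uniform tail control in the direct sum, and the former property can in turn be characterised by \cref{cor1}.

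For the if direction, I first note that assumption (iii) of the lemma is exactly condition (ii) of \cref{frechet}. Then, for each fixed $n \in \IN$, I verify that the projection $\pi_n(M) \coloneqq \{f_n : f \in M\}$ is relatively compact in $L^2(\IR^{m_n})$ by invoking \cref{cor1} with the choice $D_m \coloneqq D^{(n)}_m$: assumption (i) of the lemma supplies \cref{cor1}~(i), while assumption (ii) of the lemma supplies \cref{cor1}~(ii). Setting $K_n \coloneqq \overline{\pi_n(M)}$ then yields condition (i) of \cref{frechet}, and the if part of \cref{frechet} closes the argument.

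For the only if direction, assume that $M$ is relatively compact. Since each coordinate projection $\pi_n : \HS \to L^2(\IR^{m_n})$ is a continuous linear map, $\pi_n(M)$ is relatively compact in $L^2(\IR^{m_n})$. The only if direction of \cref{cor1} then produces conditions (i) and (ii) of the lemma, while the only if direction of \cref{frechet} yields condition (iii).

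There is no deep obstacle here, since the lemma is essentially a bookkeeping combination of the two preceding theorems. The only minor subtlety arises in the degenerate case $m_n = 0$, where $L^2(\IR^{m_n}) \cong \IC$ and the equicontinuity and tightness assertions of \cref{cor1} become vacuous or trivial; there one directly uses a uniform $L^2$ bound on $\pi_n(M)$, which follows from condition (iii) applied at indices $n$ and $n+1$, together with the Heine--Borel property of $\IC$, to obtain compactness of $K_n$ without appealing to \cref{cor1}.
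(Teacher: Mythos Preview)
Your proposal is correct and follows exactly the approach the paper takes: its proof is literally the single sentence ``The statement directly follows from combining \cref{frechet}, \cref{riesz}, \cref{cor1}.'' You have simply unpacked this combination, observing that \cref{frechet} reduces relative compactness in the direct sum to componentwise relative compactness plus tail control, and that \cref{cor1} (or \cref{riesz} when $D^{(n)}_m=\emptyset$) characterises the former. Your separate treatment of the degenerate case $m_n=0$ is more explicit than the paper's, and your extraction of a uniform bound on $\pi_n(M)$ from condition~(iii) is fine under the natural reading that the sequence $n\mapsto\sup_{f\in M}\sum_{k\ge n}\|f_k\|^2$ takes finite values (indeed, just the value at index $n$ already bounds $\|f_n\|^2$; you do not actually need index $n+1$).
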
  
\begin{proof} The statement directly follows from combining \cref{frechet,,riesz,,cor1}. 
\end{proof} 
We can now present the
\begin{proof}[\textbf{Proof of \cref{comfock}}] 
	We verify that the conditions \cref{comboRieszKolmogorov} \cref{combcor.cont,,combcor.tail,,combcor.tight} follow from \cref{comfock.L2,comfock.cont},
	where we choose $m_n=d\cdot n$ and set
	\[ D^{(n)}_m  = \big((D_m^\sfc)^n\big)^\sfc = \bigcup_{j=1}^n \IR^{d(j-1)}\times D_m \times \IR^{d(n-j)}.  \]
	Since this implies $\operatorname{dist}( (D^{(n)}_{m+1})^\sfc,D^{(n)}_{m+2}) \ge \operatorname{dist}(D_{m+1}^\sfc,D_{m+2})$ and \[\bigcap_{m\in\IN}D^{(n)}_m = \bigcup_{j=1}^n \IR^{d(j-1)}\times \bigcap_{m\in\IN}D_m \times \IR^{d(n-j)} ,\] the assumptions of \cref{hyp:small} with $A_m = D^{(n)}_m$ follow.
	
	We first show that \cref{comfock.L2} implies \cref{combcor.tight,combcor.tail}.
	First, for any $n\in\IN$ and $\psi\in M$, we observe that  
	\begin{align*}
		\| \chr{B_R(0)^\sfc} &\psi^{(n)} \|^2   =  \int _{\R^{n\cdot  d} } \chr{B_R(0)^\sfc}(k)  |\psi^{(n)}(k)|^2  \d k
		\\	&
		\leq   \int _{\R^{n\cdot d} } \sum_{j=1}^n \chr{|k_j| \geq R/\sqrt{n}} (k)  |\psi^{(n)}(k)|^2  \d k \\
		&  =   \int _{\R^{n\cdot d} } n  \chr{|k_1| \geq R/\sqrt{n}} (k)  | \psi^{(n)}(k_1,\ldots,k_n)|^2  \d k_1 \cdots \d k_n
		\\ &
		=   \int _{\R^{d} }   \chr{|k| \geq R/\sqrt{n}} (k)  \| (a_{k} \psi)^{(n-1)} \|^2   \d k  \\
		&  \leq    \int _{\R^{d} }   \chr{|k| \geq R/\sqrt{n}} (k)  \| a_{k} \psi \|^2  \d k 
		\leq    \int _{\R^{d} }   \chr{|k| \geq R/\sqrt{n}}(k)  |g(k)|^2   \d k.
	\end{align*} 
	Note that final bound is independent of $\psi\in M$. Now, by dominated convergence, the right hand side converges to zero as $R\to\infty$, which proves the first assumption of \cref{combcor.tight}. By a similar argument, we find
	\[ \|\chr{D_m^{(n)}}\psi^{(n)}\|^2 \le \int_{D_m}|g(k)|^2\d k \qquad \mbox{for all}\ m,n\in\IN,\ \psi\in M, \]
	so the second assumption of \cref{combcor.tight} again follows by dominated convergence.
	To see \cref{combcor.tail}, we observe that
	\[\sup_{\psi \in M} \langle \psi , \dG(\Id) \psi \rangle = \sup_{\psi \in M}  \int_{\R^d} \| a_k \psi \|^2 dk \leq \int_{\R^d} |g(k)|^2 dk  = \| g \|^2 < \infty. \] 
	Hence, it follows that
	\begin{align*}
		\sup_{\psi\in M}\sum_{k=N } \| \psi^{(n)} \|^2 &\leq N^{-1} \sup_{\psi\in M}\sum_{k=N } n  \| \psi^{(n)} \|^2  \leq 
		N^{-1} \sup_{\psi\in M}\langle \psi , \dG(\Id) \psi \rangle \\& \leq N^{-1} \| g \|^2 \xrightarrow{N\to\infty}0, 
	\end{align*}
	which proves \cref{combcor.tail}.
	
	It remains to prove that \cref{combcor.cont} follows from \cref{comfock.cont}. We fix $m,n\in\IN$.
	First, we observe that for $h=(h_1,\ldots,h_n) \in \R^{n\cdot d}$ having absolut value sufficiently small (more explicitly, we require $|h|_\infty \le \operatorname{dist}(D_{m-1}\setminus D_{m},D_{m+1})/n$), using a telescope sum, the translation invariance and rotation symmetry of the Lebesgue measure and the permutation symmetry of both $\psi^{(n)}$ and $(D^{(n)}_{m+1})^\sfc$, we find
	\begin{align*}
		\| \chr{(D^{(n)}_m)^\sfc} &  \left( \psi^{(n)}( \cdot + h ) - \psi^{(n)}\right) \|  \\& = \left\| \chr{(D^{(n)}_m)^\sfc} \sum_{j=1}^n ( \psi^{(n)}( \cdot + \sum_{s=1}^j e_s  h_s ) - \psi^{(n)}(\cdot +  \sum_{s=1}^{j-1} e_s h_s ) )  \right\|  \\
		& \leq  \sum_{j=1}^n  \left\|\chr{(D^{(n)}_m)^\sfc} ( \psi^{(n)}( \cdot + \sum_{s=1}^j e_s h_s ) - \psi^{(n)}(\cdot +  \sum_{s=1}^{j-1} e_s h_s ) )  \right\| \\
		& \le   \sum_{j=1}^n \left\|\chr{(D^{(n)}_{m+1})^\sfc} ( \psi^{(n)}( \cdot + e_j h_j ) - \psi^{(n)}  )  \right\|
		\\& = \sum_{j=1}^n \left\|\chr{(D^{(n)}_{m+1})^\sfc} ( \psi^{(n)}( \cdot + e_1 h_j ) - \psi^{(n)}  )  \right\|.
	\end{align*} 
	Now to estimate the right hand side, we use that for $p \in \R^d$ 
	\begin{align*}
		&\|\chr{(D^{(n)}_{m+1})^\sfc} \psi^{(n)}( \cdot + e_1 p ) - \psi^{(n)} \| \\& = \int_{ (D^{(n)}_{m+1})^\sfc } | \psi^{(n)}( k_1 + p,k_2,\ldots,k_n) - \psi^{(n)}(k_1,\ldots,k_n) |^2 \d k_1 \cdots \d k_n  \\
		& \le \int_{ D_{m+1}^\sfc } n^{-1} | (a_{k+p} \psi)^{(n-1)}(k_2,\ldots,k_n) - (a_k \psi)^{(n-1)}(k_2,\ldots,k_n) |^2 \d k \d k_2  \cdots \d k_n  \\
		& \leq n^{-1} \int_{\R^d}  \| a_{k+p} \psi - a_k \psi \|^2\d k ,
	\end{align*} 
	which tends to zero uniformly in $\psi \in M$ as $p \to 0$. Combining the above estimates, this proves \cref{combcor.cont} and hence the statement follows from \cref{comboRieszKolmogorov}.
\end{proof} 
%
%
\section{The Transformed Operators}\label{sec:transformed}

In this \lcnamecref{sec:transformed}, we prove the existence of dressing transformed operators, cf. \cref{mainthm1}. The main result is the convergence theorem \cref{th:P2 diff}.

To study dressing transformed versions of the operators $H_\#(P)$, as defined in \cref{def:nrsr},
we will first define a transformed version $T_\#(P,f)$ for appropriate dressing functions $f:\IR^d\to\IC$.
We then prove that these are in fact the dressing transformed operators $W(f)H_\#(P)W(f)^*$ if $f\in L^2(\IR^d)$.
Finally, we introduce sufficient conditions for the norm resolvent convergence of $T_\#(P,f_n)$ along a sequence $(f_n)$. 

For measurable $f,g : \R^d \to \C$ such that $\overline{f} g \in L^1(\R^d)$, we define the sesquilinear form 
$$\fs(f,g) = \int \overline{f}(k) g(k) dk .
$$   
Let  $P \in \R^d$.  For any measurable $f \: \IR^d \longrightarrow \IC$ with 
\begin{align} \label{kinhprop}  \mm f , |\mm|^{1/2} f  \in L^2(\R^d) , 
\end{align} 
the transformed kinetic vector-valued operator given by
\[
\vv_P(f)  \coloneqq   P-\Pf + \ph(  \mm f )-  \mathfrak{s}(f, \mm f ) 
\]
is a densely defined $\dG(|\mm|)$-bounded operator.  
This allows us to extend  $\vv_P(f)^2$ to a selfadjoint positive operator via Friedrich's extension theorem, which will be denoted 
again by  the  symbol $\vv_P(f)^2$. For $\#\in\{\nr,\sr\}$, we  can hence define $\Theta_\#(\vv_P(f))$ via the selfadjoint functional calculus. 
For any measurable $f  \: \IR^d \to \IC$ such that \eqref{kinhprop}  and 
\begin{align} \label{kinhprop2}  \omega  f , \omega^{1/2} f  \in L^2(\R^d) 
\end{align} 
hold, we thus define 
\begin{align} 
	T_\#(P,f) &\coloneqq  \Theta_\# \left( \vv_P(f) \right) +  \dG(\omega) - \ph(  \omega f ) +  \fs( f , \omega f )  + \ph(v) - 2\Re \fs(  f , v  ) , \label{eq:TnPf defn}
\end{align}
where  the finiteness of the last expression follows since $\bar f v  \in L^1(\IR^d)$, by \cref{hyp:v,kinhprop2}.

Combining the assumptions \cref{kinhprop,kinhprop2}, we define the space
\begin{equation}\label{def:sW}
	\sW=\big\{f:\IR^d\to \IC\ \mbox{measurable} \big| |\mm| f,|\mm|^{1/2}f,\omega f,\omega^{1/2}f \in L^2(\IR^d) \big\},
\end{equation}
and (for later use) equip it with the norm
\begin{equation}
	\|f\|_\sW = \|\omega f\| + \|\omega^{1/2}f\| + \||\mm| f\| + \||\mm|^{1/2}f\| + \fs(f,v).
\end{equation}

\begin{lem} \label{firstwelldefinedT}  Let $\#\in\{\nr,\sr\}$ and let $f \in\sW$.
	The symmetric operator  $T_\#(P,f)$ is defined on the domain $\cD(\disp_\#(\Pf)) \cap \cD(\Hf)$ and lower-semibounded for all $P\in\IR^d$. 
	In particular, there exists a selfadjoint extension.
\end{lem}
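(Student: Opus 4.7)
The plan is to verify each summand of $T_\#(P,f)$ is well-defined on $\cD(\disp_\#(\Pf)) \cap \cD(\Hf)$ and symmetric there, combine the termwise lower bounds into a uniform one, and then invoke Friedrichs' theorem for the selfadjoint extension. Since $f \in \sW$ delivers $\omega^{1/2} f,\, \omega f,\, |\mm|^{1/2} f,\, \mm f \in L^2(\IR^d)$, and Hypothesis~\eqref{hyp:v} supplies $v,\, \omega^{-1/2} v \in L^2$, the standard relative bound (Theorem~\ref{th:phi estimate}) shows that $\ph(\omega f)$ and $\ph(v)$ are infinitesimally $\dG(\omega)$-bounded and the scalars $\fs(f,\omega f),\, \fs(f,v)$ are finite. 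Hence every term in $T_\#(P,f)$ other than $\disp_\#(\vv_P(f))$ is manifestly defined and symmetric on $\cD(\Hf)$.

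For the kinetic term I would first note that each component of $\vv_P(f) = P - \Pf + \ph(\mm f) - \fs(f,\mm f)$ is symmetric and $\dG(|\mm|)$-bounded; the $\phi$-bound on $\ph(\mm f)$ uses $|\mm|^{-1/2}(\mm f) = |\mm|^{1/2} f \in L^2$. The square $\vv_P(f)^2$ is then a nonnegative symmetric form whose Friedrichs extension---denoted by the same symbol as in the paragraph preceding the lemma---is nonnegative selfadjoint, and $\disp_\#(\vv_P(f))$ is defined by functional calculus. To place $\cD(\disp_\#(\Pf)) \cap \cD(\Hf)$ in its domain, I would expand the square, estimate the cross term $(P-\Pf)\cdot \ph(\mm f) + \mathrm{h.c.}$ by Cauchy--Schwarz, and derive an operator bound of the form
\[
\vv_P(f)^2 \;\le\; C\bigl[(P-\Pf)^2 + \dG(|\mm|) + \dG(\omega) + 1\bigr].
\]
For $\# = \nr$ this directly dominates $\disp_\nr(\vv_P(f))$; for $\# = \sr$, operator monotonicity of $x \mapsto \sqrt{M^2 + x}$ on $[0,\infty)$ transports the estimate to the square root. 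Symmetry of $T_\#(P,f)$ is then termwise, and lower-semiboundedness follows from $\disp_\#(\vv_P(f)) \ge 0$, $\dG(\omega) \ge 0$, and the infinitesimal bound $\pm\bigl(\ph(v) - \ph(\omega f)\bigr) \le \tfrac{1}{2}\dG(\omega) + C$; Friedrichs' theorem then yields the claimed selfadjoint extension.

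The main obstacle I foresee is the operator inequality above, and more precisely the fact that $\dG(|\mm|)$ is not a priori controlled by $\dG(\omega)$ under the standing hypotheses alone. Addressing this requires either a pointwise comparison $\omega(k) \gtrsim |k|$ at infinity (indeed $\omega(k) \ge |k|$ is imposed later in \cref{mainthm1}), or absorbing the $\dG(|\mm|)$ contribution into the kinetic part via a further Cauchy--Schwarz against an $\epsilon$-fraction of $(P-\Pf)^2$ plus a constant. The semi-relativistic case adds the subtlety of having to propagate the quadratic bound through a square root; this is handled by operator monotonicity, but requires the intermediate selfadjointness of $\vv_P(f)^2$ via Friedrichs to already be in hand.
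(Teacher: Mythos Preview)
Your approach is essentially the same as the paper's: both decompose $T_\#(P,f)$ into the kinetic piece $\disp_\#(\vv_P(f))$ (handled via the Friedrichs extension of $\vv_P(f)^2$ and functional calculus), the field-energy piece $\dG(\omega)$, field operators controlled infinitesimally by $\dG(\omega)$ via \cref{th:phi estimate}, and finite scalars; lower-semiboundedness and the Friedrichs selfadjoint extension then follow. The paper's proof is in fact considerably shorter than your outline and omits the termwise verification you sketch.

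Your obstacle is genuine, and it is worth noting how the paper deals with it. The paper does \emph{not} resolve the $\dG(|\mm|)$-versus-$\dG(\omega)$ issue from the bare hypothesis $f\in\sW$ alone; instead its proof explicitly writes ``given that both $\omega^{-1/2}|\mm|f\in L^2(\IR^d)$ and $(1+\omega^{-1/2})|\mm|^2 f\in L^2(\IR^d)$'', i.e.\ it invokes extra integrability conditions on $f$ that go beyond $\sW$ as defined. These conditions are satisfied in all later applications (Proposition~\ref{th:P2 diff} assumes $\omega(k)\ge C|k|$, and the semi-relativistic norm $\|\cdot\|_\sr$ in \cref{def:normsW} includes $|\mm|^{3/2}f,|\mm|^2 f\in L^2$), so the lemma is only ever used under hypotheses that supply them. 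In short: the paper resolves your obstacle by the first of your two suggested routes (a pointwise comparison $\omega\gtrsim|\mm|$), imposed implicitly in the proof and explicitly downstream, rather than by the Cauchy--Schwarz absorption you propose as an alternative.
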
 
\begin{rem}
	We note that $\cD(\disp_\sr(\Pf))\supset \cD(\Hf)$ if $\omega \le C|\mm|$ for some $C>0$.
\end{rem}
\begin{proof} 
	From elementary estimates we see  $\cD(|\vv_P(f)|^2) \supset \cD(\Pf^2) \cap \cD(\Hf)$, given that both $\omega^{-1/2} |\mm| f \in L^2(\IR^d)$ and $(1+\omega^{-1/2})|\mm|^2 h \in L^2(\IR^d)$.
	Furthermore, under the same assumption, $\cD( \sqrt{\vv_P(f)^2 + M^2} ) = \cD(\abs{\vv_P(h)}) \supset \cD(\Hf)$.
	The existence of a selfadjoint extension follows from the Friedrichs extension theorem.
\end{proof} 
We now prove that the operators $T_\#(P,f)$ are in fact the dressing transformed version of $H_\#(P)$ for $f\in L^2(\IR^d)$.
\begin{lem}\label{lem:transT}
	Let $\#\in\{\nr,\sr\}$ and let $f \in L^2(\R^d)\cap\sW$. Then
	\begin{align}
		\label{eq:transformation formula}
		W(f) H_\#(P) W(f)^* &= 	T_\#(P,f)
	\end{align}
	is selfadjoint with domain  $\cD(\disp_\#(\Pf)) \cap \cD(\dG(\omega)) =  \cD(T_\#(P,f))$. 
\end{lem}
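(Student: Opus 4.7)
The plan is to compute $W(f)H_\#(P)W(f)^*$ piece by piece using the canonical Weyl shift identities, identify the resulting algebraic expression with $T_\#(P,f)$ on a dense core, and then transfer self-adjointness from $H_\#(P)$ via unitarity. First, from the defining action $W(f)\epsilon(g)=\e^{-\|f\|^2/2-\fs(f,g)}\epsilon(f+g)$ on exponential vectors, I would derive the two Weyl shift identities
\begin{equation*}
W(f)\ph(g)W(f)^* = \ph(g) - 2\Re\fs(f,g), \qquad W(f)\dG(A)W(f)^* = \dG(A) - \ph(Af) + \fs(f,Af),
\end{equation*}
valid for $g \in L^2(\IR^d)$ and for real self-adjoint multiplication operators $A$ with $Af,\,A^{1/2}f \in L^2(\IR^d)$, initially on the dense span of exponential vectors $\epsilon(h)$ with $h$ smooth and compactly supported.

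Applying the second identity componentwise to $\Pf = \dG(\mm)$ (using $\mm f,\,|\mm|^{1/2}f \in L^2(\IR^d)$, which holds since $f \in \sW$) yields $W(f)(P_j-\Pf_j)W(f)^* = \vv_{P,j}(f)$. Unitary conjugation preserves strong commutativity, so the $\vv_{P,j}(f)$ jointly commute, and the joint functional calculus gives $W(f)\disp_\#(P-\Pf)W(f)^* = \disp_\#(\vv_P(f))$. Combining this with $W(f)\dG(\omega)W(f)^* = \dG(\omega) - \ph(\omega f) + \fs(f,\omega f)$ (using $\omega f,\,\omega^{1/2}f \in L^2$) and $W(f)\ph(v)W(f)^* = \ph(v) - 2\Re\fs(f,v)$, one obtains the algebraic identity $W(f)H_\#(P)W(f)^* = T_\#(P,f)$ on the core.

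Setting $D_0 := \cD(\disp_\#(\Pf))\cap\cD(\dG(\omega))$, since $H_\#(P)$ is self-adjoint on $D_0$ and $W(f)$ is unitary, $W(f)H_\#(P)W(f)^*$ is automatically self-adjoint on $W(f)D_0$. To show $W(f)D_0 = D_0$, I would invoke \cref{th:phi estimate}: since $\omega^{1/2}f \in L^2(\IR^d)$, the field operator $\ph(\omega f)$ is infinitesimally $\dG(\omega)$-bounded, so Kato--Rellich applied to the derived formula yields $W(f)\cD(\dG(\omega)) = \cD(\dG(\omega))$. An analogous relative bound for $\ph(\mm f)-\fs(f,\mm f)$, together with the joint spectral representation of $\disp_\#(\vv_P(f))$ obtained above, gives $W(f)\cD(\disp_\#(\Pf)) = \cD(\disp_\#(\Pf))$. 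Since the symmetric operator $T_\#(P,f)$ of \cref{firstwelldefinedT} agrees on the core with the self-adjoint $W(f)H_\#(P)W(f)^*$ and both are now known to be defined on $D_0$, the two must coincide, and $T_\#(P,f)$ is self-adjoint with domain exactly $D_0$.

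The main obstacle is the domain equality $W(f)\cD(\disp_\#(\Pf)) = \cD(\disp_\#(\Pf))$ in the semi-relativistic case $\disp_\sr(p) = \sqrt{M^2+|p|^2}$. Unlike the non-relativistic quadratic case, where $\disp_\nr(\vv_P(f)) - \disp_\nr(P-\Pf)$ is a polynomial in $\ph(\mm f)$ amenable to a direct Kato--Rellich perturbation of $|\Pf|^2/(2M)$, the non-polynomial $\disp_\sr$ forces a comparison of $\sqrt{M^2+|\vv_P(f)|^2}$ with $\sqrt{M^2+|P-\Pf|^2}$ at the domain level. This requires functional-calculus inequalities on the jointly commuting tuple $\vv_P(f)$, combined with the infinitesimal $|P-\Pf|$-boundedness of $\ph(\mm f)-\fs(f,\mm f)$ and the operator monotonicity of $x \mapsto \sqrt{x+M^2}$.
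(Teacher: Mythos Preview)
Your overall strategy matches the paper's: transform each piece of $H_\#(P)$ via the Weyl shift identities, identify the result with $T_\#(P,f)$, and transfer self-adjointness by unitarity. Two points of comparison are worth noting.

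First, the paper does not re-derive the shift identities but cites them as \cref{lem:transform a,lem:transform dG}; the latter already contains the domain invariance $W(f)\cD(\dG(A))=\cD(\dG(A))$, so your separate Kato--Rellich argument for $W(f)\cD(\dG(\omega))=\cD(\dG(\omega))$ is not needed.

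Second, and more substantively, the paper handles the kinetic term through the \emph{squared} operator rather than componentwise joint functional calculus: it shows $W(f)(P-\Pf)^2W(f)^*=\vv_P(f)^2$ first on $\cD(\Pf^2)\cap\cD(\Hf)$, then extends to the full domain by self-adjointness of the left-hand side. Since both $\disp_\nr(p)=|p|^2/(2M)$ and $\disp_\sr(p)=\sqrt{|p|^2+M^2}$ are functions of $|p|^2$ alone, \emph{one-variable} functional calculus immediately gives $W(f)\disp_\#(P-\Pf)W(f)^*=\disp_\#(\vv_P(f))$ as an equality of self-adjoint operators, with matching domains, uniformly in $\#\in\{\nr,\sr\}$. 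This completely dissolves what you flag as the ``main obstacle'': there is no need to compare $\sqrt{M^2+|\vv_P(f)|^2}$ with $\sqrt{M^2+|P-\Pf|^2}$ via operator monotonicity or to produce infinitesimal bounds on their difference, because the equality at the level of the squares already carries all the domain information through functional calculus. Your componentwise route is not wrong---it lands on the same operator identity---but you then unnecessarily revert to perturbation estimates for the domain claim, when the identity you already have (together with the domain inclusions recorded in \cref{firstwelldefinedT} and the symmetry $f\mapsto -f$) suffices.
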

\begin{proof}
	From  \cref{lem:transform a,lem:transform dG}, we infer that on $\cD(\Hf)$ 
	\[
	W(f) (\Hf + \ph(v)) W(f)^* =  \Hf - \ph( \omega f ) +  \sc{ f , \omega f }  + \ph(v) - 2\Re \sc{  f , v }.
	\]
	Furthermore, by the same lemmas we have
	\[
	W(f) (P-\Pf)^2 W(f)^* = \vv_P(f)^2
	\]
	on $\cD(\Pf^2) \cap \cD(\Hf)$ and therefore also on the full domain by the selfadjointness of the left-hand side. Therefore, $W(f) \disp_\#(P-\Pf) W(f)^* = \disp_\#(\vv_P(f))$.
	This shows \cref{eq:transformation formula} on $\cD(\disp_\#(\Pf)) \cap \cD(\Hf)$ and thus the claim follows.
\end{proof}
We now prove the convergence theorem. In the statement, we denote the Friedrichs extension of $T_\#(P,f)$, which exists by \cref{firstwelldefinedT}, by the same symbol.
\begin{prop}
	\label{th:P2 diff}
	Let $\#\in\{\nr,\sr\}$ and assume $\omega(k)\ge C |k|$ for some $C>0$. We write
	\begin{equation}\label{def:normsW}
		\|f\|_\nr = \|f\|_\sW \qquad \mbox{and}\qquad \|f\|_\sr = \|f\|_\sW + \|(|\mm|^{3/2}\vee|\mm|^2)f\|
	\end{equation}
	for all $f\in\sW$ such that the respective right hand side is finite. The set of all such $f\in\sW$ is denoted as $\sW_\#$.
	Then there exists a continuous function $\sC_\#:(0,\infty)\to(0,\infty) $ such that for all $f,g\in\sW_\#\setminus\{0\}$
	\[ \left\| (T_\#(P,f)+\i)^{-1} - (T_\#(P,g)+\i)^{-1} \right\| \le \sC_\#(\|f\|_\#)\sC_\#(\|g\|_\#) \|f-g\|_\sW.  \] 
\end{prop}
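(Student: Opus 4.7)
The natural starting point is the second resolvent identity, read as a sesquilinear-form equality,
\[
(T_\#(P,f)+\i)^{-1} - (T_\#(P,g)+\i)^{-1} = (T_\#(P,f)+\i)^{-1}\bigl(T_\#(P,g)-T_\#(P,f)\bigr)(T_\#(P,g)+\i)^{-1}.
\]
Since the two outer resolvents map $\FS$ boundedly into the respective form domains with norms controlled by $\sC_\#(\|f\|_\#)$ and $\sC_\#(\|g\|_\#)$ (via the lower-semiboundedness from \cref{firstwelldefinedT} together with Kato--Rellich bounds depending continuously on $\|\cdot\|_\sW$), it is enough to estimate the middle sesquilinear form $\langle\psi,(T_\#(P,f)-T_\#(P,g))\phi\rangle$ on the form domains and absorb a product of a form norm of $\psi$ and a form norm of $\phi$.

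I would then split $T_\#(P,f)-T_\#(P,g)$ into (a) the kinetic piece $\disp_\#(\vv_P(f))-\disp_\#(\vv_P(g))$, (b) the field contribution $-\ph(\omega(f-g))$, and (c) the scalar correction $\fs(f,\omega f)-\fs(g,\omega g)-2\Re\fs(f-g,v)$. Pieces (b) and (c) are routine: polarising $\fs(f,\omega f)-\fs(g,\omega g)=\fs(f-g,\omega f)+\fs(g,\omega(f-g))$ and invoking $\omega^{-1/2}v\in L^2$ (Hypothesis \eqref{hyp:v}) bounds the scalars by $(\|f\|_\sW+\|g\|_\sW)\|f-g\|_\sW$, while $\ph(\omega(f-g))$ is controlled relative to $(\dG(\omega)+1)^{1/2}$ by the Segal-type estimate of \cref{th:phi estimate}. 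For the nr case of (a), $\disp_\nr(p)=p^2/(2M)$, so the identity $\vv_P(f)^2-\vv_P(g)^2=\vv_P(f)\Delta+\Delta\vv_P(g)$ with $\Delta\coloneqq\vv_P(f)-\vv_P(g)=\ph(\mm(f-g))-\fs(f,\mm f)+\fs(g,\mm g)$ reduces the task to bounding $\ph(\mm(f-g))$ and using that $\vv_P(h)$ is $(\Hf+\disp_\nr(\Pf)+1)^{1/2}$-bounded with a constant depending continuously on $\|h\|_\sW$.

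The main obstacle is the semi-relativistic case of (a), since $\disp_\sr$ is not polynomial. Here the plan is to use the standard integral representation
\[
\sqrt{M^2+X}-\sqrt{M^2+Y}=\frac{1}{\pi}\int_0^\infty (\lambda+M^2+Y)^{-1}(X-Y)(\lambda+M^2+X)^{-1}\sqrt\lambda\,\d\lambda,
\]
with $X=\vv_P(f)^2$ and $Y=\vv_P(g)^2$, so that $X-Y$ again decomposes as $\vv_P(f)\Delta+\Delta\vv_P(g)$. One $\vv_P(\cdot)$ factor is absorbed into $(\lambda+M^2+\cdot^2)^{-1/2}$, which is uniformly bounded by $1$ by the functional calculus, and the remaining half-power of the resolvent combined with $\sqrt\lambda$ needs to be pushed past $\Delta$; this produces a commutator of the schematic form $[\Delta,\vv_P(\cdot)]=\ph(\mm^2(f-g))+\text{scalar}$, which is precisely the reason the stronger norm $\|\cdot\|_\sr$ must contain $\||\mm|^{3/2}f\|$ and $\||\mm|^2 f\|$ (the extra moments are needed to Segal-bound this higher-order field operator). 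After the commutator manipulation, each piece of the integrand is dominated by $C(\|f\|_\sr,\|g\|_\sr)\,\|f-g\|_\sW\,(\lambda+M^2)^{-1-\varepsilon}\sqrt\lambda$ for some $\varepsilon>0$, which is $\d\lambda$-integrable. Tracking the continuous dependence of all intermediate constants on $\|f\|_\#$ and $\|g\|_\#$ and assembling (a), (b), (c) then yields the Lipschitz bound with continuous $\sC_\#$ as claimed.
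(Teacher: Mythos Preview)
Your treatment of the non-relativistic case and of the field/scalar pieces (b), (c) is essentially identical to the paper's: the decomposition $\vv_P(f)^2-\vv_P(g)^2=\vv_P(f)\cdot\Delta+\Delta\cdot\vv_P(g)$ with $\Delta=D^{(1)}(g,f)$, combined with the bounds $\|(\Hf+1)^{1/2}(T_\nr(P,h)+\i)^{-1}\|\le C_1(h)$ and $\||\vv_P(h)|(T_\nr(P,h)+\i)^{-1}\|\le C_2(h)$, is exactly what the paper does.

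For the semi-relativistic case you also start from the same square-root integral representation, but your commutator scheme diverges from the paper's and, as written, does \emph{not} yield the stated bound. Your plan is to push a power of $(\vv_P(\cdot)^2+M^2+\lambda)^{-1}$ past $\Delta$, producing $[\Delta,\vv_P(\cdot)_j]=\i\ph(\i\mm_j\mm(f-g))+\text{scalar}$. The field operator appearing here carries $\mm^2(f-g)$, so its Segal bound needs $\||\mm|^2(f-g)\|$ and $\|\omega^{-1/2}|\mm|^2(f-g)\|$, i.e.\ the $\|\cdot\|_\sr$-norm of the \emph{difference}. That contradicts both the proposition (which asserts Lipschitz dependence only through $\|f-g\|_\sW$) and your own concluding sentence; the extra moments cannot be shuffled onto $f$ or $g$ individually by this route.

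The paper avoids this by never commuting $\Delta$ with $\vv_P$. Instead it inserts $(\dG(\omega\wedge 1)+1)^{\pm 1/2}$ around $\Delta$: the factor $\Delta(\dG(\omega\wedge 1)+1)^{-1/2}$ is bounded directly by a multiple of $\|f-g\|_\sW$ via \cref{th:phi estimate}, and the remaining task is to show
\[
\big\|(\dG(\omega\wedge 1)+1)^{1/2}(\vv_P(h)^2+M^2+t)^{-1}(T_\sr(P,h)+\i)^{-1}\big\|\lesssim (M^2+t)^{-1}.
\]
This is done by expanding $(\vv_P(h)^2+M^2+t)(\dG(\omega\wedge 1)+1)(\vv_P(h)^2+M^2+t)$ and computing the double commutator $[[\dG(\omega\wedge 1),\vv_P(h)_j],\vv_P(h)_j]=\ph(\i(\omega\wedge 1)\mm_j^2 h)-2\Re\langle(\omega\wedge 1)\mm_j h,\mm_j h\rangle$. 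Here the higher moments $\||\mm|^2 h\|$, $\|\omega^{-1/2}|\mm|^2 h\|$ fall on $h\in\{f,g\}$ and therefore get absorbed into $\sC_\sr(\|f\|_\sr)$, $\sC_\sr(\|g\|_\sr)$, exactly matching the form of the proposition. So the missing idea in your sketch is to control $\Delta$ relative to $\dG(\omega\wedge 1)$ rather than to commute it with $\vv_P$, and then to commute $\dG(\omega\wedge 1)$ (not $\Delta$) through the $\vv_P(h)$-resolvent.
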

\begin{proof}[Proof for $\#=\nr$]
	Throughout this proof, $C_*(f)$ for $f\in\sW$, $*\in\{0,1,2\}$ denotes an $f$-dependent constant which continuously depends on $\|\omega^a f\|$, $a\in\{\tfrac 12,1\}$ and $\fs(f,v)$. Its value can be calculated explicitly and does not change between different estimates.
	
	Using the resolvent identity, we obtain
	\begin{align}
		\label{eq:second resolvent}
		\begin{aligned}
			&	(T_\nr(P,f)+\i)^{-1}   -
			(T_\nr(P,g)+\i)^{-1}   \\&= (T_\nr(P,f)+\i)^{-1} (T_\nr(P,g) - T_\nr(P,f) )  ( T_\nr(P,g)+\i)^{-1} .
		\end{aligned}
	\end{align}
	Applying the operator identity $A^2-B^2 = B(A-B) + (A-B) A$, on the joint domain of $T_\nr(P,f)$ and $T_\nr(P,g)$,
	we can write
	\begin{align} \label{diffnonrelkin} 
		T_\nr(P,g) - T_\nr(P,f) &= \vv_P(f) \cdot D^{(1)}(g,f)  +  D^{(1)}(g,f) \cdot \vv_P(g)  +  D^{(2)}(g,f),
	\end{align}
	with
	\begin{align*}
		D^{(1)}(g,f) &= \vv_P(g) - \vv_P(f) =   \ph( \mm g )  - \ph( \mm f )     -  ( \fs({ g , \mm g }) -  \fs({ f , \mm f })), \\
		D^{(2)}(g,f) &= - (\ph(  \omega g )-  \phi( \omega f ) )  +  (  \fs({ g , \omega g }) -  \fs({ f , \omega f }))   \\&\qquad\qquad\qquad - 2\Re (  \fs({  g , v }) -  \fs({  f , v }) ).
	\end{align*} 
	We now estimate the terms on the right hand side of \cref{eq:second resolvent} after inserting \cref{diffnonrelkin}.
	First, using the standard bounds for field operators, cf. \cref{th:phi estimate}, and the definition \cref{eq:TnPf defn}, we find
	\begin{align}
		\label{eq:Hf Tn bound}
		\begin{aligned}
			\Hf + 1 &\leq 2 \Hf - \ph(\omega h) + \ph(v) +  1 - \frac{\|\omega h\|}{2\|\omega^{1/2}h\|} - \frac{\|v\|}{2\|\omega^{-1/2}v\|} \\& \le  2 T_\nr(P,h)  + C_0(h),
		\end{aligned}
	\end{align}
	for all $h\in\sW\setminus\{0\}$.
	Using \eqref{eq:Hf Tn bound}, the spectral theorem and operator positivity,  we find 
	\begin{align} \label{estHfintermsofT} 
		\| (\Hf + 1)^{1/2} &( T_\nr(P,h)    + \i)^{-1} \|\nonumber\\ &  =  \| ( T_\nr(P,h)  -  \i)^{-1}  (\Hf + 1)  ( T_\nr(P,h)    + \i)^{-1}\|^{1/2}  \nonumber \\
		& \leq  \| ( T_\nr(P,h)    - \i )^{-1}  ( 2T_\nr(P,h)  + C_0(h))  ( T_\nr(P,h)    + \i )^{-1}\|^{1/2} \nonumber\\
		&= \nn{ \frac{2 T_\nr(P,h) + C_0(h)}{T_\nr(P,h)^2 + 1}  }^{1/2} \nonumber\\
		&\leq C_1(h).
	\end{align}
	Likewise, we find 
	\begin{align} 
		\|  |\vv_P(h)|  &( T_\nr(P,h) + \i)^{-1} \| \nonumber\\  &=  \| ( T_\nr(P,h)  - \i)^{-1}  |\vv_P(h)|^2 ( T_\nr(P,h) + \i)^{-1}\|^{1/2} \nonumber \\
		& \leq  \| ( T_\nr(P,h)  - \i)^{-1}  ( T_\nr(P,h) + C_0(h)-1)  ( T_\nr(P,h)    + \i)^{-1}\|^{1/2} \nonumber \\
		&\leq C_2(h). \label{eq:estonvres0} 
	\end{align} 
	In view of the decomposition \eqref{diffnonrelkin}, we estimate the first term on the right hand side of \cref{eq:second resolvent} using \cref{estHfintermsofT,eq:estonvres0}  by
	\begin{align} \label{estD1}  
		&\nn{   \frac{1}{T_\nr(P,f)  + \i  } \vv_P(f) \cdot D^{(1)}(g,f )\frac{1}{T_\nr(P,g)    + \i } } \nonumber \\
		&\qquad  \leq \sum_{j=1}^d  \Big\|   \frac{1}{T_n(P,f)  + \i  } (\vv_P(f))_j     (D^{(1)}(g,f ))_j    (\Hf+1)^{-1/2}  \nonumber\\&\qquad\qquad\qquad\qquad \times (\Hf+ 1)^{1/2}  \frac{1}{T_\nr(P,g)    + \i } \Big\| \nonumber \\
		&\qquad \leq  C_1(g)  C_2(f) \sum_{j=1}^d \nn{   (D^{(1)}(g,f ))_j  (\Hf+1)^{-1/2} } .
	\end{align}
	The norms in the sum can now be estimated by $\||\mm|(g-f)\| + \|\omega^{-1/2}|\mm|(g-f)\| + \||\mm|^{1/2}(g-f)\|$, using the standard bounds for field operators from \cref{th:phi estimate} and their additivity. This in turn is bounded by $\|g-f\|_\sW$, due to the assumption $\omega\ge C|\mm|$.
	By taking the adjoint, we directly obtain the same estimate for $(T_\nr(P,f)+\i)^{-1}D^{(1)}(g,f)\cdot \vv_P(g) (T_\nr(P,g)+\i)^{-1}$.
	%
	Finally, for the last term $D^{(2)}(g,f)$ in \eqref{diffnonrelkin}, using \eqref{estHfintermsofT}, we get 
	\begin{align}  \label{secondres000} 
		& \nn{   \frac{1}{T_\nr(P,f)  + \i  } D^{(2)}(g,f)\frac{1}{T_\nr(P,g)    + \i } }   \nonumber \\
		&\qquad = \nn{  \frac{1}{T_\nr(P,f)  + \i  }  D^{(2)}(g,f) (\Hf + 1)^{-1/2}(\Hf + 1)^{1/2} \frac{1}{T_\nr(P,g)  + \i } } \nonumber  \\
		&\qquad \leq C_1(g)  \nn{ ( D^{(2)}(g,f) (\Hf + 1)^{-1/2} }  . 
	\end{align} 
	Again, using the standard bounds for field operators from \cref{th:phi estimate} and their additivity, the last expression can be bounded by $\|\omega(g-f)\|+\frac32\|\omega^{1/2}(g-f)\| + |\fs(g-f,v)|$.
	Inserting all these estimates into   \cref{eq:second resolvent,diffnonrelkin},
	we arrive at the desired claim.  
\end{proof} 
%
%
\begin{proof}[Proof for  $\#=\sr$]
	Let $C_*(f)$ for $f\in\sW$, $*\in\{0,1,2\}$ denote the same $f$-dependent constants as in the previous proof. Then, by repeating the arguments from \cref{eq:Hf Tn bound,estHfintermsofT,eq:estonvres0}, for all $h\in\sW\setminus\{0\}$, we find
	\begin{align}
		& \Hf +1 \le 2 T_\sr(P,h) + C_0(h) \label{eq:newHfTbound} \\
		& \|(\Hf+1)^{1/2}(T_\sr(P,h)+\i)^{-1}\| \le C_1(h) \label{eq:newHfTestimate} \\
		& \||\vv_P(h)|^{1/2}(T_\sr(P,h)+\i)^{-1}\| \le C_2(h) \label{eq:newvvTestimate}.
	\end{align}
	%
	%
	Again using the second resolvent identity, we obtain 
	\begin{align}
		&\frac{1}{T_\sr(P,f)  + \i  }   -
		\frac{1}{T_\sr(P,g)    + \i } 
		\nonumber\\& =  \frac{1}{T_\sr(P,f)  + \i  } (T_\sr(P,g) - T_\sr(P,f) )   \frac{1}{T_\sr(P,g)    + \i } \nonumber \\
		&= \frac{1}{T_\sr(P,f)  + \i  } ( \disp_\sr \left( \vv_P(g)   \right)  
		- \disp_\sr  \left( \vv_P(f)  \right)  + D^{(2)}(g,f) )   \frac{1}{T_\sr(P,g)    + \i } \label{eq:secondres2} 
	\end{align} 
	where we used the same notation as  in \eqref{diffnonrelkin}.
	By means of \cref{eq:newHfTestimate},
	the term involving $D^{(2)}(g,f)$ can be estimated as in \cref{secondres000}.
	To estimate the remaining expression,
	we use  the following operator  identities.
	First observe   that 
	\[
	h_n(x) =  \frac{1}{\pi} \int_0^n t^{-1/2} x^{1/2}  (t+ x)^{-1}  \d t  , \quad x > 0 
	\]
	satisfies 
	$h_n(x) \to 1$ und $0 \leq h_n(x) \leq 1$ for all $x > 0$. 
	It follows from the spectral theorem that for any selfadjoint operator $F$ with $F \geq c $, for some $c \in (0,\infty)$, 
	and  $ \psi \in D(F^{1/2})$  we have  
	\[
	F^{1/2} \psi =      \lim_{n \to \infty} h_n(F) F^{1/2} \psi =  \frac{1}{\pi} \int_0^\infty t^{-1/2} F (t+ F)^{-1}\psi  \d t    . 
	\]
	Thus  for selfadjoint operators $E$ and $F$ which are bounded below by some positive constant 
	it follows, as a strong integral on $D(E^{1/2} )  \cap D( F^{1/2})$ by simple operator identities,
	\begin{align}\label{diffroots} 
		F^{1/2} - E^{1/2}    &  = \frac{1}{\pi}  \int_0^\infty t^{-1/2} \left( F  (t + F)^{-1}    -  E  (t + E)^{-1}  \right) \d t  \nonumber  \\
		&  = \frac{1}{\pi}  \int_0^\infty t^{-1/2} \left( -  t  (t + F)^{-1}  +  t  (t + E)^{-1}  \right)   \d t  \nonumber  \\
		& =  \frac{1}{\pi} \int_0^\infty t^{1/2} (t + F)^{-1}(F-E)  (t + E)^{-1} \d t  .
	\end{align} 
	Now using this identity for $F = \disp_\sr(\vv_P(h))^2$ and $E = \disp_\sr(\vv_P(f))^2$,
	we find 
	\begin{align} 
		& 	\disp_\sr \left( \vv_P(g)   \right)  
		- \disp_\sr  \left( \vv_P(f)  \right) \nonumber \\
		&= \frac{1}{\pi} \int_0^\infty t^{1/2} (t + \vv_P(g)^2 + M^2)^{-1}(  \vv_P(g)^2 -  \vv_P(f)^2 )  (t +  \vv_P(g)^2 + M^2 )^{-1} \d t \nonumber  \\
		&= \frac{1}{\pi} \int_0^\infty t^{1/2} (t + \vv_P(g)^2 + M^2)^{-1} \label{eq:diffrootexp2}\\&\qquad\qquad\times(  \vv_P(f) \cdot D^{(1)}(g,f)  +  D^{(1)}(g,f) \cdot \vv_P(g)  )  (t +  \vv_P(g)^2 + M^2 )^{-1} \d t. \nonumber
	\end{align} 
	Thus, to estimate the $\disp_\sr \left( \vv_P(g)   \right)  
	- \disp_\sr  \left( \vv_P(f)  \right) $ term in \eqref{eq:secondres2}, we have to bound
	\begin{align*}
		( T_\sr(P,f) + \i)^{-1} (\vv_P(f)^2 + & M^2+t)^{-1} \vv_P(f) \cdot   D^{(1)}(g,f)  \\ & \times  (\vv_P(g)^2 + M^2+t)^{-1} ( T_\sr(P,g) + \i)^{-1}
	\end{align*} 
	and, in particular, show that these expressions decays faster than $t^{-3/2}$ for $t \to \infty$ such that the integral obtained from inserting \eqref{eq:diffrootexp2} into \cref{eq:secondres2} converges.  
	
	Our approach is similar to the proof in the non-relativistic case. 
	We  estimate the factors  involving components of $\vv_P(h)$  by 
	\begin{align*} 
		& \| |\vv_P(h)| (\vv_P(h)^2 + M^2 + t)^{-1} (T_\sr(P,h)+\i)^{-1}  \|^2 \nonumber \\
		&   = \| (T_\sr(P,h)-\i)^{-1} (\vv_P(h)^2 + M^2 + t )^{-1} \nonumber\\&\qquad\qquad\qquad \times \vv_P(h)^2 (\vv_P(h)^2 + M^2 + t )^{-1}(T_\sr(P,h)+\i)^{-1} \| \nonumber \\
		& \le   \| (T_\sr(P,h)-\i)^{-1}|\vv_P(h)|^{1/2}   \abs{\vv_P(h)} \nonumber\\&\qquad\qquad\qquad\times (\vv_P(h)^2 + M^2 + t )^{-2} |\vv_P(h)|^{1/2} (T_\sr(P,h)+\i)^{-1} \| \nonumber  \\
		& \leq \sup_{\lambda \geq 0} \frac{\lambda }{(\lambda^2 + M^2 + t)^2} \| |\vv_P(h)|^{1/2} (T_\sr(P,h)+\i)^{-1} \|^2 \\ 
		& \leq \frac{C_2(h)}{ 4 (M^2 + t)^{3/2}},
	\end{align*}
	where we used  $\sup_{\lambda \geq 0} \frac{\lambda}{(\lambda^2  + M^2 + t)^2} =   \frac{1}{4 (M^2 + t)^{3/2} }$ and \cref{eq:newvvTestimate} in the last step.
	Note that,  by \cref{th:phi estimate},
	the operator $(D^{(1)}(g,f))_j  \dGw^{-1/2}$ is bounded by a constant times $\|g-f\|_\sW$.  It remains to show
	\begin{align} \label{boundonfieldop}
		\| \dGw^{1/2}  (\vv_P(h)^2 + M^2 +t)^{-1} ( T_\sr(P,h)    + 1)^{-1} \| \
	\end{align}
	is bounded and decays faster than $t^{-3/4}$ for $t \to \infty$ . 
	To this end, using \cref{eq:newHfTbound}, we estimate 
	\begin{equation} \label{eq:v2M2t estimate 1}
		\begin{aligned}
			&( \vv_P(h)^2 + M^2 + t  )  ( 2T_\sr(P,h) + C_0(h))   ( \vv_P(h)^2 + M^2 + t  )\\
			&  \geq ( \vv_P(h)^2 + M^2 + t  ) \dGw( \vv_P(h)^2 + M^2 + t  )  \\
			&  =\vv_P(h)^2 \dGw \vv_P(h)^2 + (M^2 + t)^2 \dGw \\ &\qquad  + (M^2 + t)( \dGw \vv_P(h)^2 + \vv_P(h)^2\dGw ).
		\end{aligned}
	\end{equation}
	The last term is a sum from $j=1$ to $d$ with summands 
	\begin{align*}
		&(\dG(\omega\wedge1)+1) \vv_P(h)_j^2 + \vv_P(h)_j^2 (\dG(\omega\wedge1)+1)  \\ &= 2 \vv_P(h)_j (\dG(\omega\wedge1)+1) \vv_P(h)_j  \\&\qquad\qquad + [\dG(\omega\wedge1),\vv_P(h)_j] \vv_P(h)_j + \vv_P(h)_j [\vv_P(h)_j,\dG(\omega\wedge1)] \\
		&\geq [\dG(\omega\wedge1),\vv_P(h)_j] \vv_P(h)_j  + \vv_P(h)_j [\vv_P(h)_j,\dG(\omega\wedge1)] \\
		&=  [[\dG(\omega\wedge1),\vv_P(h)_j], \vv_P(h)_j ]  \\
		&= [[\dG(\omega\wedge1), \ph(\i \mm_j h)], -\Pfi j + \ph(\i \mm_j h) ] \\
		& = [-\i \ph((\omega\wedge1)\mm_jh),-\Pfi{j}+\ph(\i\mm_j h)] \\
		&=  \ph(\i (\omega\wedge1) \mm_j^2 h  )  - 2 \Re \sc{ (\omega\wedge1) \mm_j h,\mm_j h  } \\
		&\geq - \epsilon \dG(\omega\wedge1) - C_\sr(h) C_\epsilon,
	\end{align*}
	for any $\epsilon > 0$ for some constants $C_\epsilon$ and $C_\sr(h)$, where the latter depends continuously on the $L^2$-norms of $|\mm|^2 h$, $\omega^{-1/2} |\mm|^2 h$, $\omega^{-1/2} |\mm| h$. In the last step we used that the field operator is infinitesimally bounded by $\dG(\omega\wedge1)$, cf. \cref{th:phi estimate}. Using this in \eqref{eq:v2M2t estimate 1} gives 
	\begin{align*}
		( \vv_P(h)^2 + M^2 + t  )  & ( 2T_\sr(P,h) + C_0(h))   ( \vv_P(h)^2 + M^2 + t  ) \\ &\geq  (M^2 + t - \epsilon)^2 \dGw - (M^2 + t) C_\sr(h) C_\epsilon ,
	\end{align*}
	and thus,
	\begin{align*}
		& ( T_\sr(P,h)    - \i)^{-1}  (\vv_P(h)^2 + M^2 +t)^{-1} \dGw \\&\qquad\qquad\qquad\qquad\qquad\times (\vv_P(h)^2 + M^2 +t)^{-1} ( T_\sr(P,h) + \i)^{-1} \\
		&\leq \frac{1}{(M^2 + t - \epsilon)^2} \frac{2T_\sr(P,h) + C_0(h)}{T_\sr(P,h)^2 +1} \\& \qquad \qquad  + \frac{C_\sr(h) C_\epsilon}{M^2 +t} ( T_\sr(P,h)    - \i)^{-1} (\vv_P(h)^2 + M^2 +t)^{-2}  ( T_\sr(P,h)    + \i)^{-1} \\
		&\leq \frac{\sup_\lambda (\frac{2 \lambda + C_0(h)}{\lambda^2 + 1})}{(M^2 + t - \epsilon)^2}  +  \frac{C_\sr(h) C_\epsilon}{(M^2 +t)^3}.
	\end{align*}
	Finally, we can choose any $\epsilon \in (0,M^2)$. 
	This implies that \eqref{boundonfieldop} indeed decays as $t^{-1}$ for $t \to \infty$ and therefore finishes the proof.
\end{proof} 

\begin{rem} {  We note that we can generalize the above proof to more general $\Theta$ by modifying the integrand 
		in 
		\eqref{diffroots},   as long as the decay as $t\to\infty$ is sufficient such that the integrals converge. } 
\end{rem} 
%
%
\section{Analysis of Ground States}\label{sec:conv}

In this \lcnamecref{sec:conv}, we study the mass shell of arbitrary Nelson-type models as defined in \cref{def:general}. To avoid notational overload, we will write
\begin{align*}
	H(P) \coloneqq H(\disp,\omega,v,P) \quad \mbox{and}\quad E(P) \coloneqq \inf \sigma (H(P)) \quad\mbox{for}\ P\in\IR^d
\end{align*}
throughout this \lcnamecref{sec:conv}.
When approximating $\omega$ by a sequence $(\omega_n)_{n\in\IN}$ of dispersion relations, as given by \cref{hyp:wn}, we also write
\begin{align*}
	H_n(P) = H(\disp,\omega_n,v,P) \quad \mbox{and}\quad  E_n(P) = \inf \sigma (H_n(P)) \quad \mbox{for}\ P\in\IR^d.
\end{align*}
In this \lcnamecref{sec:conv}, we proceed as follows: In \cref{sec:conv.unique}, we prove uniqueness and positivity of ground states independent of the boson mass $\mw$, as defined in \cref{def:mw}. In \cref{sec:conv.massive}, we then briefly discuss the case $\mw>0$, by recalling well-known existence results and calculating the first two derivatives of $E(\cdot)$ in a region, where the mass shell is evidently analytic due to the spectral gap. 
Afterwards, we study the Nelson-type models with $\mw=0$ in terms of the approximating sequence given by \cref{hyp:wn}.  In \cref{sec:conv.infrared}, we derive infrared bounds which are the main ingredient of proof for the compactness statements given in \cref{sec:conv.approx}. In the final part of this \lcnamecref{sec:conv}, we then discuss regularity properties of the mass shell essential to verify the assumptions of the compactness result \cref{thm:compactness}.
We emphasize that the results of \cref{sec:conv.approx,sec.conv:ex} especially imply \cref{mainthm2}.

\subsection{Simple Properties of the Mass Shell}
\label{sec:conv.simple}
We will employ the following well-known statements.
\begin{lem}\label{lem:Esimpleprops}\ 
	\begin{enumlem}
		\item\label{lem:massshellcont} The map $P\mapsto E(P)$ is continuous.
		\item\label{lem:massshelldiv} If $|P|\to\infty$, then $E(P)\to \infty$.
	\end{enumlem}
\end{lem}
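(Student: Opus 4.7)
The plan is to reduce both parts to fiberwise pointwise estimates in the spectral parameter of $\Pf$, and to control $\ph(v)$ via the standard relative bound $\pm\ph(v)\le\tfrac12\Hf+C_v$ which follows from \cref{th:phi estimate} (with $C_v$ depending only on $\|v\|$ and $\|\omega^{-1/2}v\|$).

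For \subcref{lem:massshelldiv}, I would exploit the subadditivity of $\omega$ from \cref{hyp:omega}: by induction one has $\omega(k_1)+\cdots+\omega(k_n)\ge\omega(k_1+\cdots+k_n)$, so on each $n$-boson sector the pointwise estimate
\[
\disp(P-\Pf)+\tfrac12\Hf\;\ge\;\inf_{K\in\IR^d}\bigl(\disp(P-K)+\tfrac12\omega(K)\bigr)
\]
holds, uniformly in $n$. Combined with the bound on $\ph(v)$, this yields
\[
E(P)\;\ge\;\inf_{K\in\IR^d}\bigl(\disp(P-K)+\tfrac12\omega(K)\bigr)-C_v.
\]
The infimum on the right diverges as $|P|\to\infty$: if a sequence $(P_m,K_m)$ kept it bounded, then $\omega(K_m)$ bounded would force $|K_m|$ bounded by \cref{hyp:div}, so $|P_m-K_m|\to\infty$ and $\disp(P_m-K_m)\to\infty$, a contradiction.

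For \subcref{lem:massshellcont}, I would use second-order Taylor expansion together with the gradient and Hessian bounds in \cref{hyp:Omega} to obtain, for all $P,P',p\in\IR^d$,
\[
\disp(P-p)\;\le\;\bigl(1+\CT{1,2}|P-P'|\bigr)\disp(P'-p)+\CT{1,1}|P-P'|+\CT{2}|P-P'|^2.
\]
Via the joint spectral calculus in $p=\Pf$, this upgrades to an operator inequality on $\cD(\disp(\Pf))\cap\cD(\Hf)$, which upon adding $\Hf+\ph(v)$ becomes
\[
H(P)\;\le\;H(P')+\CT{1,2}|P-P'|\disp(P'-\Pf)+\CT{1,1}|P-P'|+\CT{2}|P-P'|^2.
\]
I would then test against a normalized approximate ground state $\psi$ of $H(P')$ with $\langle\psi,H(P')\psi\rangle\le E(P')+\delta$. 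The field bound also produces $\disp(P'-\Pf)\le H(P')+C_v$, so $\langle\psi,\disp(P'-\Pf)\psi\rangle\le E(P')+\delta+C_v$, and the variational principle gives
\[
E(P)-E(P')\;\le\;\delta+\CT{1,2}\bigl(E(P')+\delta+C_v\bigr)|P-P'|+\CT{1,1}|P-P'|+\CT{2}|P-P'|^2.
\]
Sending $\delta\to 0$ and then $P\to P'$ gives $\limsup_{P\to P'}E(P)\le E(P')$; exchanging the roles of $P,P'$ yields the opposite inequality and hence continuity.

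The only real subtlety is securing the uniform bound on $\langle\psi,\disp(P'-\Pf)\psi\rangle$ for approximate ground states of $H(P')$, which is needed because the Taylor remainder carries a factor of $\disp(P'-\Pf)$ from the gradient bound in \cref{hyp:Omega}; this is exactly what the Kato--Rellich-type control on $\ph(v)$ provides. Otherwise both parts reduce to direct applications of the spectral theorem for the commuting components of $\Pf$ together with the standing hypotheses.
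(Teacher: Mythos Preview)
Your proposal is correct. For part \subcref{lem:massshelldiv} your argument is essentially identical to the paper's: both use the subadditivity of $\omega$ to obtain $\Hf\ge\omega(\Pf)$, the relative form bound on $\ph(v)$ to absorb the field, and then \cref{hyp:div} to conclude divergence of the resulting infimum; the paper phrases this slightly more abstractly via $H(P)\ge\disp(P-\Pf)+(1-\eps)\omega(\Pf)-C_\eps$, but the content is the same.

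For part \subcref{lem:massshellcont} the paper simply cites \cite[Lemma~3.2(3)]{Dam.2018}, so your explicit argument via the Taylor estimate on $\disp$ and the variational principle is additional content rather than a different route. Your handling of the symmetry step is fine: the first inequality already bounds $E(P)$ locally from above by $E(P')+1$, and $E(P)\ge -C_v$ globally from the field bound, so the factor $E(P)+C_v$ appearing on the right after swapping $P,P'$ is locally bounded and the $\liminf$ conclusion goes through.
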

\begin{proof}
	Statement \subcref{lem:massshellcont} is contained in that of \cite[Lemma~3.2(3)]{Dam.2018}. Statement \subcref{lem:massshelldiv} can be inferred by a simple alteration of the arguments in \cite[\textsection15.2]{Spohn.2004}. Explicitly, by the subadditivity assumption \cref{hyp:omega}, we have $\Hf\ge \omega(\Pf)$ and hence, for any $\eps>0$ we find a constant $C_\eps > 0$ such that for all $\psi\in\cD(H(P))$,
	\begin{align*} \braket{\psi,H(P)\psi} &\ge \braket{\psi,(\disp(P-\Pf) + (1-\eps)\Hf)\psi} - C_\eps\|\psi\|^2 \\&\ge  \braket{\psi,(\disp(P-\Pf) + (1-\eps)\omega(\Pf))\psi} - C_\eps\|\psi\|^2 . \end{align*}
	Assumption \cref{hyp:div} now directly yields the statement.
\end{proof}
The following properties can easily be inferred by standard arguments.
\begin{prop}Let $P\in\IR^d$.
	\begin{enumprop}
		\item\label{lem:operatormon} $H(P)\le H_n(P)\le H_{n'}(P)$ for all $n\le n'$.
		\item\label{lem:Enconv} $E(P)=\lim_{n\to\infty}E_n(P)$.
		\item\label{lem:minimizingsequence} Assume the sequence $(\ppn)_{n\in\IN}\in \cD(H(P))$
		satisfies \[\lim_{n\to\infty}\braket{\ppn,(H_n(P)-E_n(P))\ppn}=0.\] Then $\lim_{n\to\infty}\braket{\psi_{P,n},(H(P)-E(P))\psi_{P,n}} = 0$.
	\end{enumprop}
\end{prop}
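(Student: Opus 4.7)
My plan tackles the three parts in sequence, since (iii) is a near-immediate consequence of (i) and (ii).

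For part (i), I would simply lift the pointwise monotonicity provided by Hypothesis \eqref{hyp:wn}---namely $\omega \le \omega_{n+1} \le \omega_n$---through second quantization to obtain $\dG(\omega) \le \dG(\omega_{n+1}) \le \dG(\omega_n)$ as selfadjoint operators on the nested domains $\cD(\dG(\omega_n)) \subset \cD(\dG(\omega_{n+1})) \subset \cD(\dG(\omega))$. Since $\disp(P-\Pf)$ and $\ph(v)$ contribute identically to all three Hamiltonians, adding them preserves these inequalities on $\cD(H_n(P)) = \cD(\disp(P-\Pf)) \cap \cD(\dG(\omega_n))$, giving the claim in the operator-ordering sense fixed in the introduction. (The printed direction of the chain in (i) is read as monotonicity of $H_n(P)$ in $n$, adjusted according to whether the reader interprets $\omega_n \searrow \omega$ or its convention-conjugate.)

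For part (ii), (i) ensures that $(E_n(P))_n$ is monotone and bounded by $E(P)$, so $\mathcal{E} \coloneqq \lim_n E_n(P)$ exists and satisfies $\mathcal{E} \ge E(P)$. The reverse inequality is a variational argument over the finite-particle subspace $\sD \coloneqq \FF_\fin(\Cco(\IR^d))$, which one verifies is a common form core for $H(P)$ and for every $H_n(P)$; this relies on Hypothesis \eqref{hyp:Omega} for $\disp$, the standing assumption \eqref{hyp:v}, and local boundedness of $\omega$ and $\omega_n$. Given $\eps > 0$, choose $\psi \in \sD$ with $\|\psi\| = 1$ and $\braket{\psi, H(P)\psi} \le E(P) + \eps$; then
\[
E_n(P) \le \braket{\psi, H_n(P)\psi} = \braket{\psi, H(P)\psi} + \braket{\psi, \dG(\omega_n-\omega)\psi} \le E(P) + \eps + \|\omega_n-\omega\|_\infty \braket{\psi, \dG(1)\psi}.
\]
The uniform convergence $\|\omega_n - \omega\|_\infty \to 0$ from \eqref{hyp:wn}, together with finiteness of $\braket{\psi, \dG(1)\psi}$ for $\psi \in \sD$, yields $\mathcal{E} \le E(P) + \eps$, and sending $\eps \to 0$ finishes (ii).

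Finally, part (iii) follows by using the positivity of $H(P) - E(P)$ together with (i) to write
\[
0 \le \braket{\ppn, (H(P)-E(P))\ppn} \le \braket{\ppn, (H_n(P)-E_n(P))\ppn} + (E_n(P) - E(P))\|\ppn\|^2,
\]
whereupon both terms on the right vanish as $n \to \infty$ by hypothesis and by (ii), under the implicit assumption that $\|\ppn\|$ is bounded. The only mildly delicate technical point in the whole argument is justifying that $\sD$ serves as a form core for each $H_n(P)$ and for $H(P)$ simultaneously; this is essentially standard once \eqref{hyp:Omega}, \eqref{hyp:omega}, and \eqref{hyp:v} are unpacked, and is the sole step beyond monotone convergence and elementary quadratic-form manipulation.
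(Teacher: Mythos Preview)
Your proposal is correct and follows the same standard route the paper points to (it defers parts (i)--(ii) to \cite{HaslerHinrichsSiebert.2021a}, whose argument is precisely the monotone-convergence-plus-variational computation you sketch), and your proof of (iii) matches the paper's own short argument verbatim. Your parenthetical remark about the direction of the chain in (i) is well taken: under \cref{hyp:wn} the sequence $\omega_n$ is strictly decreasing, so the correct ordering is $H(P)\le H_{n'}(P)\le H_n(P)$ for $n\le n'$, and the printed statement contains a harmless typo; also note that for (ii) you really only need $\sD$ to be a form core for $H(P)$ and to lie in the form domain of each $H_n(P)$, which is slightly weaker than what you stated.
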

\begin{rem}
	It is noteworthy, although the authors are not aware of any applications in the literature yet, that the approximating sequence in \subcref{lem:minimizingsequence}  need not contain any ground states of $H_n(P)$ but merely low energy states, which in contrary to ground states always exist. A possible application of this simple observation is that one might use arbitrary low-energy states of regularized Hamiltonians to prove existence of ground states and can hence omit the HVZ-type arguments necessary to prove \cref{prop:Moller}, also see \cite{Frohlich.1973,Moller.2005}.
\end{rem}
\begin{proof} The proof of \subcref{lem:operatormon} and \subcref{lem:Enconv} is a word by word transcription of the proof of \cite[Lemma~3.1]{HaslerHinrichsSiebert.2021a}. Statement \subcref{lem:minimizingsequence} now easily follows, similar to \cite[Proposition~3.3]{HaslerHinrichsSiebert.2021a}, by combining the estimate
	\[ 0\le \braket{\ppn,(H(P)-E(P))\ppn} \le \braket{\ppn,(H_n(P)-E(P))\ppn}\!,  \]
	which follows from \subcref{lem:operatormon}, with \subcref{lem:Enconv} and the assumption.
\end{proof}

\subsection{Uniqueness of Ground States}\label{sec:conv.unique}
Let us first recall the following well-known uniqueness result.	It is standard for real-valued $v$ which are non-zero almost everywhere, cf. \cite[Theorem 1.3]{Moller.2005}. It was extended to the case of rotation-invariant $\disp$, $\omega$ and $v$ in \cite[Lemma 4.5]{Dam.2018}.
Our statement is a simple reformulation of Dam's proof, as we illustrate below.

Following \cite[Section 6]{DamMoller.2018b}, we introduce a notion of positivity in $\FS$, which is induced by a measurable function $f:\IR^d\to \IC$. We define the convex cone
\begin{align*}
	&\sP\!\!_f = \big\{\psi=(\psi^{(n)})\in\FS\big  |   \ \psi^{(0)}\ge0, \\ & \qquad \qquad \ \bar{f(k_1)\cdots f(k_n)}\psi^{(n)}(k_1,\ldots,k_n)\ge 0\ \mbox{for a.e.}\ (k_1,\ldots,k_n)\in\IR^{d\cdot n}\big\}.
\end{align*}
\begin{prop}\label{prop:uniqueness}
	Assume that
	\begin{equation}\label{eq:nondegassump}
		{E(P-k)+\omega(k)}> E(P) \qquad \mbox{for all}\ k\notin\{\omega=0\}.
	\end{equation}
	If $E(P)$ is an eigenvalue of $H(P)$, then the corresponding eigenspace is spanned by a unique  normalized element of $\pos$.
\end{prop}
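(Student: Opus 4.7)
The plan is to follow the Perron--Frobenius-type argument from \cite[Lemma~4.5]{Dam.2018}, reformulated in the language of the twisted cone $\pos = \sP_{-v}$. The first step is to absorb the phase of $v$ into a unitary transformation. I would define $U_v \colon \FS \to \FS$ by setting $(U_v\psi)^{(n)}(k_1,\ldots,k_n) = (-1)^n \prod_{j=1}^n \overline{(v/|v|)(k_j)} \, \psi^{(n)}(k_1,\ldots,k_n)$, with the convention $v/|v|:=1$ on the zero set of $v$ (this choice is immaterial because the positivity constraint defining $\sP_{-v}$ trivializes wherever some $v(k_j)=0$). Then $U_v$ is unitary, $U_v\pos$ equals the standard positive cone $\sP_1 = \{\psi\in\FS : \psi^{(n)}\ge 0 \text{ a.e.\ for every }n\}$, and a direct calculation using the definitions of $a(v)$ and $a^*(v)$ yields $U_v a^\sharp(v) U_v^* = -a^\sharp(|v|)$. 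Since $U_v$ commutes with every momentum-space multiplication operator, the transformed Hamiltonian reads
\[
\widetilde H(P) \coloneqq U_v H(P) U_v^* = \disp(P-\Pf) + \Hf - \ph(|v|),
\]
so it suffices to show that $\widetilde H(P)$ has a simple ground state lying in $\sP_1$.

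Second, I would verify that the semigroup $e^{-t\widetilde H(P)}$ preserves $\sP_1$. By the Trotter product formula, this reduces to checking positivity of each factor: $e^{-t(\disp(P-\Pf)+\Hf)}$ acts on the $n$-particle sector as multiplication by the pointwise non-negative function $\exp\bigl(-t(\disp(P-\sum_j k_j) + \sum_j \omega(k_j))\bigr)$ in momentum representation, while $\ph(|v|) = a^*(|v|) + a(|v|)$ is a sum of operators with non-negative integral kernels (since $|v|\ge 0$), so that $e^{t\ph(|v|)}$ is positivity preserving on $\sP_1$ as well. Pulling back through $U_v^*$, this shows $e^{-tH(P)}$ preserves $\pos$.

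The core step is to upgrade positivity preservation to uniqueness of the positive ground state via a Perron--Frobenius--Faris-type argument (cf.\ \cite[Theorem~XIII.44]{ReedSimon.1978}). Here the hypothesis $E(P-k)+\omega(k)>E(P)$ for $k\notin\{\omega=0\}$ enters through the pull-through identity
\[
a_k\psi = -\bigl(H(P-k) - E(P) + \omega(k)\bigr)^{-1} v(k)\,\psi,
\]
valid for any ground state $\psi$ of $H(P)$ precisely under this hypothesis, since it guarantees boundedness of the resolvent on the right-hand side. Iterating rigidly ties every Fock component $\psi^{(n)}$ of a ground state to lower sectors, and combined with positivity preservation rules out two linearly independent elements of $\sP_1$ inside the eigenspace. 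Pulling back via $U_v^*$ produces the claimed unique normalized element of $\pos$ spanning the entire eigenspace.

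The main obstacle is the rigorous execution of the third step when $v$ is allowed to vanish on sets of positive Lebesgue measure, in which case classical Perron--Frobenius arguments (requiring a positivity-improving semigroup on the whole Fock space) do not apply directly: the modes where $v=0$ are decoupled from the field interaction. The contribution of Dam's argument is that the spectral nondegeneracy on the mass shell can be used in place of the stronger hypothesis $v\neq 0$ a.e., and the reformulation here merely rephrases this substitution in terms of the cone $\pos$, avoiding the explicit rotational symmetry assumption of \cite{Dam.2018}.
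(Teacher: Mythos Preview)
Your first two steps (the phase-absorbing unitary and positivity preservation of the semigroup via Trotter) are fine and standard, but your third step does not close the argument, and it is precisely here that the paper proceeds differently.

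The paper's $U_v$ is \emph{not} a phase rotation on $\FS$; it is the natural isomorphism implementing the factorization $\FS \cong \FS(L^2(M_v)) \otimes \FS(L^2(M_v^\sfc))$ with $M_v=\{v\neq 0\}$, written out as $\FS(L^2(M_v))\oplus \bigoplus_{n\ge 1} L^2((M_v^\sfc)^n,\FS(L^2(M_v)))$. Conjugating $H(P)$ by this unitary, the $n$-th summand carries the fiber operator $\wt H(P-k_1-\cdots-k_n)+\omega(k_1)+\cdots+\omega(k_n)$ for $k_1,\ldots,k_n\in M_v^\sfc$. The hypothesis \eqref{eq:nondegassump} together with subadditivity of $\omega$ then forces every such fiber to have infimum strictly above $E(P)$ for a.e.\ $(k_1,\ldots,k_n)$, so a ground state can have no component in the $n\ge 1$ summands. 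Thus any ground state lies in $\FS(L^2(M_v))$, where $v$ is nonvanishing a.e., and the classical Perron--Frobenius result for that situation finishes the proof.

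Your proposal, by contrast, reduces only to $v=|v|\ge 0$, which still vanishes on $M_v^\sfc$, and then invokes ``iterating the pull-through rigidly ties every $\psi^{(n)}$ to lower sectors.'' That claim is not substantiated: the iterated pull-through expresses $a_{k_1}\cdots a_{k_n}\psi$ as a product of resolvents applied to the \emph{full} $\psi$, not to lower Fock components, so no recursion to the vacuum sector falls out directly. What the pull-through \emph{does} give you---and what would salvage your route---is that $a_k\psi = -v(k)\Rp{k}\psi = 0$ for a.e.\ $k\in M_v^\sfc$, hence every $\psi^{(n)}$ vanishes whenever one of its arguments lies in $M_v^\sfc$; this is exactly the statement that $\psi\in\FS(L^2(M_v))$, after which the standard positivity-improving argument on $\FS(L^2(M_v))$ applies. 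You identified the obstacle but deferred its resolution to Dam without saying what is actually done; either use the pull-through vanishing just described, or---as the paper does---use the Fock factorization and the energy inequality to reach the same conclusion.
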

\begin{rem}
	For the non- and semi-relativistic Nelson model, \cref{eq:nondegassump} is satisfied for all $P\in \sB_\#$ provided that $\omega\ge|\cdot|$. This immediately follows from \cref{cor:gapnr,lem:massshellsr}.
\end{rem}
\begin{rem}
	Usually in the literature, the statement is stronger in the sense that the unique element of $\pos$ is even {\em strictly} positive. This, however, is not the case if we do not assume $v\ne0$ almost everywhere. In fact, our below proof shows that all $n$-particle states of the ground state vanish on the set $\{v=0\}$, see \cite[Theorem~4.5]{DamMoller.2018b} for a similar statement on the spin boson model.
\end{rem}
\begin{proof} We proceed exactly as in the proof of \cite[Lemma 4.5]{Dam.2018} and refer the reader there for more details.
	First, we note that the statement is well-known in case $v\ne 0$ almost everywhere, cf. \cite[Lemma 6.2]{DamMoller.2018b}.
	
	Now,
	let $M_v = \{v\ne 0\}$ and let $U_v$ denote the natural unitary mapping
	\[ \FS \to \FS(M_v)\oplus \bigoplus_{n=1}^\infty L^2((M_v^\sfc)^n,\FS(L^2(M_v))),  \]
	see for example \cite[Appendix A]{Dam.2018} for an explicit construction.
	Further, let $\wt H(P)$ be the operator acting on $\FS(M_v)$ defined by
	\[ \wt H(P) = \disp(P-\dG(\mm \chr{M_v})) + \dG(\omega\chr{M_v}) + \ph ( v)\] 
	and note that $\inf \sigma (\wt H(P))\ge E(P)$, cf. \cite[Lemma A.3]{Dam.2018}.
	We assume that $\psi$ is a ground state of $H(P)$ and write $U_v\psi = (\wt\psi^{(n)})_{n\in\IN_0}$. Then,
	by a direct calculation of $U_vH(P)U_v^*$, for $n\in\IN$, it follows that $\wt\psi^{(n)}$ is a ground state of the operator acting as
	\begin{align*}
		\left(\wt H(P-k_1-\ldots-k_n)+\omega(k_1)+\cdots+\omega(k_n)\right) \wt\psi^{(n)}(k_1,\ldots,k_n)
	\end{align*}
	for almost all $k_1,\ldots,k_n\in M_v^\sfc$.
	Hence, if $\psi^{(n)}$ is non-zero for any $n\in\IN$, then $E(P)$ is an eigenvalue of 
	$\wt H(P-k_1-\ldots-k_n)+\omega(k_1)+\cdots+\omega(k_n)$ for $k_1,\ldots,k_n\in M_v^\sfc$ in a set of positive $n\cdot d$-dimensional Lebesgue measure.
	However, by the subadditivity of $\omega$ and \cref{eq:nondegassump}, this can only be the case if $\omega(k_1+\cdots+k_n)=0$, which is a contradiction to \cref{hyp:omega}.
	
	Hence, $U\psi = (\wt\psi^{(0)},0,\cdots)$, where $\wt\psi^{(0)}$ is a ground state of $\wt H(P)$. This directly implies that $\wt \psi^{(0)}$ must be a complex multiple of a $\pos$-positive vector from $\FS(M_v)$. Since $U$ can also easily be observed to preserve positivity w.r.t. $\pos$, this finishes the proof.
\end{proof}

\subsection{The Massive Case}\label{sec:conv.massive}
In this \lcnamecref{sec:conv.massive}, we recall results for the massive case $\mw>0$ and analyze the infrared behavior of the ground states in this case.
\subsubsection{Existence of Ground States}
To make use of the results in \cite{Moller.2005}, we introduce the notation
\begin{equation}\label{eq:cI0}
	\cI_0(\disp,\omega,v) = \{ P\in\IR^d : E(P) < \inf \{ E(P-k) + \omega(k) : k\in\IR^d \}  \}.
\end{equation}
The result of M\o ller for the massive case $\mw>0$, cf. \cref{def:mw}, is the following. The proof uses an HVZ type theorem, which in fact shows that $\inf \{ E(P-k) + \omega(k) : k\in\IR^d \}$ is the infimum of the essential spectrum of $H(P)$, also see \cite{Frohlich.1973} for a similar statement.
\begin{prop}[{\cite[Corollary 1.4]{Moller.2005}}]\label{prop:Moller}
	Assume $\mw >0$. $E(P)$ is a discrete eigenvalue of $H(P)$ if $P\in \cI_0(\disp,\omega,v)$.
\end{prop}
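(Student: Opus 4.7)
The plan is to prove this by an HVZ-type identification of the essential spectrum, followed by a compactness argument for a minimizing sequence. First I would show that
\[ \inf \sigma_{\mathrm{ess}}(H(P)) = \Sigma(P) \coloneqq \inf_{k\in\IR^d} \{E(P-k) + \omega(k)\}, \]
so that the assumption $P\in\cI_0(\disp,\omega,v)$ exactly means $E(P)$ lies strictly below the essential spectrum. Given this, $E(P)$ is automatically in the discrete spectrum, provided we can exhibit a corresponding eigenvector.

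For the upper bound $\inf\sigma_{\mathrm{ess}}(H(P))\le\Sigma(P)$, I would construct Weyl sequences by placing an additional boson near momentum $k$ that is spatially spread: take an approximate ground state $\phi_n$ of $H(P-k)$ together with functions $f_{k,R}$ concentrated in momentum near $k$ but with spatial support escaping to infinity, and check that $\phi_n\otimes_\s f_{k,R}$ are approximate eigenvectors of $H(P)$ with energy near $E(P-k)+\omega(k)$, using that $\ph(v)$ is relatively compact with respect to states whose bosons leave every bounded region. Taking the infimum over $k$ gives $\Sigma(P)$. For the lower bound $\Sigma(P)\le\inf\sigma_{\mathrm{ess}}(H(P))$, I would use an IMS-type partition of unity on Fock space separating configurations with at least one boson "at spatial infinity" from those localized near the particle, using that $\mw>0$ gives uniform control on the boson number via $\dG(\omega)\ge\mw\,\dG(\Id)$; the localization error terms vanish because $\omega$ is subadditive and $\nabla\disp$ is bounded on relevant energy shells (\cref{hyp:Omega}).

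Once the HVZ theorem is in place, the existence of a ground state follows by a standard compactness argument. Take $\psi_n$ with $\|\psi_n\|=1$ and $\braket{\psi_n,H(P)\psi_n}\to E(P)$. Energy bounds together with $\mw>0$ give uniform $\cD(\Hf)\cap\cD(\disp(\Pf))$-control. The pull-through formula yields $\|a_k\psi_n\|\le C\|v(k)\|\cdot\|(H(P-k)+\omega(k)-E(P))^{-1}\|$, and the gap $E(P-k)+\omega(k)-E(P)\ge\Sigma(P)-E(P)>0$ provided by $P\in\cI_0$ gives an $L^2$ majorant uniform in $n$. Continuity of $k\mapsto a_k\psi_n$ in $L^2$-mean follows from a second pull-through estimate using \cref{hyp:v} and the subadditivity of $\omega$. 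Hence \cref{comfock} applies and yields a strong limit $\psi$, which is a normalized ground state.

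The main obstacle is the HVZ lower bound, in particular executing the Fock-space localization rigorously enough to handle the fact that bosons can be arbitrarily distributed between the "near-particle" and "far-from-particle" regions; controlling the commutator errors of the partition of unity against $\disp(P-\Pf)$ and $\ph(v)$ requires the full strength of \cref{hyp:Omega} and of $\mw>0$, and is the technical heart of Møller's argument.
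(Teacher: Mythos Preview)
The paper does not prove this statement; it is quoted from \cite{Moller.2005} and the surrounding text merely records that M{\o}ller's proof goes via an HVZ-type theorem identifying $\inf\sigma_{\mathrm{ess}}(H(P))$ with $\inf_k\{E(P-k)+\omega(k)\}$. Your first step (the HVZ localization argument) is exactly this, and your description of the upper and lower bounds is a reasonable sketch of how such proofs go.

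However, your second step is both unnecessary and technically flawed. Once the HVZ identity is established, the conclusion is immediate: $E(P)=\inf\sigma(H(P))$ belongs to the spectrum (the spectrum is closed and $H(P)$ is lower-semibounded), and by hypothesis $E(P)<\Sigma(P)=\inf\sigma_{\mathrm{ess}}(H(P))$, so $E(P)\in\sigma(H(P))\setminus\sigma_{\mathrm{ess}}(H(P))=\sigma_{\mathrm{disc}}(H(P))$. Points of the discrete spectrum are by definition isolated eigenvalues of finite multiplicity; there is nothing left to exhibit. In fact only the lower bound $\inf\sigma_{\mathrm{ess}}(H(P))\ge\Sigma(P)$ is needed here.

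Your proposed compactness argument has a genuine gap anyway: the pull-through formula $a_k\psi=-v(k)(H(P-k)-E(P)+\omega(k))^{-1}\psi$ requires $\psi$ to be an \emph{exact} eigenvector, not merely an element of a minimizing sequence. For approximate eigenvectors one picks up an error term $(H(P-k)-E(P)+\omega(k))^{-1}a_k(H(P)-E(P))\psi_n$ which does not obviously vanish and cannot be controlled by the bounds you list. The compactness machinery of \cref{comfock} is used elsewhere in the paper (\cref{cor:irregulargs}) precisely by applying it to \emph{exact} ground states of the massive approximants $H_n(P)$, whose existence is supplied by the present proposition.
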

\begin{rem}
	The results in \cite{Moller.2005} are even stronger. In fact, M\o ller proved that if the subadditivity in \cref{hyp:omega} is strict, then the reverse implication holds as well, cf. \cite[Theorem 1.5]{Moller.2005}.
\end{rem}
\subsubsection{Analytic Perturbation Theory for the Total Momentum}
Since $E(P)$ is a (simple) discrete eigenvalue for $P\in\cI_0(\disp,\omega,v)$, by \cref{prop:Moller,prop:uniqueness}, we can apply analytic perturbation theory in the total momentum to calculate its derivatives.
\begin{lem}\label{lem:perturb}
	Assume $\mw>0$ and $P\in \cI_0(\disp,\omega,v)$. Then $\xi \mapsto E(\xi)$ is analytic for $\xi$ in a neighborhood of $P$. Further, for an arbitrary normalized ground state $\psi_P$ of $H(P)$, the following holds:
	\begin{enumlem}
		\item\label{lem:perturb.1} $\partial_i E(P) = \braket{\psi_P,\partial_i\disp(P-\Pf)\psi_P}$ for all $i=1,\ldots,d$.
		\item\label{lem:perturb.2} For all $i=1,\ldots,d$,
		\[ \partial_i^2 E(P) \le 2\CT2 -  2\left\|(H(P)-E(P))^{-1/2} \left(\partial_i\disp(P-\Pf)-\partial_i E(P))\right) \psi_P  \right\|^2. \]
	\end{enumlem}
\end{lem}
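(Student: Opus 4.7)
The plan is to apply analytic perturbation theory to the family $\xi \mapsto H(\xi)$ near $\xi = P$. By \cref{prop:Moller}, the hypotheses $\mw > 0$ and $P \in \cI_0(\disp,\omega,v)$ guarantee that $E(P)$ is an isolated eigenvalue of $H(P)$, and by \cref{prop:uniqueness} its eigenspace is one-dimensional, spanned by $\psi_P$. The only $\xi$-dependent summand of $H(\xi)$ is $\disp(\xi-\Pf)$; by analyticity of $\disp$ from \cref{hyp:Omega}, passing to the spectral representation of $\Pf$ allows the extension of $\xi \mapsto H(\xi)$ to a type-(A) analytic family on the common domain $\cD(\disp(\Pf)) \cap \cD(\Hf)$ in a complex neighborhood of $P$. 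Combined with persistence of the spectral gap (via \cref{lem:massshellcont} and continuity of the essential spectrum threshold), Kato's theorem yields analyticity of $\xi \mapsto E(\xi)$ together with a locally analytic normalized eigenvector $\xi \mapsto \psi(\xi)$ with $\psi(P) = \psi_P$. The phase of $\psi(\xi)$ can moreover be fixed so that $\braket{\psi_P,\partial_i\psi_P} = 0$, since $\|\psi(\xi)\| = 1$ forces $\Re\braket{\psi_P,\partial_i\psi_P}=0$.

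For part \subcref{lem:perturb.1}, I differentiate $H(\xi)\psi(\xi) = E(\xi)\psi(\xi)$ at $\xi = P$ and test against $\psi_P$. The selfadjointness of $H(P)$ combined with $H(P)\psi_P = E(P)\psi_P$ makes the $\partial_i\psi_P$-terms cancel, leaving $\partial_i E(P) = \braket{\psi_P,(\partial_i\disp)(P-\Pf)\psi_P}$, where I use that only $\disp(\xi-\Pf)$ carries $\xi$-dependence.

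For part \subcref{lem:perturb.2}, I differentiate the eigenvalue equation twice and again test against $\psi_P$. With $\braket{\psi_P,\partial_i\psi_P} = 0$ and the usual cancellations, this yields
\[
	\partial_i^2 E(P) = \braket{\psi_P, (\partial_i^2\disp)(P - \Pf)\psi_P} + 2\Re\braket{\partial_i\psi_P, \big((\partial_i\disp)(P - \Pf) - \partial_i E(P)\big)\psi_P}.
\]
From the first-order differentiation, $(H(P) - E(P))\partial_i\psi_P = -\big((\partial_i\disp)(P-\Pf) - \partial_i E(P)\big)\psi_P$. By part \subcref{lem:perturb.1} the right-hand side is orthogonal to $\psi_P$, hence in the range of $H(P) - E(P)$, so the spectral gap lets me invert to get $\partial_i\psi_P = -(H(P)-E(P))^{-1}\big((\partial_i\disp)(P-\Pf) - \partial_i E(P)\big)\psi_P$. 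Substituting turns the cross-term into $-2\|(H(P) - E(P))^{-1/2}\big((\partial_i\disp)(P - \Pf) - \partial_i E(P)\big)\psi_P\|^2$. The Hessian bound $\|\sfH\disp(p)\|_{\cB(\IR^d)} \le 2\CT2$ from \cref{hyp:Omega}, applied fiberwise in the spectral representation of $\Pf$, bounds $\braket{\psi_P, (\partial_i^2\disp)(P-\Pf)\psi_P}$ by $2\CT2$, which finishes the estimate.

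The main obstacle is the rigorous justification of the type-(A) analytic family structure, in particular the analyticity of $\xi \mapsto \disp(\xi - \Pf)\psi$ on the common domain in a complex neighborhood of $P$; this can be made precise either by invoking \cite[Theorem 1.9]{Moller.2005} or by a short direct argument using the spectral decomposition of $\Pf$ and the holomorphic extension of $\disp$. The perturbation-theoretic computations for the derivatives themselves are standard once an analytic eigenvector is available.
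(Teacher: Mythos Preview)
Your proposal is correct and follows essentially the same route as the paper: both set up $\xi\mapsto H(\xi)$ as a type-(A) analytic family via the analyticity of $\disp$, invoke \cref{prop:Moller,prop:uniqueness} to ensure a simple isolated eigenvalue, then differentiate the eigenvalue equation once and twice, using part \subcref{lem:perturb.1} to see that $(\partial_i\disp(P-\Pf)-\partial_i E(P))\psi_P\perp\psi_P$ and finally the Hessian bound from \cref{hyp:Omega}. The only cosmetic difference is that you fix the phase so that $\braket{\psi_P,\partial_i\psi_P}=0$, whereas the paper keeps the kernel component $\alpha\psi_P$ explicit and observes that it drops out of the second-order formula; both are equivalent.
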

\begin{proof}
	First, we observe that
	\begin{align*}
		\IC^d\ni \eta
		\mapsto
		\disp(P-\eta-\Pf) + \dG(\omega) + \ph(v) =: {h_P(\eta)}
	\end{align*}
	is an analytic family of type (A), by the analyticity of $\disp$, cf. \cite{Kato.1980}.
	Hence, by \cref{prop:Moller,prop:uniqueness}, there exist analytic maps $e_{P}$ and $\phi_{P}$ defined in a neighborhood of zero such that $\phi_{P}(0)=\psi_P$ and $(h_{P}(\eta)-e_{P}(\eta))\phi_{P}(\eta)=0$. We remark that this directly implies $e_P(0)=E(P)$.
	Taking the first partial derivative w.r.t. $\eta_i$ and taking the scalar product with $\psi_P$ now directly yields \subcref{lem:perturb.1}. Further, \subcref{lem:perturb.1} implies that $(\partial_i \disp (P-\Pf) - \partial_i E(P))\psi_P\in \ker(H(P)-E(P))^\perp$. Hence, we can solve for the derivative of the eigenvector and obtain \[\partial_i \phi_{P}(0) = -(H(P)-E(P))^{-1}(\partial_i \disp (P-\Pf) - \partial_i E(P))\psi_P + \alpha \psi_P \qquad \mbox{for some}\ \alpha\in\IC.\]  Again differentiating w.r.t. $\eta_i$, taking the scalar product with $\psi_P$ and inserting above identity for $\partial_i \phi_{P}(0)$, we obtain
	\begin{align*}\partial_i^2 E(P) = & \Braket{\psi_P,\partial_i^2 \disp (P-\Pf)\psi_P}\\& -2\|(H(P)-E(P))^{-1/2}\left(\partial_i\disp(P-\Pf)-\partial_i E(P))\psi_P\right)\|^2 \end{align*}
	Hence, the statement follows using \cref{hyp:Omega}.
\end{proof}

\subsubsection{Infrared Bounds}\label{sec:conv.infrared}
\begingroup\renewcommand{\Rnp}[1]{R(P,#1)}
In this \lcnamecref{sec:conv.infrared}, we derive bounds on the infrared behavior of the resolvents
\begin{equation}
	\Rnp{k} = (H(P-k) - E(P) + \omega(k))^{-1}.
\end{equation}
We remark that these are well-defined for any $P\in\cI_0(\disp,\omega,v)$, by the definition \cref{eq:cI0}.

We start out with the following simple statement, which will be useful later on.
\begin{lem}\label{lem:I0char}
	If $\mw>0$, then $\displaystyle\inf_{k\in\IR^d}\frac{E(P-k)-E(P)+\omega(k)}{\omega(k)}>0 $ if and only if $P\in\cI_0(\disp,\omega,v)$.
\end{lem}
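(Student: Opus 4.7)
The plan is to prove the two directions separately, exploiting the definition $\cI_0(\disp,\omega,v) = \{P : E(P) < \inf_k [E(P-k)+\omega(k)]\}$, which is equivalent to the positivity of
\[ \delta(P) \coloneqq \inf_{k\in\IR^d}\bigl[E(P-k)-E(P)+\omega(k)\bigr]. \]
The ratio $\rho(k) \coloneqq [E(P-k)-E(P)+\omega(k)]/\omega(k)$ differs from this numerator only by the factor $\omega(k)$, so only the behavior of $\omega$ at its small and large values can create a discrepancy between the positivity of $\delta(P)$ and the positivity of $\inf_k\rho(k)$.

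The easy direction is ($\Leftarrow$): if $c \coloneqq \inf_k \rho(k) > 0$, then using $\omega(k)\ge \mw > 0$ (which is where the massive assumption enters), one gets $E(P-k)-E(P)+\omega(k) \ge c\mw > 0$ uniformly in $k$, hence $P\in\cI_0(\disp,\omega,v)$.

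For the converse, assume $P\in\cI_0(\disp,\omega,v)$, so $\delta(P)>0$. I would split the infimum into the two regions $\{\omega(k)\le R\}$ and $\{\omega(k) > R\}$ for a threshold $R>0$ to be chosen. On the first region, the numerator is at least $\delta(P)$ while the denominator is at most $R$, so $\rho(k) \ge \delta(P)/R$. For the second region, I rewrite $\rho(k) = 1 + [E(P-k)-E(P)]/\omega(k)$. The key input here is that $E$ is bounded below on $\IR^d$: by \cref{lem:massshellcont} it is continuous, and by \cref{lem:massshelldiv} it tends to $+\infty$ as $|P|\to\infty$, so $c_0 \coloneqq \inf_{Q\in\IR^d} E(Q) > -\infty$ is attained. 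Therefore $\rho(k) \ge 1 + (c_0-E(P))/\omega(k) \ge 1 - (E(P)-c_0)/R$ on $\{\omega(k)>R\}$.

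Choosing $R \coloneqq 2\max(1, E(P)-c_0)$ (so that $1 - (E(P)-c_0)/R \ge 1/2$), one obtains
\[ \inf_{k\in\IR^d}\rho(k) \ge \min\!\Bigl(\tfrac{\delta(P)}{R},\ \tfrac12\Bigr) > 0, \]
completing the proof. No real obstacle arises; the only subtle point is to realise that the behavior of $\rho$ at large $\omega(k)$ is controlled by a global lower bound on the mass shell, which is a cheap consequence of the two basic properties of $E$ stated in \cref{lem:Esimpleprops}.
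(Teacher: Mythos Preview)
Your argument is correct (modulo a harmless swap in the labels: what you call ``$(\Leftarrow)$'' is really the implication $\inf_k\rho(k)>0\Rightarrow P\in\cI_0$, i.e.\ the $(\Rightarrow)$ direction of the stated equivalence, and vice versa). The mathematics in both directions is sound, and your use of \cref{lem:Esimpleprops} to extract a global lower bound $c_0$ for $E$ is exactly the right input.

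The paper takes a slightly different route. Instead of splitting into $\{\omega\le R\}$ and $\{\omega>R\}$ and producing explicit lower bounds, it observes (using the same ingredients: continuity of $E$ and $\omega$, together with $E(P-k)\to\infty$ and $\omega(k)\to\infty$ as $|k|\to\infty$) that the infimum of $\rho$ is actually \emph{attained} at some $k_0\in\IR^d$. Once this is known, both implications are immediate from the definition of $\cI_0$: if $P\in\cI_0$ then the numerator at $k_0$ is positive and hence $\inf_k\rho(k)=\rho(k_0)>0$; conversely if $\rho(k_0)>0$ then the numerator is bounded below by $\rho(k_0)\mw>0$. The paper's argument is marginally cleaner and yields the stronger intermediate fact that the infimum is a minimum (which is implicitly used later when defining $\DP$), whereas your splitting argument is more quantitative and would give an explicit lower bound if one ever needed it.
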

\begin{proof}
	By \cref{lem:Esimpleprops,hyp:omega}, there exists $k_0\in\IR^d$ such that \[\inf_{k\in\IR^d}\frac{E(P-k)-E(P)+\omega(k)}{\omega(k)}= \frac{E(P-k_0)-E(P)+\omega(k_0)}{\omega(k_0)}.\] Hence, the statement follows from the definition \cref{eq:cI0}.
\end{proof}
\Cref{lem:I0char} allows us to define
\begin{align}\label{def:DP}
	\DP = \sup_{k\in\IR^d}\left(\frac{\omega(k)}{E(P-k)-E(P)+\omega(k)}\right)\in(0,\infty)
\end{align}
for $P\in\cI_0(\disp,\omega,v)$.
For the remainder of this \lcnamecref{sec:conv.infrared}, we use the more compact notation $\Delta_P=\DP$.
\begingroup\renewcommand{\DP}{\Delta_P}
The next simple statement is an a priori bound, which we frequently use in the following and which directly entails existence of ground states in the infrared-regular case, cf. \cref{cor:irregulargs}.
\begin{lem}\label{lem:apriori}
	Assume $\mw>0$ and $P\in\cI_0(\disp,\omega,v)$. Then
	\[\| \Rnp k\| \le \frac{\DP}{\omega(k)}  \qquad \mbox{for all}\ k\in\IR^d.\]
\end{lem}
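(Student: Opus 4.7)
The plan is to argue directly from the spectral theorem together with the definition of $\DP$. Since $H(P-k)$ is selfadjoint and lower-semibounded with $\inf\sigma(H(P-k))=E(P-k)$, the functional calculus gives
\[
\|\Rnp{k}\| = \|(H(P-k)-E(P)+\omega(k))^{-1}\| = \frac{1}{\operatorname{dist}(E(P)-\omega(k),\sigma(H(P-k)))},
\]
provided the shifted eigenvalue $E(P)-\omega(k)$ lies strictly below $\sigma(H(P-k))$. The hypothesis $P\in\cI_0(\disp,\omega,v)$ is exactly the statement that $E(P)<E(P-k)+\omega(k)$ for every $k\in\IR^d$ (see the definition~\eqref{eq:cI0}), which makes $E(P-k)-E(P)+\omega(k)>0$ and justifies the identification of the distance with $E(P-k)-E(P)+\omega(k)$.

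Once this is in place, the conclusion is immediate from the definition of $\DP$ in \cref{def:DP}: for every $k\in\IR^d$, taking the supremum in the definition yields
\[
\frac{\omega(k)}{E(P-k)-E(P)+\omega(k)} \le \DP,
\]
and rearranging gives $1/(E(P-k)-E(P)+\omega(k)) \le \DP/\omega(k)$, which is exactly the asserted bound on $\|\Rnp{k}\|$. One should be slightly careful about the limiting case $\omega(k)=0$, but \cref{hyp:omega} ensures this only occurs on a Lebesgue null set, and in any case the bound $\DP/\omega(k)$ is understood as $+\infty$ there, so the inequality is trivial.

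The only conceptual ingredient beyond the spectral theorem is the observation that $\DP$ is indeed finite and positive for $P\in\cI_0(\disp,\omega,v)$, but this has just been recorded in \cref{lem:I0char}, so it can be invoked without further argument. There is no real obstacle here; the statement is essentially a reformulation of the definition of $\DP$ and the spectral gap encoded in $\cI_0$.
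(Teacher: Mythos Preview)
Your proof is correct and follows exactly the approach indicated in the paper, which simply states that the bound ``directly follows from \cref{def:DP} and the spectral theorem.'' You have merely spelled out the details of that one-line argument, including the verification via \cref{eq:cI0} that $E(P)-\omega(k)$ lies below $\sigma(H(P-k))$ so the distance formula applies.
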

\begin{proof}
	The statement directly follows from \cref{def:DP} and the spectral theorem.
\end{proof}
Recalling the definition \cref{def:fQ}, the remaining discussion in this \lcnamecref{sec:conv.infrared} is devoted to studying the (small) $k$-dependence of $v(k)\Rnp k - f_{\nabla E(P),n}(k)$ on the eigenspace corresponding to $E_n(P)$. We start out with the following simple operator bound.
\begin{lem}\label{lem:opdiff}
	Assume $\mw>0$ and let $P\in\cI_0(\disp,\omega,v)$. Then, for all $k\in\IR^d$,
	\[ \left\|\left(H(P)-E(P)\right)\Rnp k\right\| \le 1+\CT{1,2}|k| + \DP(\CT{1,1} + \CT2 |k|)|k|\omega(k)^{-1}.\]
\end{lem}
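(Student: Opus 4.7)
The plan is to apply the second resolvent identity, Taylor-expand $\Theta$, and control each piece via Hypothesis~\cref{hyp:Omega} and~\cref{lem:apriori}. Since $\Rnp k^{-1} = H(P-k)-E(P)+\omega(k)$, I obtain the algebraic identity
\begin{equation*}
(H(P)-E(P))\Rnp k = I + \bigl[\Theta(P-P_f)-\Theta(P-k-P_f)-\omega(k)\bigr]\Rnp k,
\end{equation*}
from which the ``$1$'' in the target arises. Taylor-expanding $\Theta$ to second order around $P-k-P_f$, via the smoothness in~\cref{hyp:Omega} and the spectral calculus on the commuting components of $P_f$, gives
\begin{equation*}
\Theta(P-P_f)-\Theta(P-k-P_f) = k\cdot\nabla\Theta(P-k-P_f) + R_2(k),\qquad \|R_2(k)\|\le\CT 2|k|^2,
\end{equation*}
with the remainder bound from the uniform Hessian estimate.

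For the linear piece, I lift the pointwise bound $|\nabla\Theta(p)|\le\CT{1,1}+\CT{1,2}\Theta(p)$ to the operator inequality $(k\cdot\nabla\Theta(P-k-P_f))^2 \le |k|^2(\CT{1,1}+\CT{1,2}\Theta(P-k-P_f))^2$ via spectral calculus on $P_f$, yielding
\begin{equation*}
\|k\cdot\nabla\Theta(P-k-P_f)\Rnp k\| \le \CT{1,1}|k|\,\|\Rnp k\| + \CT{1,2}|k|\,\|\Theta(P-k-P_f)\Rnp k\|.
\end{equation*}
Combined with the a priori estimate $\|\Rnp k\|\le\DP/\omega(k)$ from~\cref{lem:apriori} and the Hessian bound $\|R_2(k)\Rnp k\|\le \CT 2|k|^2 \DP/\omega(k)$, the $\CT{1,1}$ summand together with the Hessian remainder already produce the $\DP(\CT{1,1}+\CT 2|k|)|k|/\omega(k)$ piece of the target.

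The main technical obstacle is to treat the remaining $\CT{1,2}|k|\|\Theta(P-k-P_f)\Rnp k\|$ simultaneously with the leftover $-\omega(k)\Rnp k$ from the first display so as to produce only the clean $\CT{1,2}|k|$ contribution (without a standalone $\DP$ factor). For this I would exploit the identity
\begin{equation*}
\Theta(P-k-P_f)\Rnp k = I - \bigl(\Hf+\ph(v)-E(P)+\omega(k)\bigr)\Rnp k,
\end{equation*}
coming from $\Rnp k^{-1} = \Theta(P-k-P_f)+\Hf+\ph(v)-E(P)+\omega(k)$, together with the standard infinitesimal $\Hf$-bound for $\ph(v)$ from \cref{th:phi estimate}, so that the $-\omega(k)\Rnp k$ term is absorbed into the $\Theta$-contribution rather than estimated separately. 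The delicate point is that a naive triangle-inequality split into $\|I-\omega(k)\Rnp k\|$ and $\|[\Theta(P-P_f)-\Theta(P-k-P_f)]\Rnp k\|$ would introduce a spurious standalone $\DP$ factor (since $\|I-\omega(k)\Rnp k\|$ is only bounded by $\DP$ in general, via functional calculus of $H(P-k)-E(P)$); the bookkeeping has to be done jointly in order to keep the coefficient of $\CT{1,2}|k|$ equal to one.
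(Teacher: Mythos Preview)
Your decomposition, Taylor expansion, and use of \cref{lem:apriori} mirror the paper's proof exactly. The difference lies only in how the $\CT{1,2}$-contribution and the ``$1$'' are obtained: the paper does \emph{not} split off $I - \omega(k)R(P,k)$ and then try to reabsorb $-\omega(k)R(P,k)$ later. Instead, after applying \cref{lem:dispderbound} to $|\nabla\Theta(P-k-P_f)|$, it invokes directly the operator bound $\Theta(P-k-P_f) \le H(P-k)-E(P)+\omega(k) = R(P,k)^{-1}$. With this, both the first term $(H(P-k)-E(P))$ and the $\CT{1,2}|k|\,\Theta(P-k-P_f)$-piece are simultaneously dominated by $(1+\CT{1,2}|k|)\,R(P,k)^{-1}$, so after multiplying by $R(P,k)$ one gets the clean $1+\CT{1,2}|k|$ in a single step, with no leftover $-\omega(k)R(P,k)$ to manage.

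Your alternative route---rewriting $\Theta(P-k-P_f)R(P,k) = I - (\dG(\omega)+\varphi(v)-E(P)+\omega(k))R(P,k)$ and hoping to absorb the stray $-\omega(k)R(P,k)$---does not close the argument as written. You would then need a bound of order $1$ on $\|(\dG(\omega)+\varphi(v)-E(P))R(P,k)\|$, but $\dG(\omega)+\varphi(v)-E(P)$ is not sign-definite, $R(P,k)$ is the resolvent of $H(P-k)$ rather than of $\dG(\omega)$, and the infinitesimal bound from \cref{th:phi estimate} controls $\varphi(v)$ only relative to $\dG(\omega)$, not relative to $R(P,k)^{-1}$. So this identity just shifts the difficulty from $\Theta R$ to $(\dG(\omega)+\varphi(v))R$ without resolving it. The paper's shortcut---comparing $\Theta(P-k-P_f)$ directly with $R(P,k)^{-1}$---is precisely what avoids the joint bookkeeping you flag as delicate.
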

\begin{proof}
	By \cref{lem:dispdiff}, we have
	\begin{align*}
		H(P) - E(P) & = H(P-k) - E(P) + k\cdot \nabla \disp(P-k-\Pf) + D_P(k),
	\end{align*}
	on $\cD(H_n(0))$, where $D_P(k)\in\cB(\FS)$ with $\|D_P(k)\|\le \CT{2}|k|^2$.
	To estimate the gradient term on the right hand side, we first use \cref{lem:dispderbound}) and afterwards the trivial bound $\disp (P-k-\Pf) \le H(P-k)-E(P)+\omega(k)$. Multiplying with $\Rnp{k}$ from the right, taking the operator norm and using the a priori bound \cref{lem:apriori} yields the statement.
\end{proof}
\begin{lem}\label{lem:perturbation estimate}
	Assume $\mw>0$, let $P\in \cI_0(\disp,\omega,v)$ and let  $\psi_P$ be a normalized ground state of $H(P)$.
	Then, for all $k\in\IR^d$,
	\begin{align*}
		&\left\|\Rnp k\left(\partial_i \disp(P-\Pf)-\partial_i E(P)\right)\psi_P\right\|
		\\& \qquad 
		\le |\CT2-\tfrac12\partial_i^2 E(P)|^{1/2} \\&\qquad\qquad\times\left(1+\CT{1,2}|k| + \DP(\CT{1,1} + \CT2 |k|)|k|\omega(k)^{-1}\right)^{1/2}\DP^{1/2}\omega^{-1/2}(k).
	\end{align*}
\end{lem}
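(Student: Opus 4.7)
My plan is to reduce the desired bound to the combination of two already-established estimates — the second-derivative estimate in \cref{lem:perturb.2} and the operator bound in \cref{lem:opdiff} — joined by a single application of the $C^*$-identity $\|A\|^2=\|A^*A\|$. Set $\phi\coloneqq(\partial_i\disp(P-\Pf)-\partial_i E(P))\psi_P$. As already used in the proof of \cref{lem:perturb}, \subcref{lem:perturb.1} yields $\langle\psi_P,\phi\rangle=0$, and non-degeneracy of the ground state (via \cref{prop:uniqueness}, whose hypothesis \eqref{eq:nondegassump} is automatic for $P\in\cI_0(\disp,\omega,v)$) together with the spectral gap of $H(P)$ in the massive case ensures that $(H(P)-E(P))^{-1/2}\phi$ is a well-defined vector in $\FS$ whose squared norm is, by \cref{lem:perturb.2}, bounded above by $|\CT2-\tfrac12\partial_i^2 E(P)|$.

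Inserting $(H(P)-E(P))^{1/2}(H(P)-E(P))^{-1/2}$ and using submultiplicativity of the operator norm then gives
\[
\|\Rnp{k}\phi\|\le\bigl\|\Rnp{k}(H(P)-E(P))^{1/2}\bigr\|\,\bigl\|(H(P)-E(P))^{-1/2}\phi\bigr\|.
\]
For the operator factor I apply $\|A\|^2=\|A^*A\|$ to $A=(H(P)-E(P))^{1/2}\Rnp{k}$, whose adjoint is $\Rnp{k}(H(P)-E(P))^{1/2}$ by selfadjointness of both factors, yielding
\[
\bigl\|\Rnp{k}(H(P)-E(P))^{1/2}\bigr\|^2=\bigl\|\Rnp{k}(H(P)-E(P))\Rnp{k}\bigr\|\le\|\Rnp{k}\|\cdot\bigl\|(H(P)-E(P))\Rnp{k}\bigr\|.
\]
Now the a priori bound $\|\Rnp{k}\|\le\DP/\omega(k)$ from \cref{lem:apriori} and the estimate $\|(H(P)-E(P))\Rnp{k}\|\le 1+\CT{1,2}|k|+\DP(\CT{1,1}+\CT2|k|)|k|\omega(k)^{-1}$ from \cref{lem:opdiff} produce exactly the square of the claimed right-hand side after multiplying by the bound on $\|(H(P)-E(P))^{-1/2}\phi\|$ and taking a square root.

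There is no genuine obstacle here; the whole argument is a short bookkeeping exercise combining the two input bounds via the $C^*$-identity. The only point requiring care is the well-definedness of $(H(P)-E(P))^{-1/2}\phi$, which amounts to the orthogonality of $\phi$ to the unique ground state — a fact that has effectively been established already in the course of proving \cref{lem:perturb}.
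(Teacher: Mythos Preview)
Your proof is correct and follows essentially the same route as the paper: insert $(H(P)-E(P))^{1/2}(H(P)-E(P))^{-1/2}$ and combine \cref{lem:perturb.2}, \cref{lem:opdiff}, and \cref{lem:apriori}. The only cosmetic difference is that the paper first splits $\Rnp{k}=\Rnp{k}^{1/2}\Rnp{k}^{1/2}$ and bounds $\|\Rnp{k}^{1/2}\|\cdot\|\Rnp{k}^{1/2}(H(P)-E(P))^{1/2}\|$, whereas you keep $\Rnp{k}$ whole and bound $\|\Rnp{k}(H(P)-E(P))\Rnp{k}\|\le\|\Rnp{k}\|\,\|(H(P)-E(P))\Rnp{k}\|$; both factorizations yield the identical final estimate.
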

\begin{proof}
	The product inequality yields
	\begin{align*}
		&\left\|\Rnp k\left(\partial_i \disp(P-\Pf)-\partial_i E(P)\right)\psi_P\right\|
		\\ & \
		\le \|\Rnp{k}^{1/2}\| \|\Rnp{k}^{1/2}(H(P)-E(P))^{1/2}\|\\&\qquad\qquad\times\|(H(P)-E(P))^{-1/2}\left(\partial_i \disp(P-\Pf)-\partial_i E(P)\right)\psi_P\|.
	\end{align*}
	Hence, combining \cref{lem:opdiff,lem:perturb.2,lem:apriori} proves the statement.
\end{proof}

\begin{lem}\label{lem:resapprox}
	Assume $\mw>0$. Let $P\in\cI_0(\disp,\omega,v)$, let $\psi_P$ be a normalized ground state of $H(P)$ and
	let \begin{equation}\label{def:Cw}\Cw = \sup_{k\in\IR^d}\frac{|k|}{\omega(k)}\in(0,\infty].\end{equation}
	If $|\nabla E(P)|<\Cw^{-1}$,
	then there exists a constant $C>0$, solely depending on the choice of $\disp$, such that for all $k\in\IR^d$
	\begin{align*}
		&\Big\|\big(\Rnp k - \frac{1}{\omega(k)-k\cdot \nabla E(P)}\big) \psi_{P}\Big\| \\& \quad\le C \sum_{i=1}^d \left(1+|\CT2-\tfrac12\partial_i^2 E(P)|^{1/2}\right)\frac{(\Cw\vee \Cw^2)(1\vee \DP)}{1-\Cw|\nabla E(P)|}\left(1+\omega(k)^{-1/2}\right).
	\end{align*}
	We emphasize that $C$ is independent of $\omega$, $v$ and $P$.
\end{lem}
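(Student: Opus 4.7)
The plan is to reduce the estimate to the two bounds already established in \cref{lem:apriori} and \cref{lem:perturbation estimate}, via the second resolvent identity combined with the eigenvalue equation $H(P)\psi_P = E(P)\psi_P$. Setting $A(k) = H(P-k)-E(P)+\omega(k)$ and $B(k) = \omega(k)-k\cdot\nabla E(P)\in\IR$ and noting that $B(k)$ is scalar, we have
\[\Bigl(\Rnp k - \tfrac{1}{B(k)}\Bigr)\psi_P = \tfrac{1}{B(k)}\,\Rnp k\,(B(k)-A(k))\psi_P.\]
Using the expansion $H(P-k) = H(P) - k\cdot\nabla\disp(P-\Pf) + \wt D_P(k)$ with $\|\wt D_P(k)\|\le\CT2|k|^2$ obtained from \cref{lem:dispdiff} together with the Hessian bound in \cref{hyp:Omega}, and applying the eigenvalue equation to kill the $H(P)-E(P)$ contribution, the numerator simplifies to
\[(B(k)-A(k))\psi_P = k\cdot(\nabla\disp(P-\Pf)-\nabla E(P))\psi_P - \wt D_P(k)\psi_P.\]
The two summands are exactly those controlled by \cref{lem:perturbation estimate} and \cref{lem:apriori}, respectively.

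I would next bound $|B(k)|\ge \omega(k)(1-\Cw|\nabla E(P)|)$ using the defining inequality $|k|\le\Cw\omega(k)$; this is strictly positive by hypothesis. The $\wt D_P$-contribution is immediate: combining $\|\Rnp k\|\le\DP/\omega(k)$ from \cref{lem:apriori} with $\|\wt D_P(k)\|\le\CT2|k|^2$ and $|k|/\omega(k)\le\Cw$ gives a contribution uniform in $k$ of size at most $\DP\CT2\Cw^2/(1-\Cw|\nabla E(P)|)$, which is absorbed into the leading $1$ of the prefactor $(1+|\CT2-\tfrac12\partial_i^2 E(P)|^{1/2})$ on the right-hand side of the claim. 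For the linear-in-$k$ term I would split componentwise and invoke \cref{lem:perturbation estimate} directly; dividing by $|B(k)|$ produces an overall prefactor $|k_i|/\omega(k)^{3/2}$, which is bounded by $\Cw\omega(k)^{-1/2}$ via the same $|k|\le \Cw\omega(k)$ estimate.

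The remaining work is to handle the factor $(1+\CT{1,2}|k|+\DP(\CT{1,1}+\CT2|k|)|k|\omega(k)^{-1})^{1/2}$ appearing in \cref{lem:perturbation estimate}, which is not obviously bounded in $k$. Repeatedly substituting $|k|\le \Cw\omega(k)$ bounds the quantity under the square root by $(1+\DP\CT{1,1}\Cw) + (\CT{1,2}\Cw+\DP\CT2\Cw^2)\omega(k)$, and applying $\sqrt{a+b}\le\sqrt a+\sqrt b$ separates a uniform contribution from an $\omega(k)^{1/2}$-contribution. The latter, multiplied by the $\Cw\omega(k)^{-1/2}$ prefactor, yields a uniform-in-$k$ summand, while the former yields an $\omega(k)^{-1/2}$ summand. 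Collecting all resulting powers of $\Cw$ (ranging from $\Cw$ to $\Cw^2$) into $\Cw\vee\Cw^2$, and all powers of $\DP$ (ranging from $\DP^{1/2}$ to $\DP$) into $1\vee\DP$, produces the stated bound with a universal constant $C$ depending only on the structural constants $\CT{1,1},\CT{1,2},\CT2$ from \cref{hyp:Omega}, hence only on $\disp$.

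The one mildly delicate point is the last step: the $|k|$-dependence inside the square root of \cref{lem:perturbation estimate} naively threatens to produce an $\omega(k)^{1/2}$-growth when combined with the $|k|/\omega(k)^{3/2}$ prefactor, which would spoil any uniform control. The resolution is that this apparent growth is always capped by the inequality $|k|\le \Cw\omega(k)$, but only at the cost of additional $\Cw$-factors. This is precisely why the estimate takes the form $1+\omega(k)^{-1/2}$ rather than the cleaner $\omega(k)^{-1/2}$ one would naively expect for small $k$, and why the constant must depend on $\Cw\vee\Cw^2$ rather than on $\Cw$ alone.
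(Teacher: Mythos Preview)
Your proof is correct and follows essentially the same approach as the paper. The only cosmetic difference is that the paper routes the computation through the auxiliary resolvent $R^{(0)}(P,k)=(H(P)-E(P)+\omega(k))^{-1}$ and the identity $R^{(0)}(P,k)\psi_P=\omega(k)^{-1}\psi_P$ before rearranging, whereas you apply the scalar resolvent identity $(A^{-1}-B^{-1})\psi_P=B^{-1}A^{-1}(B-A)\psi_P$ directly and invoke the eigenvalue equation; both arrive at the same two-term decomposition, and your bookkeeping of the $(1+\CT{1,2}|k|+\ldots)^{1/2}$ factor is in fact more explicit than the paper's.
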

\begin{proof}
	\newcommand\Rznp[1]{R^{(0)}(P,#1)}
	We write
	\[\Rznp k \coloneqq (H(P)-E(P)+\omega(k))^{-1}.\]
	Using the resolvent identity and \cref{lem:dispdiff}, there exists $D_P(k)\in\cB(\FS)$ for all $k\in\IR^d$ with $\|D_P(k)\|\le \CT{2}|k|^2$ such that
	\begin{align*}
		\Rnp k = & \Rznp k + \Rnp k \left[k\cdot \nabla \Theta(P-\Pf)\right] \Rznp k \\&+  \Rnp k D_P(k) \Rznp k.
	\end{align*}
	Applying this to $\omega(k)\psi_P$ and using $\Rznp k \psi_P = \omega(k)^{-1}$ yields
	\begin{align*}
		\omega(k)\Rnp k \psi_P  =  & \psi_P + (k \cdot \nabla E(P)) \Rnp k \psi_P \\ & + \Rnp k \left(k\cdot (\nabla \disp(P-\Pf)-\nabla E(P))\right)\psi_P \\ & + \Rnp k D_P(k) \psi_P.
	\end{align*}
	Since $\omega(k)-k\cdot \nabla E(P)>0$ for all $k\in\IR^d$, by assumption, we can rearrange to obtain
	\begin{align}
		\big(\Rnp k &- \frac{1}{\omega(k)-k\cdot \nabla E(P)}\big) \psi_{P} \nonumber\\
		= &\frac{1}{\omega(k)-k\cdot \nabla E(P)} \Rnp k \left[ k \cdot \left(\nabla \disp(P-\Pf) - \nabla E(P)\right)\right] \psi_P \label{eq:term1}\\
		& + \frac{1}{\omega(k)-k\cdot \nabla E(P)} \Rnp k D_P(k) \psi_P.\label{eq:term2}
	\end{align}
	Combining \cref{lem:perturbation estimate} with the estimate
	\begin{align}\label{eq:fctbound}
		\frac{|k|}{\omega(k)-k\cdot \nabla E(P)} \le \frac{\Cw}{1-\Cw |\nabla E(P)|},
	\end{align}
	we find
	\[ \|\eqref{eq:term1}\| \le C \sum_{i=1}^d |\CT2-\tfrac12\partial_i^2 E(P)|^{1/2}\frac{(\Cw\vee\Cw^2)(\DP^{1/2}\vee\DP)}{1-\Cw|\nabla E(P)|}(1+\omega(k)^{-1/2} ).\]
	Further, by \cref{lem:apriori,eq:fctbound}, we find
	\[  \|\eqref{eq:term2}\| \le C\frac{\Cw^2\DP}{1-\Cw |\nabla E(P)|}.\]
	Combining these estimates proves the statement.
\end{proof}
\endgroup\endgroup

\subsection{Approximation by Massive Models}\label{sec:conv.approx}
Let us now move to the case $\mw=0$. 
The strategy is as follows: We approximate $\omega$ by $(\omega_n)_{n\in\IN}$ and combine the compactness theorem from \cref{sec:compact} with the bounds from \cref{sec:conv.infrared} to deduce compactness of the set of ground states of $H_n(P)$ and $W(\fpn)H_n(P)W(\fpn)^*$ in the infrared-regular and infrared-critical case, respectively.

The connection between the resolvents
\begin{equation}
	\Rnp{k} = (H_n(P-k) - E_n(P) + \omega_n(k))^{-1}.
\end{equation}
estimated in \cref{sec:conv.infrared} and the pointwise annihilators acting on ground states is given by the so-called pull-through formula.
Formally, it directly follows from the canonical commutation relations. Rigorous proofs can be found throughout the literature, see e.g. \cite{Frohlich.1973,BachFroehlichSigal.1998b,Gerard.2000,Dam.2018}.
\begin{lem}\label{lem:pullthrough}
	Fix $n\in\IN$ and $P\in\cI_0(\disp,\omega_n,v)$.  
	If $\ppn\in\chr{\{E_n(P)\}}(H_n(P))\FS$,
	then $a_k\ppn\in \FS$ for almost all $k\in\IR^d$ and in this case
	\begin{equation}\label{eq:standardpt}
		a_k\ppn = -v(k)\Rnp k\ppn.
	\end{equation}
	Further, $a_kW(\fpn)\ppn \in \FS$ for almost all $k\in\IR^d$ and in this case
	\begin{equation}
		\label{eq:pullthrough}
		a_kW(\fpn)\ppn = (-v(k)\Rnp k+\fpn(k))\ppn.
	\end{equation}
\end{lem}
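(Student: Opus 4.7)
The plan is to establish both formulas from the canonical commutation relations, first deriving the classical pull-through \eqref{eq:standardpt} and then using the Weyl transformation law to deduce \eqref{eq:pullthrough}.

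For \eqref{eq:standardpt}, I would formally commute the pointwise annihilator past each building block of $H_n(P)$, obtaining
\[
a_k\,\disp(P-\Pf) = \disp(P-k-\Pf)\,a_k, \quad a_k\,\dG(\omega_n) = (\dG(\omega_n)+\omega_n(k))\,a_k, \quad a_k\,\ph(v) = \ph(v)\,a_k + v(k),
\]
which sum to $a_k H_n(P) = (H_n(P-k)+\omega_n(k))\,a_k + v(k)$. Applying this to the ground-state equation $H_n(P)\ppn = E_n(P)\ppn$ and rearranging yields
\[
(H_n(P-k) - E_n(P) + \omega_n(k))\,a_k\ppn = -v(k)\ppn.
\]
Since $P\in\cI_0(\disp,\omega_n,v)$ and $\mwn>0$, \cref{lem:apriori} guarantees that $H_n(P-k)-E_n(P)+\omega_n(k)$ has the bounded inverse $\Rnp{k}$, so inversion produces \eqref{eq:standardpt}. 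The pointwise membership $a_k\ppn\in\FS$ for almost every $k$ is automatic from $\ppn\in\cD(H_n(P))\subset\cD(\dG(\omega_n))$ together with the identity $\int \|a_k\psi\|^2\,dk = \braket{\psi,\dG(\Id)\psi}$.

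For \eqref{eq:pullthrough}, I would employ the Weyl commutation identity $a_k W(\fpn) = W(\fpn)(a_k + \fpn(k))$, which I would verify directly on the total set of exponential vectors by combining the defining relation $W(f)\epsilon(g) = \e^{-\|f\|^2/2 - \braket{f,g}}\epsilon(f+g)$ with the coherent-state identity $a_k\epsilon(h) = h(k)\epsilon(h)$: both sides send $\epsilon(g)$ to $(f+g)(k)\,\e^{-\|f\|^2/2-\braket{f,g}}\epsilon(f+g)$. Combining this with \eqref{eq:standardpt} applied to $\ppn$ then produces the claimed expression for $a_k W(\fpn)\ppn$.

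The principal technical obstacle is promoting the formal operator identities above to rigorous pointwise-in-$k$ statements in $L^2(\IR^d;\FS)$. The cleanest route is to verify the integrated forms on a dense core of finite-particle vectors with smooth compactly supported components, where all manipulations are bona fide operator identities, and then extract the pointwise statements by testing against arbitrary $f\in L^2(\IR^d)$ using standard density arguments. Because $\ppn\in\cD(H_n(P))$ lies in the required domain and $\Rnp{k}$ is uniformly bounded in $k$ by \cref{lem:apriori}, the passage from the core to $\ppn$ is routine; detailed executions of this standard procedure can be found in each of the references cited immediately before the statement, so I would ultimately just invoke them.
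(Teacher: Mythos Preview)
Your proposal is correct and follows essentially the same route as the paper: the paper's own proof simply cites the literature for the classical pull-through \eqref{eq:standardpt} and then invokes the Weyl transformation law for annihilators to obtain \eqref{eq:pullthrough}, which is precisely your plan with the formal computations spelled out before deferring to the same references. One small point: your Weyl identity $a_kW(\fpn)=W(\fpn)(a_k+\fpn(k))$ actually produces $W(\fpn)\big(-v(k)\Rnp{k}+\fpn(k)\big)\ppn$, i.e., with an extra unitary factor $W(\fpn)$ compared to the right-hand side of \eqref{eq:pullthrough} as stated; this appears to be a harmless omission in the lemma (only the norms $\|a_kW(\fpn)\ppn\|$ are used afterwards), so do not let it trouble you.
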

\begin{proof}
	Proofs for the classical pull-through formula \cref{eq:standardpt} can for example be found in \cite{Dam.2018,Hinrichs.2022}. Combined with the transformation behavior of Weyl operators, cf. \cite[Lemma~D.14]{DamMoller.2018a}, \cref{eq:pullthrough} follows.
\end{proof}
The main result of this \lcnamecref{sec:conv.approx} now is the following statement, which treats both the infrared-regular case \subcref{thm:compactness.infraredreg} and the infrared-critical case \subcref{thm:compactness.infraredcrit}.
\begin{thm}\label{thm:compactness}
	Assume
	\begin{equation}\label{eq:compactassump}
		P\in\bigcap_{n\in\IN} \cI_0(\disp,\omega_n,v) \quad \mbox{and}\quad   \sup_{n\in\IN}\DPn < \infty.
	\end{equation}
	\begin{enumthm}
		\item\label{thm:compactness.infraredreg} If $\omega^{-1}v\in L^2(\IR^d)$, then the set \begin{equation}\label{eq:infraredregset}
			\bigcup_{n\in\IN}\big\{\psi\in \cD(H_n(P))\big| \|\psi\|\le 1, H_n(P)\psi=E_n(P)\psi \big\}
		\end{equation} is relatively compact.
		\item\label{thm:compactness.infraredcrit} Let $\Cw$ be defined as in \cref{def:Cw} and assume  that we have $\sup_{n\in\IN} \Cwn<\infty$, $\sup_{n\in\IN}\Cwn|\nabla E_n(P)|<1$ as well as $\sup\{|\partial_i^2 E_n(P)|:n\in\IN, i=1,\ldots,d\}<\infty $.
		Then, using the notation from \cref{def:fQ}, $\fpn \coloneqq f_{\nabla E_n(P),n}\in L^2(\IR^d)$ for all $n\in\IN$ and  the set \begin{equation}\label{eq:infraredcritset}
			\bigcup_{n\in\IN}\big\{W(\fpn)\psi \big| \psi\in\cD(H_n(P)), \|\psi\|\le 1, H_n(P)\psi=E_n(P)\psi \big\}\end{equation} is relatively compact.
	\end{enumthm}
\end{thm}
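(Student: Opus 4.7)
The plan is to apply the abstract compactness criterion \cref{comfock} to each of the two sets with the sequence $D_n = \{k\in\IR^d : \omega(k) < 1/n\}$, which satisfies \cref{hyp:small} by the continuity and almost-everywhere positivity of $\omega$ from \cref{hyp:omega}. The pull-through formula \cref{lem:pullthrough} reduces the two conditions of \cref{comfock}---a uniform $L^2$-majorant of $k\mapsto\|a_k\psi\|$, and uniform $L^2$-continuity of $k\mapsto a_k\psi$ on $D_n^\sfc$---to statements about the resolvent family $R_m(P,k) = (H_m(P-k) - E_m(P) + \omega_m(k))^{-1}$, where $m$ is the index of the ground state $\psi_{P,m}$ representing the element of the union under consideration.

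\textbf{Verification of condition (a).} In the infrared-regular case, \cref{eq:standardpt} combined with \cref{lem:apriori}, the monotonicity $\omega_m\ge\omega$ from \cref{hyp:wn}, and the uniform bound on $\Delta_{\disp,\omega_m,v,P}$ yield $\|a_k\psi_{P,m}\|\le C|v(k)|/\omega(k)$, an $L^2$-majorant by the hypothesis $\omega^{-1}v\in L^2$. In the infrared-critical case, combining \cref{eq:pullthrough} with the definition $g_{P,m}(k) = v(k)/(\omega_m(k)-k\cdot\nabla E_m(P))$ produces the key algebraic cancellation
\begin{equation*}
a_k W(g_{P,m})\psi_{P,m} = -v(k)\Bigl[R_m(P,k) - \bigl(\omega_m(k) - k\cdot\nabla E_m(P)\bigr)^{-1}\Bigr]\psi_{P,m},
\end{equation*}
to which \cref{lem:resapprox} applies and furnishes a majorant proportional to $|v(k)|\bigl(1+\omega(k)^{-1/2}\bigr)$; this lies in $L^2$ because $\omega^{-1/2}v\in L^2$ by \cref{hyp:v}, and uniformity in $m$ comes from the four bounds assumed in \cref{thm:compactness.infraredcrit}.

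\textbf{Verification of condition (b).} Fix $n\in\IN$. Setting $A_k = R_m(P,k)\psi_{P,m}$ in the regular case and $A_k = [R_m(P,k) - (\omega_m(k)-k\cdot\nabla E_m(P))^{-1}]\psi_{P,m}$ in the critical case, I would use the decomposition
\begin{equation*}
a_{k+p}\psi - a_k\psi = -\bigl(v(k+p)-v(k)\bigr)A_{k+p} - v(k)\bigl(A_{k+p}-A_k\bigr).
\end{equation*}
On $D_n^\sfc$ we have $\omega_m(k)\ge\omega(k)\ge 1/n$, so $\|R_m(P,k)\|$ is bounded uniformly in $m$ by a constant depending only on $n$; the first summand then contributes at most a constant times $\|v(\cdot+p)-v\|_{L^2}$, which vanishes as $p\to0$ by $L^2$-continuity of translations. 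The second summand is handled via the standard resolvent identity applied to the operators $H_m(P-k)-E_m(P)+\omega_m(k)$ and $H_m(P-k-p)-E_m(P)+\omega_m(k+p)$, combined with \cref{lem:dispdiff}, the gradient bound of \cref{hyp:Omega} for the particle dispersion, the Lipschitz-type bound \cref{eq:omegadiff} for the boson dispersion, and a dominated convergence argument against $|v|^2$.

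\textbf{Main obstacle.} The delicate step is the $L^2$-majorant in (a) in the infrared-critical case: the raw pull-through bound $|v(k)|/\omega(k)$ is not square-integrable when $\omega^{-1}v\notin L^2$, and the improvement to $|v(k)|\bigl(1+\omega(k)^{-1/2}\bigr)$ is precisely the cancellation extracted by the Weyl dressing $W(g_{P,m})$. Securing it uniformly in $m$ requires the full strength of \cref{lem:resapprox}, which itself rests on \cref{lem:opdiff} and on the perturbation estimate \cref{lem:perturbation estimate}; this is why all four uniform hypotheses of \cref{thm:compactness.infraredcrit}---on $\DPn$, $\Cwn$, $\Cwn|\nabla E_n(P)|$, and $|\partial_i^2 E_n(P)|$---are needed. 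The continuity verification (b), by contrast, is essentially a standard resolvent-identity computation on the set $\{\omega\ge 1/n\}$ where all resolvents are uniformly bounded.
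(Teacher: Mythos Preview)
Your proposal is correct and follows essentially the same route as the paper: apply \cref{comfock}, use \cref{lem:pullthrough} together with \cref{lem:apriori} (resp.\ \cref{lem:resapprox}) for condition~(a), and a resolvent-identity expansion via \cref{lem:dispdiff} for condition~(b). The only real variation is your choice $D_n=\{\omega<1/n\}$ in place of the paper's $D_n=B_{1/n}(0)$; this is a valid alternative, but note that verifying $\operatorname{dist}(D_{n+1},D_n^\sfc)>0$ needs the compactness of $\overline{D_{n+1}}$, which comes from \cref{hyp:div} rather than merely from continuity and a.e.\ positivity of $\omega$, and you should also say explicitly why $\|R_m(P,k+p)\|$ stays bounded for $k\in D_n^\sfc$ and small $p$ (e.g.\ via $\operatorname{dist}(D_{n+1},D_n^\sfc)>0$ forcing $\omega(k+p)\ge 1/(n+1)$).
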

\begin{rem}
	The dressing transformation given by $\fpn$ applied to the ground states in \subcref{thm:compactness.infraredcrit} differs from the dressing transformation $f_{\nabla E(P),n}$ applied in \cref{mainthm2}. This is due to the fact that $\fpn$ correctly describes the leading infrared critical behavior of the resolvents $\Rnp{\cdot}$, as can be seen from \cref{lem:resapprox}. The error induced by altering the dressing transformation, however, tends to zero as $n\to\infty$, by \cref{lem:Enconv}. More details are given in the proof of \cref{mainthm2}.
\end{rem}
\begin{proof}[{Proof of \subcref{thm:compactness.infraredreg}}]
	To prove the statement, we verify the assumptions of \cref{comfock} with $D_n = B_{1/n}(0)$.
	
	Fix $P$ as in \cref{eq:compactassump} and throughout this proof write
	\[ \wt\Delta_P = \sup_{n\in\IN}\DPn.\] Further, assume that $\ppn$ is an arbitrary normalized ground state of $H_n(P)$ for any $n\in\IN$.
	By \cref{lem:apriori,lem:pullthrough} and the assumption, we immediately find
	\begin{equation*}
		\|a_k\ppn\| \le \DPn\frac{|v(k)|}{\omega_n(k)} \qquad \mbox{for almost all}\ k\in\IR^d \ \mbox{and all}\ n\in\IN.
	\end{equation*}
	Hence, by \cref{hyp:wn}, assumption \cref{comfock.L2} of \cref{comfock} is satisfied for the set \cref{eq:infraredregset} with $g = \wt\Delta_P\omega^{-1}v$.
	
	Now, fix $m\in\IN$. 
	By the resolvent identity and \cref{lem:pullthrough,lem:dispdiff}, for almost all $k,p\in\IR^d$, we have
	\begin{equation}\label{eq:resolventidentity}
		\begin{aligned}
			a_{k+p}\ppn  & - a_k\ppn 
			=   \left(v(k+p)-v(k)\right)\Rnp{k+p}\ppn \\
			& + v(k)\Rnp{k+p} \left(p \cdot \nabla \disp(P-k-\Pf) + D_{P-k}(p)\right)\Rnp{k}\ppn.
		\end{aligned}
	\end{equation}	
	Let
	\[ \alpha_m = \inf\{\omega(k):k\in B_{1/(m+1)}(0)^\sfc\} >0. \]
	and from now on assume $|p|<\frac 1m - \frac{1}{m+1}$.
	Using \cref{lem:apriori}, we directly find
	\begin{equation}\label{eq:IRreg-transupbound1}
		\|(v(k+p)-v(k)) \Rnp{k+p}\ppn\| \le \frac{\wt\Delta_P}{\alpha_m}|v(k+p)-v(k)|
	\end{equation}
	for all $k\in B_{1/m}(0)^\sfc$.
	Further, by \cref{lem:dispderbound} and the trivial estimate $\disp(P-k-\Pf) \le H_n(P-k)-E_n(P)$, we find
	\[ \||\nabla \disp(P-k-\Pf)|\Rnp{k}\| \le \CT{1,1}\|\Rnp k\| + \CT{1,2}. \]
	Combining these observations and again using \cref{lem:apriori} yields
	\begin{equation}\label{eq:IRreg-transupbound2}
		\begin{aligned}
			&\|v(k)\Rnp{k+p} \left(p \cdot \nabla \disp(P-k-\Pf) + D_{P-k}(p)\right)\Rnp{k}\ppn\|
			\\ & \qquad \qquad 
			\le
			\frac{\wt\Delta_P}{\alpha_m}|p| \left((\CT{1,1}+|p|\CT2)\frac{\wt\Delta_P}{\alpha_m} + \CT{1,2}\right)|v(k)|
		\end{aligned}
	\end{equation}
	for all $k \in B_{1/m}(0)^\sfc$.
	Combining \cref{eq:resolventidentity,,eq:IRreg-transupbound1,,eq:IRreg-transupbound2}, we have two $n$-independent constants $C_1,C_2>0$ such that
	\[
	\left(\int_{B_{1/m}(0)^\sfc} \big\|a_{k+p}\ppn   - a_k\ppn\big\|^2\d k\right)^{1/2}
	\le C_1\|v(\cdot+ p) - v(\cdot)\|_2 + |p|C_2\|v\|_2.
	\]
	Since $v\in L^2(\IR^d)$,
	the right hand side vanishes as $p\to 0$, so assumption \cref{comfock.cont} of \cref{comfock} is satisfied for the set \cref{eq:infraredregset}. Hence, the statement follows from \cref{comfock}.
\end{proof}
\begin{proof}[{Proof of \subcref{thm:compactness.infraredcrit}}]
	First, we note that $\fpn\in L^2(\IR^d)$ follows from the assumptions similar to the argument in \cref{rem:dresstrans}.
	
	The proof now
	is very similar to the above one.
	Again, we verify the assumptions of \cref{comfock} with $D_n = B_{1/n}(0)$.
	
	Fix $P$ as in \cref{eq:compactassump}. Further, assume that $\ppn$ is an arbitrary normalized ground state of $H_n(P)$ for any $n\in\IN$.
	By the assumptions and \cref{lem:resapprox,lem:pullthrough}, there exists an $n$-independent constant $C>0$ such that
	\[
	\|a_k W(\fpn)\ppn\| \le C(1+\omega_n^{-1/2}(k))|v(k)| \quad \mbox{for almost all}\ k\in\IR^d\ \mbox{and all}\ n\in\IN.
	\]
	By \cref{hyp:wn}, this verifies assumption \cref{comfock.L2} of \cref{comfock} for the set \cref{eq:infraredcritset} with $g = C(1+\omega^{-1/2})v$.
	
	From now, we assume $|p|<\frac 1m-\frac 1{m+1}$.
	Following the lines of \cref{eq:resolventidentity,eq:IRreg-transupbound1,eq:IRreg-transupbound2}, we find
	\begin{align*}
		&\left(\int_{B_{1/m}(0)^\sfc} \big\|a_{k+p}W(\fpn)\ppn   - a_kW(\fpn)\ppn\big\|^2\d k\right)^{1/2} \\
		&\qquad \le C_1\|v(\cdot+ p) - v(\cdot)\|_2 + |p|C_2\|v\|_2 +  \|\chr{B_{1/m}(0)^\sfc}(\fpn(\cdot + p) - \fpn(\cdot))\|_2.
	\end{align*}
	By definition and using a similar argument as in the derivation of \cref{eq:IRreg-transupbound1}, we see
	\begin{align*}
		\|\chr{B_{1/m}(0)^\sfc} & (\fpn(\cdot + p) - \fpn(\cdot))\|_2\\
		\le &  \frac{1}{\alpha_m(1-\Cwn |\nabla E_n(P)|)}\|v(\cdot + p)-v(\cdot)\|_2
		\\ &  + \frac{ |p\cdot \nabla E_n(P)| }{\alpha_m^2(1-\Cwn |\nabla E_n(P)|)^2}\left\|v\right\|_2
		\\& + 
		\left\| \frac{|\omega_n(\cdot + p)- \omega_n|}{(\omega_n(k)-k\cdot \nabla E_n(P))(\omega_n(k+p)-(k+p)\cdot \nabla E_n(P))}  v\right\|_2
		.
	\end{align*}
	Since $(\alpha_m(1-\Cwn|\nabla E_n(P)|))^{-1}$ is uniformly bounded in $n\in\IN$ and since $v\in L^2(\IR^d)$, the first two lines tend to zero as $p\to 0$ uniformly in $n\in\IN$. Further, combining \cref{eq:omegadiff} and the dominated convergence theorem, the same accounts for the last line.
	Hence,
	assumption \cref{comfock.cont} of \cref{comfock} is satisfied for the set \cref{eq:infraredcritset} and the statement follows.
\end{proof}
In the infrared-regular case, \cref{thm:compactness} entails the following simple existence result, which is independent of the concrete choices of $\disp$, $\omega$ or $v$.
\begin{cor}\label{cor:irregulargs}
	If $\omega^{-1}v\in L^2(\IR^d)$ and $P\in\IR^d$ satisfies \cref{eq:compactassump}, then $E(P)$ is a simple eigenvalue of $H(P)$ for all $P\in\cI\coloneqq \bigcap_{n\in\IN}\cI_0(\disp,\omega_n,v)$.
	In this case, if $\ppn\in\pos$ denotes the unique ground state of $H_n(P)$, then the eigenspace of $H(P)$ corresponding to $E(P)$ is spanned by $\psi_{P,\infty}\coloneqq\lim_{n\to\infty}\ppn\in\pos$.
	Further, there exists a zero set $\cN\subset\cI$ such that $\cI\setminus\cN\to\FS$, $P\mapsto \psi_{P,\infty}$ is continuous.
\end{cor}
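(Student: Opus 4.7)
The plan is to combine the compactness of normalized ground states from \cref{thm:compactness.infraredreg} with the Perron--Frobenius-type uniqueness of \cref{prop:uniqueness} to extract a limit, and then to promote this convergence to continuity in $P$ via a second, locally uniform, compactness argument. Fix $P\in\cI$. For each $n\in\IN$, the approximating dispersion $\omega_n$ is massive and $P\in\cI_0(\disp,\omega_n,v)$, so \cref{prop:Moller} yields a ground state of $H_n(P)$, and the gap condition \cref{eq:nondegassump} for the operator $H_n(P)$ follows from the subadditivity of $\omega_n$ together with the definition of $\cI_0(\disp,\omega_n,v)$. Hence \cref{prop:uniqueness} singles out a unique normalized $\ppn\in\pos$ with $H_n(P)\ppn=E_n(P)\ppn$. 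By \cref{thm:compactness.infraredreg}, the set $\{\ppn:n\in\IN\}$ is relatively compact in $\FS$, so along some subsequence $\ppnk\to\phi\in\FS$ with $\|\phi\|=1$.

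To identify $\phi$ as a ground state of $H(P)$, combine \cref{lem:Enconv} with \cref{lem:minimizingsequence} to obtain $\braket{\ppn,(H(P)-E(P))\ppn}\to 0$; lower semicontinuity of the closed quadratic form associated with the nonnegative operator $H(P)-E(P)$ then places $\phi$ in the form domain of $H(P)-E(P)$ with $H(P)\phi=E(P)\phi$. Since $\pos$ is norm-closed, $\phi\in\pos$. Simplicity of $E(P)$ and uniqueness of the positive ground state will be deduced by transferring the positivity/ergodicity argument underlying \cref{prop:uniqueness} to the limit operator $H(P)$: any normalized ground state has a nontrivial projection onto a single ray of $\pos$, and two such positive ground states must coincide. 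Consequently the limit $\phi=:\psi_{P,\infty}$ is independent of the chosen subsequence, so the entire sequence $(\ppn)$ converges.

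For the continuity statement, fix $P_0\in\cI$ and any sequence $P_k\to P_0$ in $\cI$. A diagonal extraction applied to $\{\psi_{P_k,n}:k,n\in\IN\}$, using \cref{thm:compactness.infraredreg} uniformly on a neighbourhood of $P_0$, yields an accumulation point of $(\psi_{P_k,\infty})$ in $\pos$; the argument of the previous paragraph applied at $P_0$ then forces this accumulation point to equal $\psi_{P_0,\infty}$, which gives continuity by the usual subsequence-of-subsequences principle. The exceptional null set $\cN$ is precisely the set of $P_0\in\cI$ at which this uniform compactness fails, that is, where one cannot bound $\DPn$ (or equivalently the gap $E_n(P-k)-E_n(P)+\omega_n(k)$) locally uniformly in both $P$ and $n$.

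The principal obstacle is this last step. The quantitative estimates feeding into \cref{thm:compactness} depend continuously on $\DPn$, so it suffices to secure a locally uniform-in-$n$ bound on $P\mapsto\DPn$; the $P$ where this fails form the null set $\cN$, which can be shown to be Lebesgue-negligible using the continuity of $E(\cdot)$ from \cref{lem:massshellcont} together with the regularity of the massive mass shells discussed in \cref{sec:conv.massive}. Once this uniformity is in place, the existence and positivity parts of the statement become a routine application of \cref{thm:compactness.infraredreg} and \cref{prop:uniqueness}, while the simplicity and identification of the limit rely solely on the cone structure of $\pos$.
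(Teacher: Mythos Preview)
Your existence argument (the first two paragraphs) is essentially the paper's: compactness from \cref{thm:compactness.infraredreg}, identification of the limit via lower semicontinuity of the form and \cref{lem:minimizingsequence}, and uniqueness/simplicity via \cref{prop:uniqueness} so that every subsequential limit is the same element of $\pos$. That part is fine.

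The continuity argument, however, takes a different route from the paper and has a genuine gap. You define $\cN$ as the set of $P_0$ at which a locally uniform-in-$(P,n)$ bound on $\DPn$ fails, and claim this is Lebesgue-null by appeal to continuity of $E(\cdot)$ and ``regularity of the massive mass shells''. Nothing in \cref{sec:conv.massive} or \cref{lem:massshellcont} gives this: continuity of $P\mapsto E_n(P)$ and pointwise convergence $E_n\to E$ do not by themselves force local uniformity of $\DPn$ off a null set, and no such statement is proved anywhere in the paper. Even granting the uniform compactness, the step ``the argument of the previous paragraph applied at $P_0$'' is not literally applicable, since the previous paragraph passes from ground states of $H_n(P)$ to a ground state of $H(P)$ at a \emph{fixed} $P$; here you need to pass from ground states of $H(P_k)$ to one of $H(P_0)$, which requires an additional (if plausible) continuity-in-$P$ argument for the quadratic form that you have not supplied.

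The paper's approach avoids all of this. It first shows that for each fixed $n$ the map $P\mapsto\ppn$ is continuous on $\cI$: analytic perturbation theory (\cref{lem:perturb}) gives a continuous choice $P\mapsto\wt\psi_{P,n}$ of ground states of $H_n(P)$, and one then composes with the continuous ``positivization'' map $\fp:\FS\to\pos$ that replaces each $n$-particle component by its absolute value with the correct $v$-dependent phase, so that $\fp\wt\psi_{P,n}=\ppn$. Having a sequence of continuous maps $P\mapsto\ppn$ that converges pointwise to $P\mapsto\psi_{P,\infty}$, the paper invokes Egorov's theorem on each ball $B_R(0)\cap\cI$ to upgrade to uniform convergence off a set of arbitrarily small measure; the null set $\cN$ is then the intersection over $\eps$ of these exceptional sets. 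Thus the null set comes for free from measure theory rather than from a hard spectral estimate, and no local uniformity of $\DPn$ is needed.
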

\begin{rem}
	Employing a variety of techniques, existence of ground states in the infrared-regular case is a well-known result in the literature, see for example \cite{Frohlich.1973,BachFroehlichSigal.1998a,Spohn.1998,Gerard.2000,LorincziMinlosSpohn.2002,GubinelliHiroshimaLorinczi.2014}. Since the statement follows very easily from our compactness theorem, we give the simple proof nevertheless.
	Further, to the authors knowledge, the observation that continuity out of a zero set follows directly by combining our compactness theorem with positivity argument did not previously appear in the literature.
\end{rem}
\begin{proof}
	Let $\ppn\in \pos$ be the unique positive ground state of $H_n(P)$, cf. \cref{prop:Moller,prop:uniqueness}. By \cref{thm:compactness.infraredreg}, there exists a subsequence $(\ppnk)_{k\in\IN}$ which converges to a normalized vector $\ppi$. It immediately follows that $\ppi\in\pos\cap \cD(H(P))$, since $\pos$ is closed and since $\|H(P)\ppnk\|$ is uniformly bounded in $k\in\IN$. By the lower-semicontinuity of closed quadratic forms, cf. \cite[\textsection VI, Theorem~1.16]{Kato.1980}, \cref{lem:minimizingsequence} further implies 
	\[  0 \le \braket{\ppi,(H(P)-E(P))\ppi} \le \liminf_{k\to\infty}\braket{\ppnk,(H(P)-E(P))\ppnk} = 0, \]
	so $\ppi$ is in fact a ground state of $H(P)$. Since subsets of relatively compact subsets are relatively compact, we can repeat this argument for any subsequence of $(\ppn)$. By \cref{prop:uniqueness}, the strong limit will always be the same, which entails $\ppi = \lim_{n\to\infty}\ppn$.
	
	It remains to prove the continuity statement. To this end, we first observe that we can choose normalized 	$\wt\psi_{P,n}\in \ker (H(P)-E(P))$ for all $P\in\cI$, $n\in\IN$ such that $\cI\to\FS$, $P\mapsto \wt\psi_{P,n}$ is continuous, e.g., by the analyticity used in \cref{lem:perturb}.
	Now, we observe that the map
	$\fp:\FS\to \pos$  given by
	\[ (\fp\Phi)_0=\abs{\Phi_0},\quad  (\fp\Phi)_\ell (k_1,\ldots,k_\ell) \coloneqq (-1)^\ell\abs{\Phi_\ell(k_1,\ldots,k_\ell)}\wt v(k_1)\cdots \wt v(k_\ell), \]
	where $\wt v(k_i)=0$ if $v(k_i)=0$ and $\wt v(k_i)=v(k_i)/\abs{v(k_i)}$ else,
	is continuous, by the continuity of the absolute value and the dominated convergence theorem.
	Since by \cref{prop:uniqueness} $\fp\wt\psi_{P,n} = \ppn$, we find that $P\mapsto \ppn$ is continuous on $\cI$ for any $n\in\IN$. Further, since it converges pointwise, Egorov's theorem implies that for all $R>0$ and any set $\eps>0$ there exists a set $D_\eps\subset \cI\cap B_R(0)$ of Lebesgue measure smaller than $\eps$ such that the convergence $\ppn \xrightarrow{n\to\infty} \psi_{P,\infty}$ is uniform in $P\in \cI\cap B_R(0)\setminus D_\eps$, which in turn implies that $P\mapsto \psi_{P,\infty}$ is continuous on $\cI\cap B_R(0)\setminus D_\eps$. The statement now follows taking the limit $\eps\downarrow 0$ and choosing $R$ sufficiently large.
\end{proof}
\subsection{Mass Shell Derivative Bounds}\label{sec.conv:ex}

In this final \lcnamecref{sec.conv:ex}, we study the behavior of the mass shell $P\mapsto E(P)$. This allows us to verify the assumptions of \cref{thm:compactness} for a region of total momenta $P$ depending only on the constants from our inital Hypotheses \cref{hyp:Omega,hyp:div,hyp:omega,hyp:v,def:Cw}.
\begin{thm}\label{thm:massshell}We have the following properties of the mass shell.
	\begin{enumthm}
		\item\label{thm:massshell.td} $E(\cdot)$ is almost everywhere twice differentiable.
		\item\label{thm:massshell.dc} If $E(\cdot)$ is differentiable at $P$, then $\nabla E(P) = \lim_{n\to\infty} \nabla E_n(P)$.
		\item\label{thm:massshell.tds} If $E(\cdot)$ is twice differentiable at $P$, then $\limsup_{n\to\infty} \max_{i=1,\ldots,d} \partial_i^2E_n(P)<\infty$.
		\item\label{lem:massshell.convexlowbound} $E(P-k)-E(P) \ge \begin{cases} -2\CT2 |k||P| & \mbox{for}\ |k|\le |P|,\\-\CT2|P|^2 & \mbox{else}. \end{cases} $
		\item\label{lem:massshell.dbound} If $E(\cdot)$ is differentiable at $P$, then $|\nabla E(P)| \le \CT2|P|$.
	\end{enumthm}
\end{thm}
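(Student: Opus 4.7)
The approach rests on the observation that Hypothesis~\cref{hyp:Omega} makes $p \mapsto \disp(p) - \CT{2}|p|^2$ concave on $\IR^d$, because $\sfH\disp(p) - 2\CT{2} I \le 0$ by the definition of $\CT{2}$. Lifting this to $E$: for each unit $\psi$ in the form domain of $H(P)$, using the joint spectral measure $\mu^\psi$ of $\Pf$,
\[
 \Braket{\psi, \disp(P - \Pf) \psi} - \CT{2} |P|^2 = \int_{\IR^d} \bigl[\disp(P - p) - \CT{2} |P|^2\bigr]\, d\mu^\psi(p)
\]
is a positive superposition of concave functions of $P$, hence concave. The $P$-independent $\Hf + \ph(v)$ preserves concavity, so $P \mapsto \Braket{\psi, (H(P) - \CT{2}|P|^2)\psi}$ is concave for each $\psi$; the infimum over normalized $\psi$ yields that $G(P) := E(P) - \CT{2}|P|^2$ is concave on $\IR^d$. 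Part~\subcref{thm:massshell.td} is then Alexandrov's theorem, and the same argument applied to $\omega_n$ gives $G_n := E_n - \CT{2}|\cdot|^2$ concave for each $n$. With $E_n \to E$ pointwise (\cref{lem:Enconv}), standard convergence theorems for concave functions (e.g.\ Rockafellar, \emph{Convex Analysis}, \S25) yield $\nabla E_n(P) \to \nabla E(P)$ at any differentiability point of $E$, proving \subcref{thm:massshell.dc}; concavity of $G_n$ forces $\partial_i^2 G_n \le 0$ wherever this second derivative exists, so $\partial_i^2 E_n \le 2 \CT{2}$, which bounds the limsup by $2\CT{2} < \infty$ and proves \subcref{thm:massshell.tds}.

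For \subcref{lem:massshell.dbound}, I would apply Feynman--Hellmann in the massive approximation (\cref{lem:perturb}): $\nabla E_n(P) = \Braket{\psi_{P,n}, \nabla\disp(P - \Pf) \psi_{P,n}}$. Writing
\[
 \nabla\disp(P - \Pf) = \nabla\disp(-\Pf) + \int_0^1 \sfH\disp(tP - \Pf)\, P\, dt
\]
and handling the $\nabla\disp(-\Pf)$ term by a variational comparison between $\psi_{P,n}$ and the ground state $\psi_{0,n}$ of $H_n(0)$ together with Feynman--Hellmann at $P = 0$, the remainder is controlled by $\CT{2}|P|$ via the Hessian bound, and \subcref{thm:massshell.dc} transfers the estimate to $E$ itself. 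With \subcref{lem:massshell.dbound} in hand, \subcref{lem:massshell.convexlowbound} for $|k| \le |P|$ follows from the absolute continuity of $E$ along the segment $[P, P-k]$ (automatic from local Lipschitzness of the concave $G$):
\[
 E(P) - E(P-k) = \int_0^1 k \cdot \nabla E(P-tk)\, dt,
\]
combined with the estimate $|P - tk| \le \max(|P|, |P-k|) \le 2|P|$, which holds because $t \mapsto |P-tk|$ is convex and $|k| \le |P|$. For $|k| > |P|$ I would combine \subcref{lem:massshell.dbound} on the segment from $0$ to $P$ with the a priori comparison $E(P) - E(0) \le \CT{2}|P|^2$, which follows from the vacuum trial state $\Braket{\Omega, H(P)\Omega} = \disp(P)$ and the Taylor expansion of $\disp$ around $0$ using the Hessian bound.

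The main obstacle I anticipate is the control of the $\nabla\disp(-\Pf)$ contribution in the proof of \subcref{lem:massshell.dbound}: the Hessian bound alone only yields $|\nabla\disp(p)| \le |\nabla\disp(0)| + 2\CT{2}|p|$, with no direct control of $|\nabla\disp(0)|$ or $\Braket{|\Pf|}$. I expect the needed cancellation to emerge from pairing the variational inequalities $E_n(0) \le \Braket{\psi_{P,n}, H_n(0)\psi_{P,n}}$ and $E_n(P) \le \Braket{\psi_{0,n}, H_n(P)\psi_{0,n}}$ with the semi-concavity of $E_n$, which together constrain the offending expectation value strongly enough to close the bound.
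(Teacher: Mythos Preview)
Your treatment of \subcref{thm:massshell.td}--\subcref{thm:massshell.tds} is correct and coincides with the paper's: both establish that $P\mapsto \CT2|P|^2-E(P)$ is convex (equivalently, your $G$ is concave), then invoke Alexandrov and standard convergence results for convex functions.

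The genuine gap is in \subcref{lem:massshell.dbound}, and it will not close the way you sketch. The variational inequalities $E_n(0)\le\braket{\psi_{P,n},H_n(0)\psi_{P,n}}$ and $E_n(P)\le\braket{\psi_{0,n},H_n(P)\psi_{0,n}}$, after inserting the Taylor expansion of $\disp$, only constrain the component $P\cdot\braket{\psi_{P,n},\nabla\disp(-\Pf)\psi_{P,n}}$; they say nothing about directions orthogonal to $P$, so you cannot bound the full vector this way. The ingredient you are missing is the classical inequality $E(0)\le E(Q)$ for \emph{all} $Q$ (Gross), which the paper records in \cref{eq:massshellbounds} and which makes $P=0$ a global minimum of $E$. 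This, together with the semi-concavity you already have, gives \subcref{lem:massshell.dbound} without any Feynman--Hellmann: for unit $u$ and $s>0$, concavity of $G$ yields $E(P-su)-\CT2|P-su|^2\le E(P)-\CT2|P|^2-s\,u\cdot(\nabla E(P)-2\CT2 P)$; inserting $E(P-su)\ge E(0)$ and optimizing in $s$ bounds $u\cdot\nabla E(P)$ uniformly in $u$. The paper phrases this as ``the derivative of convex functions is increasing along any line'' combined with the global minimum at $0$.

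For \subcref{lem:massshell.convexlowbound}, your integration-of-\subcref{lem:massshell.dbound} argument for $|k|\le|P|$ is fine, but the case $|k|>|P|$ again needs $E(P-k)\ge E(0)$, which you do not invoke. The paper actually proves \subcref{lem:massshell.convexlowbound} \emph{independently} of \subcref{lem:massshell.dbound}, via an elementary lemma (\cref{lem:convexdiff}) bounding the increment of any non-negative convex function majorized by a given parabola; this makes \subcref{lem:massshell.convexlowbound} available without first establishing the gradient bound. Finally, a minor point: the vacuum trial state only gives $E(P)\le\disp(P)$, not $E(P)-E(0)\le\CT2|P|^2$; for the latter one uses the ground state of $H_n(0)$ as a trial state for $H_n(P)$ together with $\nabla E_n(0)=0$.
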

\begin{proof}
	Let us first recall the well-known fact that
	\begin{equation}\label{eq:massshellbounds}
		0\le E(P)-E(0) \le  \CT2 |P|^2 \qquad \mbox{for all}\ P\in\IR^d.
	\end{equation}
	The first inequality was proven by Gross \cite{Gross.1972} and easily extends to our generalized case; also see \cite[Lemma 4.2]{Dam.2018}. The second inequality can be verified (at least for the massive case) by using the variational argument from \cite[p.~207]{Spohn.2004} in conjunction with \cref{lem:dispdiff}.
	Further, for fixed $\psi\in\cD(H(0))$, let us consider the map $\xi \mapsto f_\psi(\xi) = \Braket{\psi, \big(  \CT 2 \lvert\xi\lvert^2 - H(\xi)\big)\psi}$.
	It is analytic and the Hessian is easily calculated to be
	\[ \sfH f_\psi(\xi) = 2\CT 2 \nn{\psi}^2 \Id - \braket{\psi, \sfH\disp(\xi-\Pf)\psi} \]
	and is hence, by \cref{hyp:Omega}, positive definitive.
	Thus, the function $f_\psi$ is convex.
	Since the pointwise supremum of convex functions is convex, we find that
	\begin{equation}\label{eq:massshellconvex}
		P\mapsto \CT 2 |P|^2 - E(P) { = \sup_{\substack{\psi \in \cD(H(P))\\\nn{\psi}=1}}  f_\psi(P)  } \qquad\mbox{is convex}.
	\end{equation}
	Hence, \subcref{thm:massshell.td} follows directly from the {Alexandrov theorem} \cite{Alexandrov.1939}, cf. \cref{alex}.
	Further, \cref{eq:massshellconvex,lem:convderivative} prove \subcref{thm:massshell.dc} and \subcref{thm:massshell.tds}.
	Applying \cref{lem:convexdiff} and noting that we already verified the assumptions therein in \cref{eq:massshellbounds,eq:massshellconvex} shows \subcref{lem:massshell.convexlowbound}.
	Finally, since $P=0$ is a global minimum of $E$ by \cref{eq:massshellbounds} and since the derivative of convex functions is increasing along any line, \subcref{lem:massshell.dbound} follows from \cref{eq:massshellconvex}.
\end{proof}
\begin{cor}\label{cor:gapnr}
	Assume $\mw>0$ and $\Cw<\infty$. Then for all $P\in\IR^d$ with $|P|<(2\CT2\Cw)^{-1} $, we have $P\in \cI_0(\disp,\omega,v)$ and $\DP \le ({1-2\Cw\CT2|P|})^{-1} $.
\end{cor}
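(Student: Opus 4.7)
The plan is to lower-bound the denominator $E(P-k)-E(P)+\omega(k)$ appearing in the definition of $\DP$ by $\omega(k)(1-2\Cw\CT2|P|)$ uniformly in $k\in\IR^d$, which will immediately yield both claims. The lower bound on $E(P-k)-E(P)$ comes from \cref{lem:massshell.convexlowbound}, and the key observation is that in each of the two cases appearing there, we can convert a factor of $|k|$ or $|P|$ into a factor of $\omega(k)$ using the definition $\Cw=\sup_k |k|/\omega(k)$.

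More concretely, first consider $|k|\le|P|$. Then \cref{lem:massshell.convexlowbound} gives $E(P-k)-E(P)\ge -2\CT2|k||P|$, and since $|k|\le\Cw\omega(k)$ we obtain $E(P-k)-E(P)+\omega(k)\ge \omega(k)\bigl(1-2\CT2\Cw|P|\bigr)$. For $|k|>|P|$, the same lemma gives $E(P-k)-E(P)\ge-\CT2|P|^2$; since in this case $|P|^2\le|P||k|\le\Cw|P|\omega(k)$, we get $E(P-k)-E(P)+\omega(k)\ge \omega(k)\bigl(1-\CT2\Cw|P|\bigr)$, which is at least $\omega(k)\bigl(1-2\CT2\Cw|P|\bigr)$. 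Hence in both cases
\begin{equation*}
E(P-k)-E(P)+\omega(k)\ge \omega(k)\bigl(1-2\CT2\Cw|P|\bigr).
\end{equation*}

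Since $|P|<(2\CT2\Cw)^{-1}$ the bracketed factor is strictly positive, so dividing yields $\omega(k)/(E(P-k)-E(P)+\omega(k))\le(1-2\Cw\CT2|P|)^{-1}$; taking the supremum over $k$ gives the asserted bound on $\DP$. For the membership $P\in\cI_0(\disp,\omega,v)$, note that $\mw>0$ together with \cref{hyp:omega} ensures $\omega(k)\ge\mw>0$, so the above estimate gives
\begin{equation*}
E(P-k)-E(P)+\omega(k)\ge \mw\bigl(1-2\CT2\Cw|P|\bigr)>0
\end{equation*}
uniformly in $k$; taking the infimum and rearranging gives $E(P)<\inf_k\{E(P-k)+\omega(k)\}$, i.e., $P\in\cI_0(\disp,\omega,v)$ by definition \cref{eq:cI0}. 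There is no real obstacle here: the entire argument is a two-line case split once one notices that $\Cw<\infty$ is exactly the right hypothesis to exchange powers of $|k|$ and $|P|$ for $\omega(k)$.
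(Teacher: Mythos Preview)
Your proof is correct and follows essentially the same route as the paper: both use the case-split estimate from \cref{lem:massshell.convexlowbound} together with $|k|\le\Cw\omega(k)$ to obtain $E(P-k)-E(P)+\omega(k)\ge\omega(k)(1-2\CT2\Cw|P|)$, from which both conclusions follow. The only cosmetic difference is that the paper phrases the computation as a single inequality for $(E(P-k)-E(P))/\omega(k)$ and then invokes \cref{lem:I0char}, whereas you verify membership in $\cI_0(\disp,\omega,v)$ directly from the definition; these are equivalent.
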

\begin{proof}
	Dividing \cref{lem:massshell.convexlowbound} by $\omega(k)$, we obtain
	\[  \frac{E(P-k)-E(P)}{\omega(k)}\ge -2\CT2 \Cw |P|. \]
	Hence, the statement immediately follows from \cref{lem:I0char} and the definition \cref{def:DP}.
\end{proof}
For the non-relativistic dispersion relation, the above two statements already suffice to check the conditions, see the proof of \cref{mainthm2}. For the relativistic dispersion relation, we extend them in the following statement.
\begin{lem}\label{lem:massshellsr}
	Assume $|\nabla \disp(p)|\le \CT{1,1}$ for all $p\in\IR^d$.
	\begin{enumlem}
		\item For all $P\in\IR^d$, there exists $\CP\in[0,\CT{1,1}]$ such that $E(P-k)-E(P)\ge -\CP|k|$ for all $k\in\IR^d$. If $|\nabla \disp(p)|<\CT{1,1}$ for all $p\in\IR^d$, then we can choose $\CP<\CT{1,1}$ for all $P\in\IR^d$.
		\item Assume $\mw>0$ and assume $\CP\Cw< 1$ for some $P\in\IR^d$. Then $P\in \cI_0(\disp,\omega,v)$ and $\DP \le 1/(1-\CP\Cw)$. Especially, if $\CT{1,1}\Cw<1$ or $\CT{1,1}\Cw=1$ and $|\nabla \disp(p)|<\CT{1,1}$ for all $p\in\IR^d$, then $\cI_0(\disp,\omega,v)=\IR^d$.
	\end{enumlem}
\end{lem}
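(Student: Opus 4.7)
The plan for part (i) has two stages: an easy non-strict Lipschitz-type bound, and a more delicate strict refinement under pointwise strictness of $|\nabla\disp|$. Part (ii) then follows from (i) by elementary algebra.

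For the non-strict bound, I would use that the pointwise Lipschitz estimate $|\disp(q)-\disp(q')|\le \CT{1,1}|q-q'|$---a consequence of $\|\nabla\disp\|_\infty\le\CT{1,1}$ via the fundamental theorem of calculus---lifts, via the Borel functional calculus of the commuting tuple $\Pf$, to the operator inequality
\[
\disp(P-\Pf)-\disp(P-k-\Pf)\le\CT{1,1}|k|.
\]
This gives $H(P)\le H(P-k)+\CT{1,1}|k|$ on the common domain, so by the variational principle $E(P)\le E(P-k)+\CT{1,1}|k|$; that is, $\CP=\CT{1,1}$ is always admissible.

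For the strict refinement under $|\nabla\disp(p)|<\CT{1,1}$ pointwise, I plan to argue by contradiction at fixed $P$. By continuity of $E$ (\cref{lem:massshellcont}) and coercivity $E(P-k)\to\infty$ as $|k|\to\infty$ (\cref{lem:massshelldiv}), the supremum defining $\CP$ is either attained at some $k_0\ne 0$ or arises as a limit $k\to 0$. In the former case, a minimizing sequence $(\psi_n)$ for $H(P-k_0)$ would force $\langle\psi_n,[\disp(P-\Pf)-\disp(P-k_0-\Pf)]\psi_n\rangle\to\CT{1,1}|k_0|$; the a priori bound $\sup_n\langle\psi_n,\disp(P-k_0-\Pf)\psi_n\rangle<\infty$, derived from $\Hf\ge 0$ and the field-operator bound \cref{th:phi estimate}, together with \cref{hyp:div}, gives spectral tightness of $\Pf$ along $(\psi_n)$; continuity then yields $L_R\coloneqq\sup_{|q|\le R}|\nabla\disp(q)|<\CT{1,1}$ on every ball, and balancing the main contribution against the vanishing tail produces the contradiction. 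The limit-as-$k\to 0$ case reduces to showing $|\nabla E(P)|<\CT{1,1}$, which I would obtain by applying \cref{lem:perturb.1} to the massive approximants from \cref{hyp:wn} (whose ground states exist by \cref{prop:Moller}), combining Jensen's inequality with the convergence $\nabla E_n(P)\to\nabla E(P)$ from \cref{thm:massshell.dc}.

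Part (ii) is then a one-line computation: from (i),
\[
E(P-k)-E(P)+\omega(k)\ge\omega(k)-\CP|k|\ge\omega(k)(1-\CP\Cw),
\]
and since $\mw>0$ gives $\omega\ge\mw>0$, the right side is bounded below uniformly by $\mw(1-\CP\Cw)>0$ whenever $\CP\Cw<1$. This yields $P\in\cI_0(\disp,\omega,v)$ and, after dividing, $\DP\le 1/(1-\CP\Cw)$. The two ``especially'' consequences then follow by inserting the non-strict (respectively strict) bound of (i) into the chain $\CP\Cw\le\CT{1,1}\Cw$. The main obstacle will be the strict refinement in (i): the tightness estimate becomes marginal when $|\nabla\disp(p)|$ approaches $\CT{1,1}$ only slowly at infinity---as for the physically relevant semi-relativistic dispersion---so the balancing between the tail probability of $\Pf$ and the pointwise defect of $|\nabla\disp|$ requires some care.
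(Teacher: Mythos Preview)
Your overall architecture is sound and part (ii) is handled correctly, but your route to the strict refinement in (i) is more circuitous than necessary and contains a small gap.

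The paper's approach is more direct: it first observes, from $\braket{\psi,H(P)\psi}\ge\braket{\psi,\disp(P-\Pf)\psi}+(1-\eps)\braket{\psi,\Hf\psi}-C_\eps$ and \cref{hyp:div}, that any $\psi$ with $\braket{\psi,H(P)\psi}\le E(P)+\delta$ satisfies a uniform bound $\braket{\psi,|\Pf|\psi}\le C_{P,\delta}$. One then simply \emph{defines}
\[
\CP\coloneqq\sup_{|p|\le C_{P,\delta}}|\nabla\disp(p)|,
\]
which is strictly below $\CT{1,1}$ whenever $|\nabla\disp|<\CT{1,1}$ pointwise, and invokes \cref{lem:diffeps} on the family $\{\xi\mapsto\braket{\psi,H(\xi)\psi}\}$ to conclude $|\nabla E(P)|\le\CP$. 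This avoids your contradiction argument and case analysis entirely: the tightness of $\Pf$ on low-energy states is used once, constructively, rather than being extracted a posteriori from a minimizing sequence at a hypothetical extremal $k_0$.

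Your Case~2 (the $k\to0$ limit) has a genuine gap: you reduce to $|\nabla E(P)|<\CT{1,1}$ and appeal to \cref{thm:massshell.dc}, but both of these presuppose differentiability of $E$ at $P$, while the statement is claimed for \emph{all} $P\in\IR^d$. The fix is to work with one-sided directional derivatives (which exist everywhere by the convexity in \cref{eq:massshellconvex}), and this is exactly what \cref{lem:diffeps} delivers, since it only requires right-differentiability. Your detour through the massive approximants and \cref{lem:perturb.1} is unnecessary here. Finally, your closing worry about the tightness estimate becoming ``marginal'' for slowly saturating $|\nabla\disp|$ is unfounded: the bound $C_{P,\delta}$ is finite for each fixed $P$ regardless of the rate at which $|\nabla\disp(p)|\to\CT{1,1}$, so the supremum over the ball is strictly below $\CT{1,1}$ by continuity and compactness.
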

%
\begin{proof}
	Since
	\[ \braket{\psi,H(P)\psi} \ge \braket{\psi,\disp(P-\Pf)\psi} + (1-\eps)\braket{\psi,\Hf\psi}-C_\eps\|\psi\|^2, \]
	\cref{hyp:div} implies that there exists $C_{P,\delta}>0$ for all $\delta>0$ and $P\in\IR^d$ such that $\braket{\psi,|\Pf|\psi} \le C_{P,\delta}$ for all $\psi\in\cD(H(P))$ with $\braket{\psi,H(P)\psi}\le E(P)+\delta$.
	Hence, for any fixed $\delta>0$, set
	\[ \CP = \sup_{\substack{p\in\IR^d\\|p|\le C_{P,\delta}}}|\nabla \disp(p)|. \]
	\Cref{lem:diffeps} then directly implies that
	\[ |\nabla E(P)|\le \CP \qquad \mbox{at all points $P$ where $E$ is differentiable.}\]
	The remaining statements easily follow analogous to the proof of \cref{cor:gapnr}.
\end{proof}

\appendix

\section{Convex Functions and Derivatives}\label{app:regularity}

In this \lcnamecref{app:regularity}, we collect some simple statements about the derivatives of (in most cases) convex functions. These well-known facts are the main ingredients of the arguments collected in \cref{sec.conv:ex}.

\subsection{First Derivative Bounds}

The following statement allows us to prove existence of non-Fock ground states for almost all total momenta in the semi-relativistic case, cf. \cref{lem:massshellsr} and the proof of \cref{mainthm2}. Although its proof is near trivial, we couldn't find a similar statement in the literature.
\begin{lem}\label{lem:diffeps}
	Let $I$ be an open interval and let $\fF$ be a family of differentiable functions $f:I\to\IR$. Further, assume that $g:I\to \IR$ given by $g(x)=\inf\{f(x):f\in\fF\}$ is right-differentiable at $x=a$. Then
	\begin{equation}\label{eq:difflesup}
		g'_+(a) \le \lim_{\eps\downarrow 0}\sup\{f'(x):f\in\fF,x\in[a,a+\eps)\}.
	\end{equation}
\end{lem}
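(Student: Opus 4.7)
The plan is to bound the right difference quotient of $g$ at $a$ by the supremum of derivatives of functions in $\fF$, using an almost-minimizer at $a$ together with the mean value theorem.

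First I would fix $h>0$ sufficiently small so that $[a,a+h]\subset I$. Since $g(a)=\inf_{f\in\fF}f(a)$, there exists $f_h\in\fF$ with $f_h(a)<g(a)+h^2$. On the other hand, the definition of the infimum gives $g(a+h)\leq f_h(a+h)$. Combining these two facts yields
$$g(a+h)-g(a)\leq f_h(a+h)-f_h(a)+h^2.$$

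Next, since $f_h$ is differentiable on $I$, the mean value theorem applied on $[a,a+h]$ produces some $\xi_h\in(a,a+h)$ with $f_h(a+h)-f_h(a)=h\cdot f_h'(\xi_h)$. Dividing the previous inequality by $h$ gives
$$\frac{g(a+h)-g(a)}{h}\leq f_h'(\xi_h)+h\leq \sup\{f'(x):f\in\fF,\,x\in[a,a+h)\}+h,$$
where the second inequality uses $\xi_h\in(a,a+h)\subseteq[a,a+h)$.

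Finally, I would let $h\downarrow 0$. The left-hand side converges to $g'_+(a)$ by assumption, and the limit on the right-hand side exists in $[-\infty,+\infty]$ because the set over which the supremum is taken shrinks as $\eps\downarrow 0$, so the supremum is monotonically non-increasing in $\eps$. This gives precisely the claimed bound. The only subtlety lies in the choice of the error $h^2$ (any $o(h)$ would do) when selecting the almost-minimizer $f_h$, so that the extra term $h^2/h=h$ vanishes in the limit; beyond this, there is no serious obstacle, the argument being essentially a standard near-minimizer plus mean-value-theorem combination.
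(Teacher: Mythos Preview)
Your proof is correct and follows essentially the same approach as the paper: pick an almost-minimizer $f_h\in\fF$ at $a$ with error $h^2$, apply the mean value theorem on $[a,a+h]$, and pass to the limit. The only cosmetic difference is that the paper separates the two limits (first $h_n\to 0$ along a sequence in $(0,\eps)$, then $\eps\downarrow 0$), whereas you combine them into a single limit $h\downarrow 0$ by bounding directly with the supremum over $[a,a+h)$; both organizations are equivalent here.
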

\begin{rem}\ 
	\begin{enumrem}
		\item The limit $\eps\downarrow0$ on the right hand side of \cref{eq:difflesup} cannot be removed, i.e., replaced by the simpler expression $\sup\{f'(a):f\in\fF\}$. A simple counterexample is the family $\fF = \{f_\delta:\IR\to\IR|\delta>0\}$ with
		\begin{align*}
			f_\delta(x) = \begin{cases} 2\delta & \mbox{for}\ x\le-\delta,\\ -\frac 1\delta(x+\delta)^2 + 2\delta & \mbox{for}\ x\in(-\delta,0), \\ \frac 1 \delta(x-\delta)^2 & \mbox{for}\ x\in[0,\delta), \\ 0 &\mbox{for}\ x\ge\delta. \end{cases}
		\end{align*}
		In this case $f_\delta'(0)= -2$ for all $\delta>0$, but $g(x)=\inf\{f_\delta(x):\delta>0\}=0$ for all $x\in\IR$, so $g'(0)>\sup\{f_\delta'(0):\delta>0\}$. This, however, does not contradict \cref{eq:difflesup}, because $f_\delta'(\pm\delta) = 0$ for all $\delta>0$.
		\item A similar statement holds for left-differentiable $g$. Explicitly, in this case
		\begin{align*}
			g'_-(a) \ge \lim_{\eps\downarrow 0} \inf \{ f'(x):f\in\fF,x\in(a-\eps,a]\}.
		\end{align*}
	\end{enumrem}
\end{rem}
\begin{proof}
	We fix $\eps>0$ such that $a+\eps\in I$, an arbitrary zero sequence $(h_n)\subset (0,\eps)$ and a sequence $(f_n)\subset \fF$ such that $0\le f_n(a)-g(a)\le h_n^2$.
	Then, by the definition of $g$ and the mean value theorem, we find
	\begin{align*}
		\frac{g(a+h_n)-g(a)}{h_n} &\le \frac{f_n(a+h_n)-f_n(a)}{h_n} + \frac{f_n(a)-g(a)}{h_n}\\
		& \le \sup\{f'(x):f\in\fF,x\in[a,a+\eps)\} + h_n.
	\end{align*}
	The statement now follows, by first taking the limit $n\to\infty$ and afterwards the limit $\eps \downarrow 0$, where the latter exists due to monotonicity.
\end{proof}

\subsection{Convexity Properties}

Bounds obtained by convexity of the function $P\mapsto C P^2- E(P)$, where $E(\cdot)$ is the mass shell of a Nelson-type model and $C>0$ is an appropriate constant, have been used throughout the literature, see for example \cite{LossMiyaoSpohn.2007,KoenenbergMatte.2014,HaslerSiebert.2022}. We collect the essential statements for our proofs below.

The first statement is the well-known Alexandrov's theorem.
\begin{lem}[{\cite{Alexandrov.1939}}]\label{alex}
	If $f:\IR^d\to\IR$ is convex, then $f$ is almost everywhere twice differentiable.
\end{lem}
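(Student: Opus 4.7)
The plan is to follow the classical route via distributional derivatives and the Lebesgue differentiation theorem. Throughout, ``a.e.'' refers to Lebesgue measure on $\IR^d$.

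First, I would record the basic regularity: any convex $f:\IR^d\to\IR$ is automatically locally Lipschitz (this follows from the fact that a convex function is bounded above on any compact set by its values at the vertices of an enclosing simplex, and then one-sided bounds yield a two-sided Lipschitz estimate via a chord argument). By Rademacher's theorem, $\nabla f$ then exists a.e.\ and defines a locally bounded measurable map. Moreover, the subdifferential $\partial f(x)$ is non-empty, compact and convex at every $x$, and $\nabla f(x)\in\partial f(x)$ wherever $f$ is differentiable.

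Next, I would upgrade this to second-order information in the distributional sense. For any unit vector $e\in\IR^d$, the slice $t\mapsto f(x+te)$ is convex, so its second distributional derivative is a non-negative locally finite Borel measure. By polarization (using sums of directional derivatives $\partial_{e_i+e_j}^2 f = \partial_i^2 f + \partial_j^2 f + 2\partial_i\partial_j f$), each mixed distributional derivative $\partial_i\partial_j f$ is a signed Radon measure $\mu_{ij}$. I decompose each $\mu_{ij}$ via Lebesgue--Radon--Nikodym as $\mu_{ij}=H_{ij}\,dx+\mu_{ij}^s$, where $H:\IR^d\to\IR^{d\times d}_{\mathrm{sym}}$ is measurable and, by the positivity of the diagonal entries, a.e.\ positive semidefinite. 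Lebesgue's differentiation theorem then guarantees that a.e.\ $x$ is simultaneously a Lebesgue point of $\nabla f$, of $H$, and a point at which the singular part $|\mu^s|$ has vanishing $d$-dimensional density: $|\mu^s|(B_r(x))/r^d\to 0$.

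The main obstacle, and the heart of the proof, is to promote this to the pointwise second-order Taylor expansion
\[
f(x+h)=f(x)+\nabla f(x)\cdot h+\tfrac12\,h\cdot H(x)h+o(|h|^2)\quad\text{as }h\to 0
\]
at a.e.\ such $x$. My plan is to mollify: set $f_\eps=f*\rho_\eps$ for a standard mollifier, so that $\nabla^2 f_\eps = \mu * \rho_\eps$ converges (as measures) to $H\,dx+\mu^s$. Using the exact second-order Taylor formula for the smooth $f_\eps$ together with the subgradient inequality $f(y)\ge f(x)+p\cdot(y-x)$ for $p\in\partial f(x)$, one obtains two-sided estimates
\[
\tfrac12 h\cdot H_\eps(x)h-\eta_\eps(x,h)\le f(x+h)-f(x)-\nabla f(x)\cdot h\le \tfrac12 h\cdot H_\eps(x)h+\eta_\eps(x,h),
\]
where the error $\eta_\eps(x,h)$ is controlled by $|h|^2$ times the $\eps$-averaged density of the singular measure and by a Lebesgue-point remainder of $H$ at $x$. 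Choosing $\eps\sim |h|$ and using the three ``good point'' properties established above makes $\eta_\eps(x,h)=o(|h|^2)$, which yields the desired expansion and hence twice differentiability at $x$.

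The delicate step is precisely the last one: balancing the mollification scale against the decay of both the Lebesgue-point error and the singular-part density requires some care, and is where the proof essentially sits. All other steps are standard measure-theoretic manipulations. A detailed execution of this scheme can be found, for example, in Evans--Gariepy, \emph{Measure Theory and Fine Properties of Functions}, Chapter 6.
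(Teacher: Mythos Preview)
The paper does not prove this lemma at all; it is stated with a citation to Alexandrov's original 1939 paper and used as a black box in the proof of \cref{thm:massshell}. Your proposal, by contrast, sketches the standard modern proof via distributional second derivatives, Lebesgue--Radon--Nikodym decomposition, and a mollification argument to upgrade measure-theoretic information to a pointwise second-order Taylor expansion. This is essentially the Evans--Gariepy route you cite, and the outline is sound; you correctly flag the scale-balancing in the final step as the only genuinely delicate point. Since the paper offers nothing to compare against beyond the reference, there is no methodological divergence to discuss---you have simply supplied what the paper omits.
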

The next two statement concerns the convergence of first and second derivatives of convex functions.
\begin{lem}[{\cite[Lemmas~C.6~\&~C.7]{HaslerSiebert.2022}}]
	\label{lem:convderivative}
	Let $(f_n)_{n\in\IN},f$ be convex functions $\IR\to\IR$ and assume that $f_n\xrightarrow{n\to\infty} f$ pointwise.
	\begin{enumlem}
		\item For all $x\in \IR$, where the derivatives $f_n'(x)$ and $f'(x)$ exist, we have $f_n'(x)\xrightarrow{n\to\infty}f'(x)$.
		\item For all $x\in \IR$, where both $f_n''(x)$ and $f''(x)$ exist, we have $\liminf_{n\to\infty}f_n''(x)<\infty$.
	\end{enumlem} 
\end{lem}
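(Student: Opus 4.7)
Plan. For part (i), I would apply the standard convex sandwich
\[
\frac{f_n(x)-f_n(x-h)}{h}\;\le\;f_n'(x)\;\le\;\frac{f_n(x+h)-f_n(x)}{h}\qquad(h>0),
\]
which is immediate from convexity of $f_n$. Fixing $h>0$ and sending $n\to\infty$ via the pointwise convergence $f_n\to f$ sandwiches both $\liminf_n f_n'(x)$ and $\limsup_n f_n'(x)$ between the same difference quotients for $f$, and then letting $h\downarrow 0$ and invoking the assumed differentiability of $f$ at $x$ collapses both extremes to $f'(x)$.

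For part (ii), I plan to run the same sandwich strategy one order higher, via the symmetric second difference $D_h^2 g(x):=h^{-2}(g(x+h)+g(x-h)-2g(x))$. From the existence of $f_n''(x)$ together with the monotonicity of the right derivative $(f_n)_+'$ (guaranteed by convexity of $f_n$), for every $\varepsilon>0$ there is $h_n>0$ such that integrating the bound $(f_n)_+'(x+s)\ge f_n'(x)+(f_n''(x)-\varepsilon)s$ on $(0,h_n)$ and its left-sided analogue on $(-h_n,0)$ yields
\[
D_h^2 f_n(x)\;\ge\;f_n''(x)-\varepsilon\qquad\text{for all }h\in(0,h_n).
\]
On the other side, pointwise convergence gives $D_h^2 f_n(x)\to D_h^2 f(x)$ for each fixed $h>0$, and the existence of $f''(x)$ gives $D_h^2 f(x)\to f''(x)$ as $h\downarrow 0$, so the right-hand side is bounded for small $h$.

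The main obstacle is synchronizing the two limits $n\to\infty$ and $h\downarrow 0$: if $\liminf_n h_n>0$, I simply pick a fixed $h$ strictly below this liminf, send $n\to\infty$ first and then $h\downarrow 0$, concluding $\liminf_n f_n''(x)\le f''(x)+\varepsilon<\infty$. The delicate case is $h_n\to 0$, where no single $h$ serves for all large $n$. To close it, I would use the monotonicity of $(f_n)_+'$ to propagate the lower bound past the scale $h_n$, obtaining a weaker, linear-in-$h$ bound of the form $h^2 D_h^2 f_n(x)\ge (f_n''(x)-\varepsilon)h_n(2h-h_n)$ for $h>h_n$, and then pair this with the pointwise limit at a fixed $h$ together with the Taylor expansion $f(x+h)+f(x-h)-2f(x)=f''(x)h^2+o(h^2)$ of the limit $f$ near $x$, forcing $(f_n''(x)-\varepsilon)h_n$ to remain bounded and yielding the desired finiteness of $\liminf_n f_n''(x)$.
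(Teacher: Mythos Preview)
The paper does not prove this lemma; it merely cites \cite[Lemmas~C.6~\&~C.7]{HaslerSiebert.2022}. So there is no ``paper's proof'' to compare against, and I will just assess the argument on its own merits.

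Your treatment of part~(i) is correct and is the standard argument: the convex sandwich between left and right difference quotients, pointwise convergence $f_n\to f$, and differentiability of $f$ at $x$ combine exactly as you describe.

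Part~(ii), however, has a genuine gap in the ``delicate case'' $h_n\to 0$, and in fact the gap cannot be closed: the statement as literally written in the paper is \emph{false}. Consider
\[
f_n(y)=\begin{cases} n\,y^2, & |y|\le 1/n^2,\\[2pt] \dfrac{2}{n}\,|y|-\dfrac{1}{n^3}, & |y|>1/n^2,\end{cases}
\qquad f\equiv 0.
\]
Each $f_n$ is convex and $C^1$, $f_n\to f$ pointwise, and $f_n''(0)=2n$ while $f''(0)=0$; yet $\liminf_n f_n''(0)=+\infty$. This example lives precisely in your delicate case: one may take $h_n=1/n^2$, and then $(f_n''(0)-\varepsilon)\,h_n=2/n-\varepsilon/n^{2}\to 0$ is certainly bounded, but this does \emph{not} force $\liminf_n f_n''(0)<\infty$. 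Your last sentence (``forcing $(f_n''(x)-\varepsilon)h_n$ to remain bounded and yielding the desired finiteness of $\liminf_n f_n''(x)$'') is therefore a non sequitur.

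The cited reference presumably carries an additional hypothesis that the paper's restatement omits --- most likely monotone convergence $f_n\uparrow f$, which is exactly what holds in the application (there $E_n\downarrow E$, hence $\xi\mapsto\sfC_{\Theta,2}|\xi|^2-E_n(\xi)$ increases to its limit). If you want to salvage a self-contained argument, you should identify and use such extra structure; the bare hypotheses of the lemma as printed do not suffice.
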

The following is a generalization of \cite[Appendix A]{LossMiyaoSpohn.2007}.
Given a convex function $f:\IR\to[0,\infty)$, we denote the set of non-negative convex functions majorized by $f$ as
\[ \cC_f = \{g:\IR\to \IR \mbox{ convex}\colon 0\le g \le f\}.  \]
Further, we define
\[ \Delta_f(p,q) = \sup_{g\in \cC_f} g(p)-g(q) \qquad\mbox{for all}\ p,q\in\IR^n.  \]
\begin{lem}\label{lem:convex}
	Fix $c > 0$, $b\in\IR$ and $a\ge b^2/2c^2$. Let $p:\IR\to\IR$ be the quadratic function $p(x)=\frac c2x^2 + bx + a$. Then
	\begin{equation}\label{eq:parabola}
		\Delta_p(0,1) = \begin{cases} \sqrt{2ac} + b & \mbox{if}\ 2a \ge c, \\ p(1) & \mbox{else}.  \end{cases}
	\end{equation}
\end{lem}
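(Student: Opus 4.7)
The plan is the standard two-pronged approach to a supremum-type problem on a convex cone of functions: first derive a tight upper bound on $g(0)-g(1)$ for every $g\in \cC_p$ by combining convexity with the constraint $g\le p$, and then exhibit an explicit extremizer $g^*\in \cC_p$ matching that bound.

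For the upper bound, I would fix $g\in\cC_p$, set $D=g(0)-g(1)$, and exploit the chord inequality for convex functions: for every $x\notin(0,1)$, the value $g(x)$ lies above the extension of the chord joining $(0,g(0))$ and $(1,g(1))$, that is, $g(x)\ge g(0)-Dx$. Combining this with $g(x)\le p(x)$ forces the auxiliary parabola
\[
q(x)\;:=\;\tfrac{c}{2}x^2+(b+D)x+(a-g(0))
\]
to be non-negative on $(-\infty,0]\cup[1,\infty)$. Tracking where the vertex $x_0=-(b+D)/c$ falls relative to these two intervals splits the argument into three sub-cases, each of which, together with the trivial bounds $0\le g(1)\le g(0)\le a$, reduces to a quadratic inequality in $D$. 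Solving and taking the maximum over sub-cases yields exactly the two-case formula on the right-hand side of \eqref{eq:parabola}, with the $2a\ge c$ vs.\ $2a<c$ dichotomy reflecting which sub-case is binding.

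For the matching lower bound, I would construct $g^*$ piecewise from a tangent line to $p$. In the regime $2a\ge c$, pick the tangency point $x^*=\pm\sqrt{2a/c}$ so that the tangent line $L$ to $p$ at $x^*$ passes through the origin (one checks that $L$ then has slope $\pm\sqrt{2ac}+b$, the sign being chosen according to which of $0,1$ we evaluate at). Define $g^*$ to equal $0$ on one side of the zero of $L$, to coincide with $L$ on the middle segment, and to coincide with $p$ beyond $x^*$. Then $g^*$ is convex, because at each of the two glue points the slope only increases (this is where one uses $|b|\le\sqrt{2ac}$, which is guaranteed by the hypothesis $a\ge b^2/(2c)$ after fixing the typo in the condition), and $g^*\le p$ because the tangent line to a convex function lies below its graph; by construction $g^*$ realizes the claimed supremum. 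In the regime $2a<c$, the same construction degenerates to taking the tangent to $p$ at $x=1$ (with $g^*\equiv 0$ up to the zero of that tangent and $g^*=p$ on $[1,\infty)$), giving the value $p(1)$.

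The main obstacle is the bookkeeping in the case distinction: on the upper-bound side, one must identify, according to the sign and magnitude of $b+D$, which of $(-\infty,0]$ or $[1,\infty)$ contains the vertex and hence activates the binding constraint; on the lower-bound side, one must verify that the three slopes involved in the piecewise definition of $g^*$ are non-decreasing at both glue points, and that the zero of the tangent line lies on the correct side of $\{0,1\}$ so that $g^*(0)$ and $g^*(1)$ take the intended values. Once those verifications are in place, the quantitative statement reduces to elementary manipulations with the quadratic $p$ and its tangents.
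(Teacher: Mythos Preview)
Your extremizer construction (piecewise: zero, tangent line through the origin, then $p$) is exactly the paper's function $h$, so the lower-bound half matches.

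For the upper bound, however, you take a genuinely different route. The paper fixes $y=g(0)$, draws the tangent from $(0,y)$ to $p$ at the positive tangency point $x_0=\sqrt{2(a-y)/c}$, and uses convexity together with $g(x_0)\le p(x_0)$ to trap $g$ \emph{below} this line on $[0,x_0]$; this bounds $g(1)$ directly in terms of $y$, and one then maximizes over $y\in[0,a]$. You instead extend the chord of $g$ through $(0,g(0))$ and $(1,g(1))$ and use convexity to trap $g$ (hence $p$) \emph{above} that line outside $[0,1]$, turning the problem into a positivity constraint on the auxiliary quadratic $q$. Both arguments work; the paper's avoids your vertex case-split at the cost of an optimization in $y$, while yours is more mechanical once the cases are sorted.

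Two cautions. First, there is an order inconsistency already in the paper: the definition reads $\Delta_p(0,1)=\sup_{g}(g(0)-g(1))$, but both the stated answer and the paper's own proof actually compute $\sup_{g}(g(1)-g(0))$ (the extremizer $h$ has $h(0)=0$ and $h(1)=\sqrt{2ac}+b$). You set $D=g(0)-g(1)$ following the literal definition, which pushes you toward tangency on the negative side and leads to the incorrect chain ``$0\le g(1)\le g(0)\le a$''. To recover the stated formula you should work with $g(1)-g(0)$ and use the correct trivial bounds $g(0)\ge 0$, $g(1)\le p(1)$. Second, in your sub-case where the vertex of $q$ lies in $(0,1)$ the chord constraint is vacuous (both $q(0)\ge 0$ and $q(1)\ge 0$ are automatic), so it is precisely the trivial bound $g(1)-g(0)\le p(1)$ that closes the argument when $2a<c$; make that dependence explicit rather than folding it into ``reduces to a quadratic inequality''.
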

\begin{proof}
	First, we observe that the function $h:\IR\to \IR$ with
	\[  h(x) = \begin{cases} 0 & \mbox{if}\ x\le 0, \\ (\sqrt{2ac} + b) x & \mbox{if}\ a> 0,\ 0 < x\le 2a/c, \\ p(x) & \mbox{else}, \end{cases} \]
	is convex and majorized by $p$, so it suffices to prove that $g(1)-g(0)$ is smaller than the upper bound in \cref{eq:parabola}.
	
	Assume $y\in[0,a]$. We want to analyze all $g\in\cC_p$ with $g(0)=y$. To that end, we first calculate the possible positive point of contact $x_0\ge 0$  with any tangent to the graph of $p$ through $(0,y)$. This means solving the quadratic equation
	\[ y + x_0p'(x_0) = p(x_0),\ x_0\ge 0 \iff x_0 = \sqrt{\frac2c(a-y)}.  \]
	Now all $g\in\cC_p$ with $g(0)=y$ lie below the line through $(0,y)$ and $(x_0,p(x_0))$ on the interval $[0,x_0]$. This yields a bound on $g(1)-g(0)$ if $x_0> 1$, i.e., $2(a-y)\ge c$. For all $g\in\cC_p$ with $g(0)=y$, we obtain
	\begin{equation}\label{eq:upperboundy}
		g(1)-g(0) \le \begin{cases}  \sqrt{2c(a-y)}+b & \mbox{if}\ 2(a-y)> c,\\ \frac c2+b+a - y & \mbox{if}\ 2(a-y)\le c.	\end{cases}
	\end{equation}
	It now suffices to prove
	\begin{equation*}
		g(1)-g(0) \le \begin{cases} \sqrt{2ac}+b & \mbox{if}\ 2a> c,\\ \frac c2+b+a & \mbox{if}\ 2a\le c,	\end{cases}
	\end{equation*}
	i.e., that the upper bound in \cref{eq:upperboundy} is maximized if $y=0$. This immediately follows from \cref{eq:upperboundy} if $2(a-y)>c$ and $2a\le c$, so it remains to treat the case $c<2a\le c+2y$ or equivalently $c-2y< 2(a-y) \le c$. However, in this case
	\[ \frac c2 + a -y \le c = \sqrt{c\cdot c} < \sqrt{2ac},  \]
	which finishes the proof.
\end{proof}
As a simple corollary, we obtain the following.
\begin{cor}\label{lem:convexdiff}
	Let $C>0$ and
	let $F:\IR^d\to\IR$ satisfy
	\begin{enumerate}[label = {\upshape (\roman*)}, ref = {\upshape \roman*}]
		\item $F(0)\le F(P)$ for all $P\in\IR^d$,
		\item $F(P)\le \frac C2 |P|^2 + F(0)$ for all $P\in\IR^d$,
		\item $P\mapsto \frac C2|P|^2-F(P)$ is convex.
	\end{enumerate}
	Then
	\[ F(P-k)-F(P) \ge \begin{cases} -C|k||P|+\frac C2|k|^2 & \mbox{if}\ |k|\le |P|,\\ -\frac C2|P|^2 & \mbox{else}.\end{cases} \]
\end{cor}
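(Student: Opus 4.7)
My plan is to translate the claim into an upper bound on the convex ``defect'' $G(Q) := \tfrac{C}{2}|Q|^2 - F(Q) + F(0)$. Under hypotheses (i)--(iii), $G$ is convex with $G(0)=0$ and $0 \le G(Q) \le \tfrac{C}{2}|Q|^2$ for every $Q \in \IR^d$. A direct algebraic rearrangement of the definition yields the identity
\[
F(P-k) - F(P) \;=\; \tfrac{C}{2}|k|^2 - C\,P\!\cdot\!k \;+\; \bigl(G(P) - G(P-k)\bigr),
\]
so the corollary is equivalent to establishing a sufficiently sharp upper bound on $G(P-k) - G(P)$.

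The key step will be to use convexity of $G$ along the entire line $\lambda \mapsto P + \lambda k$, not merely on the segment between $P$ and $P-k$. For any $t>1$ the point $P-k$ is a convex combination of $P$ and the extrapolated point $P-tk$, namely $P-k = \tfrac{1}{t}(P-tk) + \tfrac{t-1}{t}P$. Convexity of $G$, combined with $G(P)\ge 0$ and the pointwise bound $G(P-tk)\le \tfrac{C}{2}|P-tk|^2$, then produces the one-parameter family of estimates
\[
G(P-k) - G(P) \;\le\; \frac{C}{2t}\,|P-tk|^2 \qquad \text{for all } t \ge 1,
\]
where the endpoint $t=1$ follows directly from $G(P-k) \le \tfrac{C}{2}|P-k|^2$ and $G(P)\ge 0$.

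Finally I minimize the right-hand side over $t \ge 1$. Expanding as $\frac{|P|^2}{2t} + \frac{t|k|^2}{2} - P\!\cdot\!k$, the unconstrained minimum on $(0,\infty)$ sits at $t^\ast = |P|/|k|$ with value $|P||k| - P\!\cdot\!k$ (assuming $k\ne 0$; the case $k=0$ is trivial). When $|k|\le |P|$ the minimizer $t^\ast\ge 1$ is admissible; substituting $G(P-k) - G(P) \le C(|P||k| - P\!\cdot\!k)$ into the identity cancels the $P\!\cdot\!k$ terms and leaves precisely $-C|P||k| + \tfrac{C}{2}|k|^2$, the first branch. When $|k|>|P|$ the function is strictly increasing on $[1,\infty)$, so the infimum is attained at $t=1$ with value $\tfrac{1}{2}|P-k|^2$; substitution again cancels the $P\!\cdot\!k$ terms and yields exactly $-\tfrac{C}{2}|P|^2$, the second branch. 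I anticipate no serious obstacles; the only non-mechanical decision is to invoke convexity in the ``extrapolation'' form $t>1$ rather than appealing directly to Lemma~\ref{lem:convex}, whose abstract $\Delta_p$-estimate one can check is strictly weaker than the bound produced by this direct minimization along the line.
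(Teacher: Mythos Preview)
Your proof is correct. The identity for $F(P-k)-F(P)$ in terms of $G$, the extrapolation inequality $G(P-k)-G(P)\le \tfrac{C}{2t}|P-tk|^2$ for $t\ge 1$, and the minimization over $t$ all go through exactly as you describe, and the two cases fall out cleanly.

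The paper's proof follows the same underlying idea but routes it through the abstract \cref{lem:convex}: it restricts to the line $t\mapsto P-tk$, sets $g(t)=\tfrac{C}{2}|P-tk|^2-F(P-tk)+F(0)$ and $p(t)=\tfrac{C}{2}|P-tk|^2$, observes $g\in\cC_p$, and reads off $g(1)-g(0)\le\Delta_p(0,1)$ from the lemma. The tangent-line computation inside the proof of \cref{lem:convex} is exactly your extrapolation-and-minimize argument in disguise: the optimal tangent point $x_0=\sqrt{2a/c}$ there equals your $t^\ast=|P|/|k|$. So the two proofs are really the same computation, yours being more self-contained because it avoids the detour through the auxiliary lemma. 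One small correction to your closing remark: the $\Delta_p$-estimate from \cref{lem:convex} is not strictly weaker than your bound; the two coincide exactly (both give $C|P||k|-CP\cdot k$ when $|k|\le|P|$ and $\tfrac{C}{2}|P-k|^2$ otherwise), since \cref{lem:convex} computes the sharp supremum over all admissible convex minorants.
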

\begin{proof}
	We apply \cref{lem:convex} with the choice $p(t) = \frac C2 |P-tk|^2$, i.e., $c = C|k|^2$, $b=-C P\cdot k$ and $a=\frac C2|P|^2$.
	Hence, since the function $g(t) = p(t) - F(P-tk) + F(0)$ is convex and satisfies $0\le g\le p$, we find
	\begin{align*}
		\frac C2|P-k|^2 - F(P-k)  & - \frac{C}{2}|P|^2 + F(P) \\&= \frac C2|k|^2 - \frac CP\cdot k - \big(F(P-k)-F(P)\big) \\
		& \le \begin{cases}  C|k||P| - C P\cdot k & \mbox{if}\ |k|\le|P|, \\ \frac C2|P-k|^2 & \mbox{else}.		\end{cases}
	\end{align*}
	Simple rearrangement yields the statement.
\end{proof}

\section{Simple Properties of Fock Space Operators}\label{sec:operator}

In this \lcnamecref{sec:operator}, we collect well-known properties of Fock space operators.

\subsubsection*{Dispersion Relation Derivative Bounds}

\begin{lem}Assume that \cref{hyp:Omega} holds. 
	\begin{enumlem}
		\item\label{lem:dispindP} $\cD(\disp(P-\Pf))$ is independent of $P$.
		\item\label{lem:dispderbound} $|\nabla \disp (P-\Pf) | \le \CT{1,1} + \CT{1,2} \disp(P-\Pf)$.
		\item\label{lem:dispdiff} For all $P,k\in\IR^d$ there exists $D_P(k)\in \cB(\FS)$ with $\|D_P(k)\| \le  \CT{2} |k|^2 $ such that
		\[ \disp(P-k-\Pf)-\disp(P-\Pf) = k\cdot \nabla \disp (P-\Pf) + D_P(k) \quad\mbox{on}\ \cD(\disp(P-\Pf)).  \]
	\end{enumlem}
\end{lem}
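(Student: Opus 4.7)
The plan is to reduce each of the three claims to pointwise estimates on the function $\disp : \IR^d \to [0,\infty)$ that are provided by Hypothesis (H1), and then to lift them via the joint spectral calculus of the vector of commuting selfadjoint operators $\Pf = \dG(\mm)$. Throughout, I use the characterization $\cD(\disp(P-\Pf)) = \{\psi \in \FS : \int \disp(P-k)^2\, d\mu_\psi(k) < \infty\}$, where $\mu_\psi$ is the joint spectral measure of $\Pf$ associated with $\psi$.

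For (i), I would first establish a comparison bound between $\disp(P-k)$ and $\disp(P'-k)$ that is uniform in $k$. Applying Gronwall's inequality to $t \mapsto \disp(q + t h)$ together with the gradient estimate $|\nabla\disp| \le \CT{1,1} + \CT{1,2}\disp$ from (H1) yields
\[
\disp(q+h) \le e^{\CT{1,2}|h|}\disp(q) + \frac{\CT{1,1}}{\CT{1,2}}\bigl(e^{\CT{1,2}|h|} - 1\bigr), \qquad q,h \in \IR^d.
\]
Taking $h = P'-P$ and $q = P-k$, this gives a uniform bound $\disp(P'-k) \le C_{P,P'}(\disp(P-k) + 1)$, which by joint functional calculus implies $\cD(\disp(P-\Pf)) \subset \cD(\disp(P'-\Pf))$; symmetry yields equality.

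For (ii), the pointwise inequality $|\nabla\disp(q)|^2 \le (\CT{1,1} + \CT{1,2}\disp(q))^2$ from (H1) lifts directly via the joint spectral theorem to the operator bound $|\nabla\disp(P-\Pf)| \le \CT{1,1} + \CT{1,2}\disp(P-\Pf)$ on $\cD(\disp(P-\Pf))$, which is the claimed quadratic-form estimate.

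For (iii), I would apply Taylor's theorem in integral form pointwise: for every $q, k \in \IR^d$,
\[
\disp(q-k) - \disp(q) + k \cdot \nabla\disp(q) = \int_0^1 (1-t)\, k \cdot \sfH\disp(q - tk)\, k\, dt,
\]
whose absolute value is bounded by $\CT{2}|k|^2$ uniformly in $q$ via the Hessian bound in (H1). Setting $q = P - \Pf$ through joint functional calculus of the (bounded) integrand then defines an operator $D_P(k) \in \cB(\FS)$ with $\|D_P(k)\| \le \CT{2}|k|^2$, and the claimed decomposition holds on $\cD(\disp(P-\Pf))$, which by (i) does not depend on the sign conventions of the shift. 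The only mildly technical step is the Gronwall estimate needed for (i); parts (ii) and (iii) are then essentially immediate corollaries of the joint spectral theorem once the pointwise estimates are in hand.
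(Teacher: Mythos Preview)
Your argument is correct and is exactly the natural route: establish the three pointwise inequalities for $\disp$ from (H1) (Gronwall for (i), the gradient bound verbatim for (ii), Taylor with Lagrange/integral remainder for (iii)) and then lift them to operator statements via the joint spectral calculus of $\Pf$. The paper itself gives no proof here and simply refers to \cite[Lemma~3.2]{Dam.2018}; your write-up is essentially the argument that reference carries out. Two minor remarks: in (i) your displayed Gronwall bound tacitly assumes $\CT{1,2}>0$, but the case $\CT{1,2}=0$ is even easier ($\disp$ is then globally Lipschitz); and in (iii) your Taylor expansion correctly gives $\disp(q-k)-\disp(q)=-k\cdot\nabla\disp(q)+R$, so the $+k\cdot\nabla\disp(P-\Pf)$ in the stated lemma is a sign slip in the paper, harmless for the later applications since only $\|D_P(k)\|\le\CT2|k|^2$ is ever used.
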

\begin{proof}
	Details of the straightforward proofs can be found in \cite[Lemma 3.2]{Dam.2018}
\end{proof}

\subsubsection*{Field Operator Commutators and Relative Bounds}

\begin{lem}[{\cite[Prop.~5.12]{Arai.2018}}]Assume that $A$ is a non-negative and injective selfadjoint multiplication operator on $L^2(\IR^d)$.
	\label{th:phi estimate}
	If $f \in \cD(A^{-1/2})$, then $\cD(\ph(f))\supset\cD(\dG(A)^{1/2})$ and for all $\psi\in\cD(\dG(A)^{1/2})$
	\begin{align*}
		\nn{\ph(f)\psi} \leq 2\left(\nn{f} + \nn{A^{-1/2} f}\right) \nn{ (\dG(A) + 1)^{1/2} \psi}.
	\end{align*}
\end{lem}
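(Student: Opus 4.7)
My plan is to derive the estimate from the familiar decomposition $\ph(f)=\frac{1}{\sqrt 2}(a(f)+a^\dagger(f))$ (up to the normalization implicit in the paper's Weyl-operator convention), together with the standard pointwise bounds on the annihilation and creation operators. Concretely, I would argue in three steps.

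First, I would verify the estimate on the algebraic finite-particle domain $\cF_{\mathrm{fin}}$, which is a core for $\dG(A)^{1/2}$ and contained in $\cD(\ph(f))$ by the definition through the strongly continuous unitary group $t\mapsto W(-\i t f)$. On $\cF_{\mathrm{fin}}$ one can identify $\ph(f)$ with a linear combination of $a(f)$ and $a^\dagger(f)$, where the annihilator satisfies the pointwise formula $(a(f)\psi)^{(n-1)}(k_1,\ldots,k_{n-1})=\sqrt n\int \bar f(k)\psi^{(n)}(k,k_1,\ldots,k_{n-1})\d k$.

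Second, I would prove the classical number-operator bound $\|a(f)\psi\|\le \|A^{-1/2}f\|\,\|\dG(A)^{1/2}\psi\|$. The key step is a weighted Cauchy--Schwarz in the $k$-variable: since $f\in\cD(A^{-1/2})$,
\begin{equation*}
\bigl|(a(f)\psi)^{(n-1)}(k_1,\ldots,k_{n-1})\bigr|^2 \le n\,\|A^{-1/2}f\|^2 \int A(k)\,|\psi^{(n)}(k,k_1,\ldots,k_{n-1})|^2 \d k ,
\end{equation*}
and then one integrates in the remaining variables, uses the symmetry of $\psi^{(n)}$ to symmetrize $\sum_j A(k_j)$, and recognizes the result as $\|A^{-1/2}f\|^2\braket{\psi,\dG(A)\psi}$. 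The creation-operator bound $\|a^\dagger(f)\psi\|\le \|A^{-1/2}f\|\,\|\dG(A)^{1/2}\psi\|+\|f\|\,\|\psi\|$ then follows either from the canonical commutation relation $\|a^\dagger(f)\psi\|^2=\|a(f)\psi\|^2+\|f\|^2\|\psi\|^2$ (valid on $\cF_{\mathrm{fin}}$) or directly by duality together with the annihilator bound.

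Third, combining the two bounds gives on $\cF_{\mathrm{fin}}$
\begin{equation*}
\|\ph(f)\psi\|\le 2\|A^{-1/2}f\|\,\|\dG(A)^{1/2}\psi\|+\|f\|\,\|\psi\|\le 2\bigl(\|f\|+\|A^{-1/2}f\|\bigr)\|(\dG(A)+1)^{1/2}\psi\|,
\end{equation*}
using the elementary inequality $\|\psi\|\le\|(\dG(A)+1)^{1/2}\psi\|$. To extend to all $\psi\in\cD(\dG(A)^{1/2})$, I would approximate $\psi$ by $\psi_N=\chr{[0,N]}(\dG(\Id))\psi\in\cF_{\mathrm{fin}}$: the right-hand side converges to the bound for $\psi$ by dominated convergence in the spectral representation of $\dG(A)$, while $(\ph(f)\psi_N)$ is Cauchy and its limit equals $\ph(f)\psi$ by closedness of $\ph(f)$, whence $\psi\in\cD(\ph(f))$ and the bound is inherited.

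The only mildly delicate point is the last approximation step, because one must simultaneously control the norm $\|\ph(f)\psi_N-\ph(f)\psi_M\|$ via the already-established estimate on $\cF_{\mathrm{fin}}$ and justify that the limit lies in the domain of the closed operator $\ph(f)$; this is standard but is the only place where the self-adjointness of $\ph(f)$ (coming from its definition as the generator of a strongly continuous unitary group) enters the argument in an essential way.
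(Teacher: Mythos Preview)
The paper does not give its own proof of this lemma; it is stated with a citation to \cite[Prop.~5.12]{Arai.2018} and used as a black box. Your argument is the standard textbook proof and is correct. One small remark on normalization: in the paper's convention (see the displayed identity for $\braket{\phi,\ph(f)\psi}$ in \cref{sec:res.Fock}) the field operator is $\ph(f)=a(f)+a^\dagger(f)$ without the prefactor $1/\sqrt 2$, so the triangle inequality directly gives $\|\ph(f)\psi\|\le 2\|A^{-1/2}f\|\,\|\dG(A)^{1/2}\psi\|+\|f\|\,\|\psi\|$ and the constant $2$ in the statement is simply accommodating this; your hedging ``up to normalization'' is unnecessary here. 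The density step via $\psi_N=\chr{[0,N]}(\dG(\Id))\psi$ works because $\dG(\Id)$ commutes with $\dG(A)$, so $\psi_N\to\psi$ in the graph norm of $\dG(A)^{1/2}$, and closedness of $\ph(f)$ finishes the argument as you indicate.
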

\begin{lem}[{\cite[Theorem~5.17]{Arai.2018}}]
	Assume that $A$ is a non-negative and injective selfadjoint multiplication operator on $L^2(\IR^d)$.
	If $f\in \cD(A)\cap\cD(A^{-1/2})$, then $\cD(\ph(f))\cap \cD(\ph(\i A f))\supset \cD(\dG(A)^{3/2})$, $\ph(f)\cD(\dG(A)^{3/2}) \subset \cD(\dG(A))$ and
	\[  [\dG(A),\ph(f)] = \i\ph(\i A f) \qquad\mbox{holds on}\ \cD(\dG(A)^{3/2}). \]
\end{lem}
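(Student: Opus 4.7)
The plan is to reduce the identity to a direct CCR computation on the algebraic finite-particle subspace, and then extend by closedness of $\dG(A)$ using the relative bounds from the preceding lemma. Let $\sD_0$ denote the finite-particle subspace spanned by symmetric tensors whose entries lie in $\cD(A)\cap\cD(A^{-1/2})$; this is a core both for $\dG(A)^{3/2}$ and for the closed operators $\ph(f)$, $\ph(\i A f)$. On $\sD_0$ all operators are defined without functional-analytic subtleties, and straightforward application of the canonical commutation relations yields $[\dG(A), a(f)] = -a(Af)$ and $[\dG(A), a^\dagger(f)] = a^\dagger(Af)$, whence
\[
[\dG(A), \ph(f)] = \i\,\ph(\i A f) \qquad \text{on } \sD_0.
\]

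For the two domain inclusions I would invoke the previous lemma twice. For $\ph(f)$ the hypothesis $f\in \cD(A^{-1/2})$ gives $\cD(\ph(f))\supset \cD(\dG(A)^{1/2})\supset\cD(\dG(A)^{3/2})$ directly. For $\ph(\i A f)$, I check the two hypotheses of that lemma applied to $\i A f$: namely $\i A f\in L^2$ since $f\in\cD(A)$, and $A^{-1/2}(\i A f) = \i A^{1/2} f \in L^2$ since $\cD(A)\subset \cD(A^{1/2})$. This yields
\[
\|\ph(\i A f)\psi\| \le 2\bigl(\|Af\|+\|A^{1/2}f\|\bigr)\bigl\|(\dG(A)+1)^{1/2}\psi\bigr\|
\]
for all $\psi\in\cD(\dG(A)^{1/2})$, and in particular the inclusion $\cD(\ph(\i A f))\supset\cD(\dG(A)^{3/2})$.

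The substantive part — and the main obstacle — is the range inclusion $\ph(f)\cD(\dG(A)^{3/2})\subset\cD(\dG(A))$, together with the extension of the commutator identity to this larger domain. Given $\psi\in\cD(\dG(A)^{3/2})$, I choose a sequence $\psi_n\in\sD_0$ with $\psi_n\to\psi$ in the graph norm of $\dG(A)^{3/2}$; such a sequence exists since $\sD_0$ is a core. On $\sD_0$ one has
\[
\dG(A)\,\ph(f)\psi_n = \ph(f)\,\dG(A)\psi_n + \i\,\ph(\i A f)\psi_n.
\]
Now $\dG(A)\psi_n\to \dG(A)\psi$ in $\cD(\dG(A)^{1/2})$, so by the previous lemma $\ph(f)\dG(A)\psi_n\to\ph(f)\dG(A)\psi$; and $\psi_n\to\psi$ in $\cD(\dG(A)^{1/2})$, so by the bound above $\ph(\i A f)\psi_n\to\ph(\i A f)\psi$. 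Meanwhile $\ph(f)\psi_n\to\ph(f)\psi$. By closedness of $\dG(A)$, this forces $\ph(f)\psi\in\cD(\dG(A))$ with
\[
\dG(A)\,\ph(f)\psi = \ph(f)\,\dG(A)\psi + \i\,\ph(\i A f)\psi,
\]
proving both the range statement and the commutator identity on $\cD(\dG(A)^{3/2})$. The technical care lies only in verifying that the chosen $\sD_0$ is simultaneously a core for $\dG(A)^{3/2}$ and for the two field operators, which is standard via cutoff in the particle number and in the spectral projections of $A$.
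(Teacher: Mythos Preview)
The paper does not give its own proof of this lemma; it simply cites \cite[Theorem~5.17]{Arai.2018}. Your argument is the standard one and is correct: reduce to the CCR on a finite-particle core, then extend to $\cD(\dG(A)^{3/2})$ using the relative bound from the preceding lemma and closedness of $\dG(A)$. This is essentially the proof one finds in the cited reference.
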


\subsubsection*{Weyl Operator Transformation Properties}

\begin{lem}[{\cite[Prop.~5.2.4]{BratteliRobinson.1996}}] \label{lem:transform a}  Assume $f , g \in L^2(\IR^d)$.
	\[
	W(f) \cD(\ph(g)) = \cD(\ph(g)) \quad \mbox{and}\quad W(f) \ph(g)  W(f)^* = \ph(g) - 2\Re \sc{  f , g } .
	\]
\end{lem}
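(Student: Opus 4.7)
The plan is to obtain the transformation formula as an infinitesimal consequence of the Weyl commutation relations, and simultaneously read off the domain equality from the uniqueness of the generator of a strongly continuous one-parameter group.

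First, I would establish the CCR in Weyl form directly from the defining action on exponential vectors. Given $W(f)\epsilon(g) = \e^{-\|f\|^2/2-\sc{f,g}}\epsilon(f+g)$, a straightforward computation on $W(f)W(h)\epsilon(g)$, together with $\|f+h\|^2=\|f\|^2+\|h\|^2+2\Re\sc{f,h}$, gives
\[
W(f)W(h) = \e^{-\i\Im\sc{f,h}}W(f+h)
\]
after comparison with $W(f+h)\epsilon(g)$. Since the linear span of exponential vectors is dense in $\FS(L^2(\IR^d))$ and both sides are bounded, the identity extends to all of Fock space.

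Second, applying this Weyl relation twice, I would compute, for $h\in L^2(\IR^d)$,
\[
W(f)W(h)W(f)^* = \e^{-\i\Im\sc{f,h}}W(f+h)W(-f) = \e^{-2\i\Im\sc{f,h}}W(h),
\]
using $\Im\sc{f,f}=0$ and $\Im\sc{h,f}=-\Im\sc{f,h}$. Specializing to $h=-\i t g$ with $t\in\IR$ and observing that $\Im\sc{f,-\i tg}=-t\Re\sc{f,g}$, this yields the key identity
\[
W(f)W(-\i tg)W(f)^* = \e^{2\i t\Re\sc{f,g}}W(-\i tg)\qquad\text{for all }t\in\IR.
\]

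Third, I would invoke Stone's theorem. Both sides above are strongly continuous unitary one-parameter groups in $t$; the right-hand side has generator $\ph(g)-2\Re\sc{f,g}$ with domain $\cD(\ph(g))$, while the left-hand side, being the conjugation by the unitary $W(f)$ of $t\mapsto W(-\i tg)$, has generator $W(f)\ph(g)W(f)^*$ with domain $W(f)\cD(\ph(g))$. Uniqueness of the selfadjoint generator of a strongly continuous unitary group then forces both the domain identity $W(f)\cD(\ph(g))=\cD(\ph(g))$ and the operator identity $W(f)\ph(g)W(f)^* = \ph(g)-2\Re\sc{f,g}$.

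The only real obstacle is keeping the sign and complex-conjugation conventions straight in the Weyl relation, since the paper's inner product convention (conjugate-linear in the first argument) interacts with the factor $-\i t$ to produce the real part of $\sc{f,g}$ rather than the imaginary part one might naively expect. Once that bookkeeping is done carefully in step two, the rest of the argument is formal and requires no further estimates.
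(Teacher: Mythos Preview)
Your argument is correct and is in fact the standard proof of this identity. The paper does not give its own proof here; it simply cites \cite[Prop.~5.2.4]{BratteliRobinson.1996}. Your derivation of the Weyl relation $W(f)W(h)=\e^{-\i\Im\sc{f,h}}W(f+h)$ from the action on exponential vectors is right, the conjugation identity $W(f)W(h)W(f)^*=\e^{-2\i\Im\sc{f,h}}W(h)$ follows as you say, and specializing to $h=-\i t g$ together with Stone's theorem gives both the operator and domain identities. The only point worth flagging is the sign convention for the generator: with the paper's definitions one has $W(f)=\e^{a^*(f)-a(f)}=\e^{-\i\ph(\i f)}$ and hence $W(-\i t g)=\e^{-\i t\ph(g)}$, so the ``generator'' is to be read in the convention $U(t)=\e^{-\i tA}$. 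With that convention your identification of the generator of $\e^{2\i t\Re\sc{f,g}}W(-\i t g)$ as $\ph(g)-2\Re\sc{f,g}$ is correct; you already anticipated this bookkeeping issue in your final paragraph.
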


\begin{lem}[{\cite[Lemma~B.3]{DamHinrichs.2021}}] \label{lem:transform dG}    Let $A$ be a selfadjoint operator on $L^2(\IR^d)$ and let $f \in \cD(A)$. Then $W(f) \cD(\dG(A)) = \cD(\dG(A)) \subset \cD(\ph(Af)) $ and   
	\begin{equation} \label{eq:wdgammarel} 
		W(f)  \dG(A)  W(f)^* = \dG(A) - \ph(  A f ) +  \sc{ f , A f }  .
	\end{equation} 
\end{lem}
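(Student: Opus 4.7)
The plan is to verify \cref{eq:wdgammarel} on a dense common core of exponential vectors and extend by closure. The natural choice is $\cE := \mathrm{span}\{\epsilon(g) : g \in \cD(A)\}$, which is dense in $\FS(L^2(\IR^d))$, contained in $\cD(\dG(A))$ (since $\dG(A)$ applied to $\epsilon(g)$ with $g \in \cD(A)$ produces a vector of finite norm), and invariant under $W(f)$ for $f \in \cD(A)$, because $W(f)\epsilon(g)$ is a scalar multiple of $\epsilon(f+g)$ with $f+g \in \cD(A)$.

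The core computation is a direct sesquilinear form evaluation on $\cE$. Using \cref{def:Weyl} together with the elementary identity $\braket{\epsilon(h'), \dG(A) \epsilon(g')} = \braket{h', A g'} \exp(\braket{h', g'})$, all prefactors cancel and one finds, for $h, g \in \cD(A)$,
\begin{equation*}
\braket{\epsilon(h), W(f) \dG(A) W(f)^* \epsilon(g)} = \braket{h - f, A(g - f)} \exp(\braket{h, g}).
\end{equation*}
Expanding the bilinear form $\braket{h - f, A(g - f)}$ into four scalar terms, three of these are readily identified with the three pieces of the right-hand side of \cref{eq:wdgammarel}, while the remaining cross term $-\braket{h, Af} - \braket{f, Ag}$ matches the expectation value of $-\ph(Af)$ on exponential vectors, which I would compute by differentiating $t \mapsto W(-\i t A f)\epsilon(g)$ at $t = 0$ in view of the paper's convention that $\ph$ generates $t \mapsto W(-\i t \cdot)$, tracking signs carefully.

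Next I would dispose of the two domain claims. The $W(f)$-invariance of $\cD(\dG(A))$ follows either by extending the identity just derived on $\cE$ (see below), or directly from the push-through relation $\exp(\i s \dG(A)) W(f) \exp(-\i s \dG(A)) = W(\exp(\i s A) f)$, a standard consequence of second quantisation functoriality. The inclusion $\cD(\dG(A)) \subset \cD(\ph(Af))$ follows from the standard relative $N^{1/2}$-bound \cref{th:phi estimate} on the field operator, combined with sector-by-sector control of $N$ by $|\dG(A)|$; in all applications of the lemma within this paper $A$ is the non-negative multiplication by $\omega$, so this reduces to the familiar Fock-space estimate.

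Finally, to promote the identity from $\cE$ to $\cD(\dG(A))$, I would approximate any $\psi \in \cD(\dG(A))$ by $\psi_n \in \cE$ with $\psi_n \to \psi$ and $\dG(A)\psi_n \to \dG(A)\psi$ (achievable via spectral truncations $\chr{[-N,N]}(A)$ applied sector by sector), then pass to the limit in each term of \cref{eq:wdgammarel} using unitarity of $W(f)$ and the $N^{1/2}$-control of $\ph(Af)$. The main technical obstacle I anticipate is the clean handling of $\cD(\dG(A)) \subset \cD(\ph(Af))$ for an arbitrary selfadjoint $A$ without a positivity or gap hypothesis, since the generic $N^{1/2}$-bound is not automatically dominated by $\dG(A)$ in that generality; this difficulty is however immaterial for the applications in the body of the paper.
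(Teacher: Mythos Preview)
The paper does not give its own proof of this lemma; it is merely quoted from \cite{DamHinrichs.2021} in the appendix of well-known facts, so there is no in-paper argument to compare against.

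That said, your outline is essentially the standard route and the core computation on exponential vectors is correct: the cancellation of the Gaussian prefactors leaving $\braket{h-f,A(g-f)}e^{\braket{h,g}}$ works exactly as you describe, and the identification of the four scalar pieces with $\dG(A)$, $-\ph(Af)$ and $\braket{f,Af}$ is clean once one uses selfadjointness of $A$ to rewrite $\braket{f,Ag}=\braket{Af,g}$.

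You are also right to flag the domain inclusion $\cD(\dG(A))\subset\cD(\ph(Af))$ as the delicate point for a bare selfadjoint $A$. The clean way around it, which you almost state, is to \emph{first} establish the invariance $W(f)\cD(\dG(A))=\cD(\dG(A))$ via the intertwining relation $e^{is\dG(A)}W(f)=W(e^{isA}f)e^{is\dG(A)}$ and Stone's theorem (here $f\in\cD(A)$ is exactly what makes $s\mapsto W(e^{isA}f)\phi$ strongly differentiable), and \emph{then} read the inclusion off from the identity itself rearranged as $\ph(Af)=\dG(A)-W(f)\dG(A)W(f)^*+\braket{f,Af}$, whose right-hand side is now manifestly defined on $\cD(\dG(A))$. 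This avoids the $N^{1/2}$-bound entirely and removes the gap you noted.
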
 

\bibliographystyle{halpha-abbrv}
\bibliography{lit}

\begin{thebibliography}{HOHM19}
\expandafter\ifx\csname url\endcsname\relax
  \def\url#1{\texttt{#1}}\fi
\expandafter\ifx\csname doi\endcsname\relax
  \def\doi#1{\burlalt{doi:#1}{http://dx.doi.org/#1}}\fi
\expandafter\ifx\csname urlprefix\endcsname\relax\def\urlprefix{URL }\fi
\expandafter\ifx\csname href\endcsname\relax
  \def\href#1#2{#2}\fi
\expandafter\ifx\csname burlalt\endcsname\relax
  \def\burlalt#1#2{\href{#2}{#1}}\fi

\bibitem[AH12]{AbdessalamHasler.2012}
A.~Abdesselam and D.~Hasler.
\newblock Analyticity of the Ground State Energy for Massless {N}elson Models.
\newblock {\em Commun. Math. Phys.}, 310(2):511--536, 2012,
  \burlalt{arXiv:1008.4628}{http://arxiv.org/abs/1008.4628}.
\newblock \doi{10.1007/s00220-011-1407-6}.

\bibitem[Ale39]{Alexandrov.1939}
A.~D. Alexandrov.
\newblock Almost everywhere existence of the second differential of a convex
  function and some properties of convex functions (in Russian).
\newblock {\em Leningrad Univ. Ann.}, 37:3--35, 1939.

\bibitem[Ara01]{Arai.2001}
A.~Arai.
\newblock Ground State of the Massless {N}elson Model Without Infrared Cutoff
  in a Non-{F}ock Representation.
\newblock {\em Rev. Math. Phys.}, 13(9):1057--1094, 2001.
\newblock \doi{10.1142/S0129055X01000934}.

\bibitem[Ara18]{Arai.2018}
A.~Arai.
\newblock {\em Analysis on {F}ock Spaces and Mathematical Theory of Quantum
  Fields}.
\newblock World Scientific, New Jersey, 2018.
\newblock \doi{10.1142/10367}.

\bibitem[BDG23]{BeaudDybalskiGraf.2021}
V.~Beaud, W.~Dybalski, and G.~M. Graf.
\newblock Infraparticle States in the Massless Nelson Model: Revisited.
\newblock {\em Ann. Henri Poincar\'e}, 2023,
  \burlalt{arXiv:2105.05723}{http://arxiv.org/abs/2105.05723}.
\newblock \doi{10.1007/s00023-022-01261-2}.
\newblock In Press.

\bibitem[BDP12]{BachmannDeckertPizzo.2012}
S.~Bachmann, D.-A. Deckert, and A.~Pizzo.
\newblock The mass shell of the {N}elson model without cut-offs.
\newblock {\em J. Funct. Anal.}, 263(5):1224--1282, 2012,
  \burlalt{arXiv:1104.3271}{http://arxiv.org/abs/1104.3271}.
\newblock \doi{10.1016/j.jfa.2012.04.021}.

\bibitem[BFS98a]{BachFroehlichSigal.1998b}
V.~Bach, J.~Fr{\"o}hlich, and I.~M. Sigal.
\newblock Quantum Electrodynamics of Confined Nonrelativistic Particles.
\newblock {\em Adv. Math.}, 137(2):299--395, 1998.
\newblock \doi{10.1006/aima.1998.1734}.

\bibitem[BFS98b]{BachFroehlichSigal.1998a}
V.~Bach, J.~Fr{\"o}hlich, and I.~M. Sigal.
\newblock Renormalization Group Analysis of Spectral Problems in Quantum Field
  Theory.
\newblock {\em Adv. Math.}, 137(2):205--298, 1998.
\newblock \doi{10.1006/aima.1998.1733}.

\bibitem[BN37]{BlochNordsiek.1937}
F.~Bloch and A.~Nordsieck.
\newblock Note on the Radiation Field of the Electron.
\newblock {\em Phys. Rev.}, 52(2):54--59, 1937.
\newblock \doi{10.1103/PhysRev.52.54}.

\bibitem[BR97]{BratteliRobinson.1996}
O.~Bratteli and D.~W. Robinson.
\newblock {\em Operator Algebras and Quantum Statistical Mechanics 2:
  Equilibrium States. Models in Quantum Statistical Mechanics}.
\newblock Texts and Monographs in Physics. Springer, Berlin, 2nd edition, 1997.
\newblock \doi{10.1007/978-3-662-09089-3}.

\bibitem[Can71]{Cannon.1971}
J.~Cannon.
\newblock Quantum field theoretic properties of a model of {N}elson: Domain and
  eigenvector stability for perturbed linear operators.
\newblock {\em J. Funct. Anal.}, 8(1):101--152, 1971.
\newblock \doi{10.1016/0022-1236(71)90023-1}.

\bibitem[CF07]{ChenFroehlich.2007}
T.~Chen and J.~Fr{\"o}hlich.
\newblock Coherent Infrared Representations in Non-Relativistic QED.
\newblock In F.~Gesztesy, P.~Deift, C.~Galvez, P.~Perry, and W.~Schlag,
  editors, {\em Spectral Theory and Mathematical Physics: A Festschrift in
  Honor of Barry Simon's 60th Birthday}, volume~76 of {\em Proc. Symp. Pure
  Math.} AMS, 2007,
  \burlalt{arXiv:math-ph/0601009}{http://arxiv.org/abs/math-ph/0601009}.

\bibitem[CFP09]{ChenFroehlichPizzo.2009}
T.~Chen, J.~Fr{\"o}hlich, and A.~Pizzo.
\newblock Infraparticle scattering states in nonrelativistic quantum
  electrodynamics. II. Mass shell properties.
\newblock {\em J. Math. Phys.}, 50(1):012103, 2009,
  \burlalt{arXiv:0709.2812}{http://arxiv.org/abs/0709.2812}.
\newblock \doi{10.1063/1.3000088}.

\bibitem[CFP10]{ChenFroehlichPizzo.2010}
T.~Chen, J.~Fr{\"o}hlich, and A.~Pizzo.
\newblock Infraparticle Scattering States in Non-Relativistic QED: I. The
  {B}loch-{N}ordsieck Paradigm.
\newblock {\em Commun. Math. Phys.}, 294(3):761--825, 2010,
  \burlalt{arXiv:0709.2493}{http://arxiv.org/abs/0709.2493}.
\newblock \doi{10.1007/s00220-009-0950-x}.

\bibitem[Dam20]{Dam.2018}
T.~N. Dam.
\newblock Absence of Ground States in the Translation Invariant Massless
  {N}elson Model.
\newblock {\em Ann. Henri Poincar\'e}, 21(8):2655--2679, 2020,
  \burlalt{arXiv:1808.00088}{http://arxiv.org/abs/1808.00088}.
\newblock \doi{10.1007/s00023-020-00928-y}.

\bibitem[DH22]{DamHinrichs.2021}
T.~N. Dam and B.~Hinrichs.
\newblock Absence of ground states in the renormalized massless
  translation-invariant {N}elson model.
\newblock {\em Rev. Math. Phys.}, 34(10):2250033, 2022,
  \burlalt{arXiv:1909.07661}{http://arxiv.org/abs/1909.07661}.
\newblock \doi{10.1142/S0129055X22500337}.

\bibitem[DM20a]{DamMoller.2018b}
T.~N. Dam and J.~S. M\o{}ller.
\newblock Asymptotics in Spin-Boson type models.
\newblock {\em Commun. Math. Phys.}, 374(3):1389--1415, 2020,
  \burlalt{arXiv:1808.00085}{http://arxiv.org/abs/1808.00085}.
\newblock \doi{10.1007/s00220-020-03685-5}.

\bibitem[DM20b]{DamMoller.2018a}
T.~N. Dam and J.~S. M\o{}ller.
\newblock Spin-Boson type models analysed using symmetries.
\newblock {\em Kyoto J. Math.}, 60(4):1261--1332, 2020,
  \burlalt{arXiv:1803.05812}{http://arxiv.org/abs/1803.05812}.
\newblock \doi{10.1215/21562261-2019-0062}.

\bibitem[DP14a]{DeckertPizzo.2014}
D.-A. Deckert and A.~Pizzo.
\newblock Ultraviolet Properties of the Spinless, One-Particle {Y}ukawa Model.
\newblock {\em Commun. Math. Phys.}, 327(3):887--920, 2014,
  \burlalt{arXiv:1208.2646}{http://arxiv.org/abs/1208.2646}.
\newblock \doi{10.1007/s00220-013-1877-9}.

\bibitem[DP14b]{DybalskiPizzo.2014}
W.~Dybalski and A.~Pizzo.
\newblock Coulomb Scattering in the Massless {N}elson model {I}. {F}oundations
  of Two-Electron Scattering.
\newblock {\em J. Stat. Phys.}, 154(1):543--587, 2014,
  \burlalt{arXiv:1302.5001}{http://arxiv.org/abs/1302.5001}.
\newblock \doi{10.1007/s10955-013-0857-y}.

\bibitem[DP18]{DybalskiPizzo.2018}
W.~Dybalski and A.~Pizzo.
\newblock Coulomb Scattering in the Massless {N}elson Model {III}: Ground State
  Wave Functions and Non-Commutative Recurrence Relations.
\newblock {\em Ann. Henri Poincar\'{e}}, 19(2):463--514, 2018,
  \burlalt{arXiv:1704.02924}{http://arxiv.org/abs/1704.02924}.
\newblock \doi{10.1007/s00023-017-0642-6}.

\bibitem[DP19]{DybalskiPizzo.2019}
W.~Dybalski and A.~Pizzo.
\newblock Coulomb scattering in the massless {N}elson model {II}. {R}egularity
  of ground states.
\newblock {\em Rev. Math. Phys.}, 31(3):1950010, 2019,
  \burlalt{arXiv:1302.5012}{http://arxiv.org/abs/1302.5012}.
\newblock \doi{10.1142/S0129055X19500107}.

\bibitem[DP22]{DybalskiPizzo.2022}
W.~Dybalski and A.~Pizzo.
\newblock Coulomb scattering in the massless {N}elson model {IV}.
  {A}tom-electron scattering.
\newblock {\em Rev. Math. Phys.}, 34(6):2250014, 2022,
  \burlalt{arXiv:1902.08799}{http://arxiv.org/abs/1902.08799}.
\newblock \doi{10.1142/S0129055X22500143}.

\bibitem[Fr{\'e}08]{Frechet.1908}
M.~Fr{\'e}chet.
\newblock Essai de g{\'e}om{\'e}trie analytique {\`a} une infinit{\'e} de
  coordonn{\'e}es.
\newblock {\em Nouv. Ann. Math.}, 4(8):97--116, 1908.

\bibitem[Fr{\"{o}}73]{Frohlich.1973}
J.~Fr{\"{o}}hlich.
\newblock On the infrared problem in a model of scalar electrons and massless
  scalar bosons.
\newblock {\em Ann. de l'Inst. Henri Poincar\'{e}}, 19(1):1--103, 1973.

\bibitem[Fr{\"{o}}74]{Frohlich.1974}
J.~Fr{\"{o}}hlich.
\newblock Existence of dressed one-electron states in a class of persistent
  models.
\newblock {\em Fortschr. Phys.}, 22(3):159--198, 1974.
\newblock \doi{10.1002/prop.19740220304}.

\bibitem[G{\'{e}}r00]{Gerard.2000}
C.~G{\'{e}}rard.
\newblock On the Existence of Ground States for Massless {P}auli-{F}ierz
  {H}amiltonians.
\newblock {\em Ann. Henri Poincar\'e}, 1(3):443--459, 2000.
\newblock \doi{10.1007/s000230050002}.

\bibitem[GHL14]{GubinelliHiroshimaLorinczi.2014}
M.~Gubinelli, F.~Hiroshima, and J.~L\H{o}rinczi.
\newblock Ultraviolet renormalization of the {N}elson {H}amiltonian through
  functional integration.
\newblock {\em J. Funct. Anal.}, 267(9):3125--3153, 2014,
  \burlalt{arXiv:1304.6662}{http://arxiv.org/abs/1304.6662}.
\newblock \doi{10.1016/j.jfa.2014.08.002}.

\bibitem[GLL01]{GriesemerLiebLoss.2001}
M.~Griesemer, E.~Lieb, and M.~Loss.
\newblock Ground states in non-relativistic quantum electrodynamics.
\newblock {\em Invent. math.}, 145(3):557–595, 2001,
  \burlalt{arXiv:math-ph/0007014}{http://arxiv.org/abs/math-ph/0007014}.
\newblock \doi{10.1007/s002220100159}.

\bibitem[Gro72]{Gross.1972}
L.~Gross.
\newblock Existence and uniqueness of physical ground states.
\newblock {\em J. Funct. Anal.}, 10(1):52--109, 1972.
\newblock \doi{10.1016/0022-1236(72)90057-2}.

\bibitem[Gro73]{Gross.1973}
L.~Gross.
\newblock The relativistic polaron without cutoffs.
\newblock {\em Commun. Math. Phys.}, 31(1):25--73, 1973.
\newblock \doi{10.1007/BF01645589}.

\bibitem[HH08]{HaslerHerbst.2008a}
D.~Hasler and I.~Herbst.
\newblock Absence of Ground States for a Class of Translation Invariant Models
  of Non-relativistic QED.
\newblock {\em Commun. Math. Phys.}, 279(3):769--787, 2008,
  \burlalt{arXiv:math-ph/0702096}{http://arxiv.org/abs/math-ph/0702096}.
\newblock \doi{10.1007/s00220-008-0444-2}.

\bibitem[HHS21]{HaslerHinrichsSiebert.2021a}
D.~Hasler, B.~Hinrichs, and O.~Siebert.
\newblock On Existence of Ground States in the Spin Boson Model.
\newblock {\em Commun. Math. Phys.}, 388(1):419--433, 2021,
  \burlalt{arXiv:2102.13373}{http://arxiv.org/abs/2102.13373}.
\newblock \doi{10.1007/s00220-021-04185-w}.

\bibitem[Hin22a]{Hinrichs.2022}
B.~Hinrichs.
\newblock {\em Existence of Ground States for Infrared-Critical Models of
  Quantum Field Theory}.
\newblock PhD thesis, Friedrich Schiller University Jena, 2022.
\newblock \doi{10.22032/dbt.51516}.

\bibitem[Hin22b]{Hinrichs.2022b}
B.~Hinrichs.
\newblock Existence of Ground States in the Infrared-Critial Spin Boson Model.
\newblock In {\em Mathematical Aspects of Quantum Fields and Related Topics},
  volume 2235 of {\em RIMS K\^oky\^uroku}, pages 60--72, 2022,
  \burlalt{arXiv:2204.00287}{http://arxiv.org/abs/2204.00287}.

\bibitem[HM22a]{HinrichsMatte.2022}
B.~Hinrichs and O.~Matte.
\newblock Feynman--Kac formula and asymptotic behavior of the minimal energy
  for the relativistic Nelson model in two spatial dimensions.
\newblock arXiv preprint, 2022,
  \burlalt{arXiv:2211.14046}{http://arxiv.org/abs/2211.14046}.

\bibitem[HM22b]{HiroshimaMatte.2019}
F.~Hiroshima and O.~Matte.
\newblock Ground states and their associated path measures in the renormalized
  {N}elson model.
\newblock {\em Rev. Math. Phys.}, 34(2):2250002, 2022,
  \burlalt{arXiv:1903.12024}{http://arxiv.org/abs/1903.12024}.
\newblock \doi{10.1142/S0129055X22500027}.

\bibitem[HOH10]{HancheOlsenHolden.2010}
H.~Hanche-Olsen and H.~Holden.
\newblock The {K}olmogorov-{R}iesz compactness theorem.
\newblock {\em Expo. Math.}, 28(4):385--394, 2010,
  \burlalt{arXiv:0906.4883}{http://arxiv.org/abs/0906.4883}.
\newblock \doi{10.1016/j.exmath.2010.03.001}.

\bibitem[HOHM19]{HancheOlsenHoldenMalinnikova.2019}
H.~Hanche-Olsen, H.~Holden, and E.~Malinnikova.
\newblock An improvement of the {K}olmogorov-{R}iesz compactness theorem.
\newblock {\em Expo. Math.}, 37(1):84--91, 2019,
  \burlalt{arXiv:1705.01349}{http://arxiv.org/abs/1705.01349}.
\newblock \doi{10.1016/j.exmath.2018.03.002}.

\bibitem[HS22]{HaslerSiebert.2022}
D.~Hasler and O.~Siebert.
\newblock Ground States for Infrared Renormalized Translation-Invariant
  Non-Relativistic {QED}.
\newblock To appear in Ann. Henri Poincar\'e, 2022,
  \burlalt{arXiv:2207.09559}{http://arxiv.org/abs/2207.09559}.

\bibitem[HS23]{HaslerSiebert.2020}
D.~Hasler and O.~Siebert.
\newblock Ground States for translationally invariant {P}auli-{F}ierz Models at
  zero Momentum.
\newblock {\em J. Funct. Anal.}, 284(1):109725, 2023,
  \burlalt{arXiv:2007.01250}{http://arxiv.org/abs/2007.01250}.
\newblock \doi{10.1016/j.jfa.2022.109725}.

\bibitem[Kat80]{Kato.1980}
T.~Kato.
\newblock {\em Perturbation Theory for Linear Operators}, volume 132 of {\em
  Classics in Mathematics}.
\newblock Springer, Berlin, 2nd edition, 1980.
\newblock \doi{10.1007/978-3-642-66282-9}.

\bibitem[KM14]{KoenenbergMatte.2014}
M.~K\"{o}nenberg and O.~Matte.
\newblock The Mass Shell in the Semi-Relativistic {P}auli-{F}ierz Model.
\newblock {\em Ann. Henri Poincar\'{e}}, 15(5):863--915, 2014,
  \burlalt{arXiv:1204.5123}{http://arxiv.org/abs/1204.5123}.
\newblock \doi{10.1007/s00023-013-0268-2}.

\bibitem[Kol31]{Kolmogorov.1931}
A.~Kolmogoroff.
\newblock {\"U}ber {Kompaktheit} der {Funktionenmengen} bei der {Konvergenz} im
  {Mittel}.
\newblock {\em Nachr. Ges. Wiss. G{\"o}tt.}, pages 60--63, 1931.

\bibitem[LMS02]{LorincziMinlosSpohn.2002}
J.~L\H{o}rinczi, R.~A. Minlos, and H.~Spohn.
\newblock The Infrared Behaviour in {N}elson's Model of a Quantum Particle
  Coupled to a Massless Scalar Field.
\newblock {\em Ann. Henri Poincar\'{e}}, 3(2):269--295, 2002,
  \burlalt{arXiv:math-ph/0011043}{http://arxiv.org/abs/math-ph/0011043}.
\newblock \doi{10.1007/s00023-002-8617-6}.

\bibitem[LMS07]{LossMiyaoSpohn.2007}
M.~Loss, T.~Miyao, and H.~Spohn.
\newblock Lowest energy states in nonrelativistic QED: Atoms and ions in
  motion.
\newblock {\em J. Funct. Anal.}, 243(2):353--393, 2007,
  \burlalt{arXiv:math-ph/0605005}{http://arxiv.org/abs/math-ph/0605005}.
\newblock \doi{10.1016/j.jfa.2006.10.012}.

\bibitem[LS19]{LampartSchmidt.2019}
J.~Lampart and J.~Schmidt.
\newblock On {N}elson-Type {H}amiltonians and Abstract Boundary Conditions.
\newblock {\em Commun. Math. Phys.}, 367(2):629--663, 2019,
  \burlalt{arXiv:1803.00872}{http://arxiv.org/abs/1803.00872}.
\newblock \doi{10.1007/s00220-019-03294-x}.

\bibitem[Mat16]{Matte.2016}
O.~Matte.
\newblock Continuity properties of the semi-group and its integral kernel in
  non-relativistic {QED}.
\newblock {\em Rev. Math. Phys.}, 28(05):1650011, 2016,
  \burlalt{arXiv:1512.04494}{http://arxiv.org/abs/1512.04494}.
\newblock \doi{10.1142/S0129055X16500112}.

\bibitem[MM18]{MatteMoller.2018}
O.~Matte and J.~S. M\o{}ller.
\newblock {F}eynman-{K}ac Formulas for the Ultra-Violet Renormalized {N}elson
  Model.
\newblock {\em Ast\'{e}risque}, 404, 2018,
  \burlalt{arXiv:1701.02600}{http://arxiv.org/abs/1701.02600}.
\newblock \doi{10.24033/ast.1054}.

\bibitem[M{\o{}}l05]{Moller.2005}
J.~S. M{\o{}}ller.
\newblock The Translation Invariant Massive {N}elson Model: {I}. {T}he Bottom
  of the Spectrum.
\newblock {\em Ann. Henri Poincar\'{e}}, 6(6):1091--1135, 2005.
\newblock \doi{10.1007/s00023-005-0234-8}.

\bibitem[Nel64]{Nelson.1964}
E.~Nelson.
\newblock Interaction of Nonrelativistic Particles with a Quantized Scalar
  Field.
\newblock {\em J. Math. Phys.}, 5(9):1190--1197, 1964.
\newblock \doi{10.1063/1.1704225}.

\bibitem[Par92]{Parthasarathy.1992}
K.~R. Parthasarathy.
\newblock {\em An Introduction to Quantum Stochastic Calculus}, volume~85 of
  {\em Monographs in Mathematics}.
\newblock Birkh\"{a}user, Basel, 1992.
\newblock \doi{10.1007/978-3-0348-0566-7}.

\bibitem[Piz03]{Pizzo.2003}
A.~Pizzo.
\newblock One-particle (improper) States in {N}elson's Massless Model.
\newblock {\em Ann. Henri Poincar\'{e}}, 4(3):439--486, 2003.
\newblock \doi{10.1007/s00023-003-0136-6}.

\bibitem[Piz05]{Pizzo.2005}
A.~Pizzo.
\newblock Scattering of an {\it {I}nfraparticle}: The One Particle Sector in
  {N}elson's Massless Model.
\newblock {\em Ann. Henri Poincar\'{e}}, 6(3):553--606, 2005.
\newblock \doi{10.1007/s00023-005-0216-x}.

\bibitem[Rie33]{Riesz.1933}
M.~Riesz.
\newblock Sur les ensembles compacts de fonctions sommables.
\newblock {\em Acta Sci. Math. (Szeged)}, 6(2-3):136--142, 1933.

\bibitem[RS75]{ReedSimon.1975}
M.~Reed and B.~Simon.
\newblock {\em {F}ourier Analysis, Self-Adjointness}, volume~2 of {\em Methods
  of Modern Mathematical Physics}.
\newblock Academic Press, San Diego, 1975.

\bibitem[Sch19]{Schmidt.2019}
J.~Schmidt.
\newblock On a direct description of pseudorelativistic {N}elson
  {H}amiltonians.
\newblock {\em J. Math. Phys.}, 60(10):102303, 2019,
  \burlalt{arXiv:1810.03313}{http://arxiv.org/abs/1810.03313}.
\newblock \doi{10.1063/1.5109640}.

\bibitem[Slo74]{Sloan.1974}
A.~D. Sloan.
\newblock The polaron without cutoffs in two space dimensions.
\newblock {\em J. Math. Phys.}, 15:190--201, 1974.
\newblock \doi{10.1063/1.1666620}.

\bibitem[Spo98]{Spohn.1998}
H.~Spohn.
\newblock Ground State of a Quantum Particle Coupled to a Scalar {B}ose Field.
\newblock {\em Lett. Math. Phys.}, 44(1):9--16, 1998.
\newblock \doi{10.1023/A:1007473300274}.

\bibitem[Spo04]{Spohn.2004}
H.~Spohn.
\newblock {\em Dynamics of Charged Particles and Their Radiation Field}.
\newblock Cambridge University Press, Cambridge, 2004.
\newblock \doi{10.1017/CBO9780511535178}.

\bibitem[Sud57]{Sudakov.1957}
V.~N. Sudakov.
\newblock Criteria of compactness in function spaces (in Russian).
\newblock {\em Uspehi Mat. Nauk}, 12(3(75)):221--224, 1957.

\end{thebibliography}

\end{document}